\renewcommand\e{\varepsilon}
\renewcommand\Z{\mathbb{I}}
\newcommand\msep{\kern2pt}
\newcommand\subs{s}
\newcommand\wh{\mathsf{wh}}
\newcommand\vc[1]{\vcenter{\hbox{\ensuremath{#1}}}}
\newcommand\coloneqq{\mathrel{{:}{:}{=}}}
\newcommand\fv{\mathsf{fv}}
\newcommand\cbv{\textsf{cbv}}
\newcommand\cbn{\textsf{cbn}}
\newcommand\cbpv{\textsf{cbpv}}
\newcommand\val{\mathsf{v}}
\newcommand\dup{\mathsf{d}}
\newcommand\aff{\mathsf{a}}
\newcommand\ev[2][30]{~\tikz[baseline] \draw[|->] (0,3pt)--node[above]{$\scriptstyle\term{#2}$}(#1pt,3pt) ; ~}
\newcommand\evs[2][30]{~\tikz[baseline] \draw[|->>] (0,3pt)--node[above]{$\scriptstyle\term{#2}$}(#1pt,3pt) ; ~}
\newcommand\SEQ[1]{\textsc{run}(\term{#1})}
\newcommand\RUN[1]{\textsc{run}(\type{#1})}
\newcommand\EVS[1]{\textsc{eval}(\term{#1})}
\newcommand\EVL[1]{\textsc{eval}(\type{#1})}
\newcommand\src{\mathsf{src}}
\newcommand\tgt{\mathsf{tgt}}
\newcommand\If{\mathsf{if}}
\newcommand\Then{\mathsf{then}}
\newcommand\Else{\mathsf{else}}
\newcommand\While{\mathsf{while}}
\newcommand\Break{\mathsf{break}}
\newcommand\Return{\mathsf{return}}
\newcommand\Throw{\mathsf{throw}}
\newcommand\Try{\mathsf{try}}
\newcommand\Catch{\mathsf{catch}}
\newcommand\Case{\mathsf{case}}
\newcommand\Of{\mathsf{of}}
\newcommand\Do{\mathsf{do}}
\newcommand\bind{\mathbin{
\begin{tikzpicture}[x=1pt,y=-1pt]
\draw[cap=round,join=round,line width=0.7pt] (0,2)--(2,0)--(0,-2) (2,2)--(4,0)--(2,-2) (4.5,1)--(7,1) (4.5,-1)--(7,-1);
\end{tikzpicture}}}
\newcommand\head{\@ifstar{\headrotate}{\headfixed}}
\newcommand\headrotate[1]{\rotatebox{90}{#1}}
\newcommand\headfixed[1]{#1\rule[-2pt]{0pt}{13pt}}
\newcommand\sr[1]{\!\!\scriptstyle{\SR{#1}}}
\newcommand\SR[1]{\mathsf{%
  \ifx#1x var\else%
  \ifx#1a push\else%
  \ifx#1l pop\else%
  \ifx#1c chc\else%
  \ifx#1; case\else%
  \ifx#1o loop\else%
  \ifx#1e exp\else%
  \ifx#1i incl\else%
  \ifx#1s stk0\else%
  \ifx#1S stk1\else%
  \ifx#1M mem\else%
  \ifx#1k cnt0\else%
  \ifx#1K cnt1\else%
  \ifx#1+ state\else%
  \fi\fi\fi\fi%
  \fi\fi\fi\fi%
  \fi\fi\fi\fi\fi\fi%
}}
\begin{document}

\begin{frontmatter}

\title{The Functional Machine Calculus III: Control}

\author{Willem Heijltjes}

\address{%
Department of Computer Science\\
University of Bath\\
United Kingdom}
						
\thanks{Email: \href{mailto:w.b.heijltjes@bath.ac.uk} {\texttt{\normalshape w.b.heijltjes@bath.ac.uk}}}

\thanks{Thanks to Ohad Kammar, Guy McCusker, Olivier Laurent, and Nicolas Wu for discussions and feedback.}

\maketitle

\begin{abstract}
The Functional Machine Calculus (Heijltjes 2022) is a new approach to unifying the imperative and functional programming paradigms. It extends the lambda-calculus, preserving the key features of confluent reduction and typed termination, to embed computational effects, evaluation strategies, and control flow operations. The first instalment modelled sequential higher-order computation with global store, input/output, probabilities, and non-determinism, and embedded both the call--by--name and call--by--value lambda-calculus, as well as Moggi's computational metalanguage and Levy's call--by--push--value. The present paper extends the calculus from sequential to branching and looping control flow. This allows the faithful embedding of a minimal but complete imperative language, including conditionals, exception handling, and iteration, as well as constants and algebraic data types.

The calculus is defined through a simple operational semantics, extending the (simplified) Krivine machine for the lambda-calculus with multiple operand stacks to model effects and a continuation stack to model sequential, branching, and looping computation. It features a confluent reduction relation and a system of simple types that guarantees termination of the machine and strong normalization of reduction (in the absence of iteration). These properties carry over to the embedded imperative language, providing a unified functional--imperative model of computation that supports simple types, a direct and intuitive operational semantics, and a confluent reduction semantics.
\end{abstract}

\begin{keyword}
lambda-calculus, computational effects, exception handling, simple types
\end{keyword}

\end{frontmatter}

\section{Introduction}

An interesting challenge in programming language theory is to find good ways of extending the $\lambda$-calculus with computational effects. The $\lambda$-calculus, established by Landin's seminal work~\cite{Landin-1964,Landin-1966} as the basis of the \emph{functional} programming paradigm, gives a canonical treatment of higher-order functions that sets the standard for good semantic properties such as compositionality, referential transparency, and type safety. However, these do not readily extend to effects. The \emph{imperative} paradigm, meanwhile, sets expectations for a treatment of effects: a clear syntax with a direct, intuitive operational meaning and the seamless integration of multiple effects. 

The overarching aim of the challenge is thus to reconcile both paradigms. This requires a unified model of computation that naturally accommodates both higher-order functions and sequential operations, while supporting types, a confluent reduction semantics, and other means of reasoning. In line with the ambition of these requirements, approaches to address this challenge include some of the most influential developments in the field, such as Moggi's account of effects as \emph{monads}~\cite{Moggi-1991}, Levy's \emph{call--by--push--value} (\cbpv)~\cite{Levy-2003}, and Plotkin and Pretnar's \emph{effect handlers}~\cite{Plotkin-Pretnar-2013}.

The Functional Machine Calculus (FMC) is a new approach to this challenge. The idea is to take the Krivine Machine~\cite{Krivine-2007} seriously as the notion of \emph{sequentiality} in the $\lambda$-calculus. This provides a direct operational semantics in a simple abstract machine with a single stack, where \emph{application} is \emph{push}, \emph{abstraction} is \emph{pop}, and \emph{variable} is \emph{execute}. The design principle is then to extend the machine in natural and minimal ways to capture a wide range of imperative features, maintaining the calculus simultaneously as a direct instruction language for the machine, and as a calculus that supports confluent reduction and simple types: a \emph{``machine calculus''}, a machine language that is also a calculus. Previously, the following two extensions were introduced~\cite{Barrett-Heijltjes-McCusker-2023,Heijltjes-2022}.

\begin{description}
	\item[Sequencing]
The $\lambda$-calculus is extended with imperative \emph{sequencing} and \emph{skip}, its unit, as \emph{composition} and \emph{identity} for stack operations. This gives a model of higher-order sequential computation that encompasses both call--by--name and call--by--value behaviour. It faithfully embeds Plotkin's call--by--value (\cbv) $\lambda$-calculus~\cite{Plotkin-1975}, Moggi's computational metalanguage~\cite{Moggi-1991}, and Levy's \cbpv~\cite{Levy-2003}. Sequencing is implemented on the machine in standard fashion with a continuation stack (elided in the original presentation but introduced in~\cite{Barrett-Castle-Heijltjes-2024,Heijltjes-Majury-2025}).

	\item[Locations]
The machine is generalized from one to many operand stacks, each named by a \emph{location} and directly accessed through push- and pop-instructions. These may then model various computational effects: \emph{mutable higher-order store} as stacks of depth one; \emph{input/output} as pop-only and push-only streams; and \emph{probabilities} and \emph{non-determinism} as probabilistically, respectively non-deterministically generated streams.
\end{description}

\noindent
This paper presents the following third extension.

\begin{description}
	\item[Control]
The machine is generalised from strictly sequential to \emph{branching} and \emph{looping} control flow. The \emph{skip} command is replaced with a set of \emph{choice} labels, each indicating a branch of the computation, while sequencing becomes conditional on a choice, composing on the given branch only. A \emph{loop} construct is introduced that repeats on a given choice and exits otherwise. Computationally, choice labels represent \emph{constants}, \emph{exceptions}, and \emph{data constructors}. Various notions of control flow then embed into the FMC: \emph{exception handling}, \emph{conditionals} and more generally \emph{algebraic data types}, and \emph{iteration} with \emph{escape}.
\end{description}

\noindent
What these control constructs have in common is that, semantically, they are modelled by \emph{sums} or \emph{coproducts}. The exception monad $TX = E + X$ is the coproduct of the value type $X$ with a type $E$ for exceptions. The data type of booleans is the sum $1+1$ and a conditional is a co-diagonal $[f,g]:1+1\to X$, which algebraic data types generalise to sums of products indexed by named constructors. Iteration is modelled by taking a morphism $f:A\to A+B$ to one $\mathsf{iter}\,f:A\to B$, which loops on $A$ and exits on $B$~\cite{Bloom-Esik-1993} (the construction is semantically dual to recursion~\cite{Simpson-Plotkin-2000}).

The purpose of this paper is to capture these constructs in the FMC with a minimal and natural extension to the machine and the calculus. The concrete contributions are the calculus itself, its small-step operational semantics in the extended stack machine, a big-step operational semantics, a confluent reduction relation, and a notion of simple types that (in the absence of loops) guarantees successful termination of the machine and strong normalization of reduction. The proofs of these properties moreover employ standard techniques, in straightforward but, in some cases, novel ways. Confluence (Section~\ref{sec:confluence}) is by \emph{parallel reduction}~\cite{Takahashi-1995}. Machine termination (Section~\ref{sec:termination}) is by a reducibility argument~\cite{Tait-1967} with a direct induction on typing derivations, without the combinatorial reasoning typical of strong normalization proofs. Strong normalization (Section~\ref{sec:normalization}) is by extending big-step evaluation to count transitions on the machine, and then demonstrating that reduction shortens machine evaluation, thus separating the logical content (again by reducibility) from the combinatorics (comparing reduction against the machine).

The FMC itself is a minimal language of six syntactic constructs, each representing a natural instruction on a simple machine with multiple operand stacks and one continuation stack. It nevertheless faithfully embeds a complete imperative language with store, exception handling, and loops, as well as the \cbv\ $\lambda$-calculus~\cite{Plotkin-1975}, the computational metalanguage~\cite{Moggi-1991}, and \cbpv~\cite{Levy-2003}. Moreover, these embeddings are by \emph{macro-expansion}, where the encoded constructs are given as definitions (or \emph{syntactic sugar}) over FMC terms. As a consequence, confluent reduction and simple types are directly conferred on imperative constructs. This suggests that, in Levy's terminology~\cite{Levy-1999}, these embeddings may be conjectured to be \emph{subsuming}: the semantics of the encoded constructs is expected to arise as that of their FMC encoding. This is immediate for the operational semantics, as presented in this paper. It was recently confirmed for \emph{intersection types}: established notions of intersection typing for store~\cite{Davies-Pfenning-2000,deLiguoro-Treglia-2021} arise naturally via the embedding of store in the intersection-typed FMC with \emph{sequencing} and \emph{locations}~\cite{Heijltjes-2025}. Other notions of semantics will be investigated in future.

More specifically, the FMC is a new solution to the problem of combining multiple effects. Both user-defined monads such as \emph{state} $TX=S\to (S\times X)$ and user-defined effect handlers support confluence and types, by encoding effects into the $\lambda$-calculus with inductive data types. But monads do not compose, and require monad transformers to combine multiple effects~\cite{Liang-Hudak-Jones-1995} (their limitations are discussed in \cite{Kiselyov-Sabry-Swords-2013}). Effect handlers create a two-layered system, separating the \emph{handlers} from the \emph{effect operations}, in particular complicating the treatment of exceptions since \emph{raising} is an effect operation while \emph{catching} is a handler. As a consequence, translating imperative constructs into monads or handlers is more involved. The main contribution of the FMC is to allow a direct embedding of a minimal but complete imperative language, seamlessly combining multiple effects within a minimal, typed functional calculus.

While monads are universal for effects, and handlers for algebraic effects~\cite{Plotkin-Power-2002}, the FMC as presented here covers a substantial but fixed set of effects only. This is being addressed in a present line of enquiry by introducing effect handlers into the FMC, which appears a very natural combination. Independent of the present \emph{control} extension, the sequential FMC has been extended with \emph{non-deterministic} branching for relational computation~\cite{Barrett-Castle-Heijltjes-2024}, and \emph{probabilistic} branching to capture probabilistic choice in the $\lambda$-calculus~\cite{Heijltjes-Majury-2025}. This generalises to an \emph{``algebraically''} branching FMC that fits neatly into the effect handlers paradigm: branching is given by abstract effect operators, which are interpreted computationally by handlers~\cite{Plotkin-Pretnar-2013}. Interesting questions then arise regarding the relation between the different notions of branching, the ``algebraic'' approach versus the present \emph{control} extension, and how each encodes effects. These will be considered in future work.

The next three sections will introduce the terms and types of the FMC informally and explore the connections with related work, while the remaining sections will present the formal development. Starting from the $\lambda$-calculus, the three extensions \emph{sequencing} (Section~\ref{sec:sequencing}), \emph{locations} (Section~\ref{sec:locations}), and \emph{control} (Section~\ref{sec:control}) will be covered in turn, while building up the interpretation of an imperative language complete with store, conditionals, exceptions, and loops. The calculus will be illustrated and motivated through an informal presentation of its operational semantics; the abstract machine itself is given in Section~\ref{sec:FMC} at the start of the formal development. The formulation of \emph{sequencing} and \emph{locations} has matured since the original presentation~\cite{Heijltjes-2022}, mainly due to the requirements of the present \emph{control} extension. A preliminary version was presented at MFPS 2024~\cite{Heijltjes-2024}.


\section{Sequencing}
\label{sec:sequencing}

The Krivine Abstract Machine (KAM)~\cite{Krivine-2007}, here in a form that uses substitution rather than environments, gives a natural perspective of $\lambda$-terms as stack operations, where application \emph{pushes} and abstraction \emph{pops}. To emphasize this view and to accommodate the extensions of the FMC, application $M\,N$ is written $\term{[N].M}$ (``push $\term N$'', c.f.\ Levy's $V`M$~\cite{Levy-2003}) and abstraction $\lambda x.M$ is written $\term{<x>.M}$ (``\emph{pop} and bind to $\term x$''). The operational semantics of a term $\term M$ acting on a stack $S$ will be depicted as follows. 
\[
	S\evs{M}
\]
Writing a stack with the head to the right, as $S\,\term M$, below left is the evaluation of $\term{[N].M}$, which pushes $\term N$ to the stack $S$ and continues by evaluating $\term M$. On the right, $\term{<x>.M}$ pops the first term $\term N$ off the stack and substitutes it for $\term x$ in the remaining computation, as $\term{\{N/x\}M}$.
\[
	\term{[N].M}:~~S \ev{[N]} S\,\term N \evs{M} \qquad\qquad
	\term{<x>.M}:~~S\,\term N \ev{<x>} S \evs{\{N/x\}M}
\]
Note that the result of the computation is left implicit. This is because the KAM returns the first term without a transition, which is either a (free) variable or an abstraction with an empty stack, while the FMC takes a different approach: FMC terms return the stack. This is effected by the \emph{sequencing} extension, which introduces imperative \emph{skip} $(\term{*})$ and \emph{sequencing} $(\term{M;N})$ as the \emph{identity} and \emph{composition} of terms as functions on stacks (or rather \emph{partial} functions, because of non-termination).
\[
	\term{*}:~~S \evs{*} S \qquad\qquad \term{M;N}:~~ R \evs{M} S \evs{N} T
\]
While natural from the perspective of the machine, this notion of sequentiality represents a significant conceptual departure from the traditional $\lambda$-calculus. \emph{Skip} indicates \emph{successful termination}, as it does in imperative models, and returns the stack, enabling the view of terms as partial functions on stacks. Sequential composition ($\term{M;N}$) is then the natural notion of composition, and likewise has its standard imperative meaning, first evaluate $\term M$ then evaluate $\term N$. Both represent a departure from traditional notions of termination and composition in $\lambda$-calculus, but are commensurate with sequentiality in \cbpv\ and the computational metalanguage, as will be explored below.

Introducing both constructs into the $\lambda$-calculus gives the \emph{sequential $\lambda$-calculus}, a fragment of the FMC.
\[
	\term{M,N} ~\coloneqq~ \term x ~\mid~ \term{<x>.M} ~\mid~ \term{[N].M} ~\mid~ \term{*} ~\mid~ \term{M;N}
\]
The extension gives terms for natural stack operations: e.g.\ $\term{<x>.[x].[x].*}$ duplicates the head of the stack, $\term{<x>.*}$ deletes it, $\term{<x>.<y>.[x].[y].*}$ exchanges the top two elements, and $\term{<x>.x}$ (the traditional identity term) pops and executes the first stack element. Note that since stacks are last--in--first--out, terms of the form $\term{<x_1>..<x_n>.[x_n]..[x_1].*}$, with the order of variables reversed between popping and pushing, return an input stack (of depth at least $n$) unchanged. By way of example, below is the evaluation of $\term{<f>.<g>.(f;g)}$ which pops and executes the first two stack elements. 
\[
	\term{<f>.<g>.(f;g)}:
	~~
	R\,\term{N}\,\term{M}  \ev{<f>} 
	R\,\term{N}            \ev{<g>}
	R                      \evs{M}
	S                      \evs{N}
	T
\]
The present formulation of the calculus differs from the original~\cite{Heijltjes-2022,Barrett-Heijltjes-McCusker-2023}. This had the variable construct as a prefix, $\term{x.M}$, and composition $\term{M;N}$ as a defined operation, promoting a view of terms as sequences of pushes, pops, and variables. Taking \emph{skip} and \emph{sequence} as primitives is now preferred, for its clear and direct combination of $\lambda$-calculus with sequential computation, and as a prerequisite for the \emph{control} extension of this paper. Both formulations are essentially interchangeable: the previous sequential variable $\term{x.M}$ is a restriction of sequential composition as $\term{x;M}$, and the previous sequencing operation is implemented here through the reduction relation (see Section~\ref{sec:FMC}).

Higher-order stack calculi similar to the sequential $\lambda$-calculus have appeared before. Notable are Hasegawa's $\kappa$-calculus~\cite{Hasegawa-1995}, revisited by Power and Thielecke~\cite{Power-Thielecke-1999}, a study of compiler languages by Douence and Fradet~\cite{Douence-Fradet-1998}, and the \emph{concatenative} programming paradigm~\cite{Herzberg-Reichert-2009}, meaning higher-order stack languages such as $\lambda$-Forth, Joy, and Factor~\cite{Pestov-Ehrenberg-Groff-2010}. Typing, explored next, is familiar from several of the above calculi and languages, and from other stack languages like WebAssembly~\cite{Rossberg-etal-2018}.


\subsection{Stack typing}

Types for the sequential $\lambda$-calculus follow the view of terms as functions on stacks. A type is an implication $\type{!s=>!t}$ between type vectors $\type{!s}$ and $\type{!t}$, which represent the types of the input and output stacks, as below. The empty type vector is $\type\e$. The antecedent vector $\type{!s}$ of a type $\type{!s=>!t}$ is implicitly reversed, so that identity types $\type{!s=>!s}$ follow the shape of identity terms, $\type{s_1..s_n=>s_n..s_1}$.
\[
	\type{s,t} ~\coloneqq~ \type{!s=>!t} \qquad \type{!s,!t}~\coloneqq~\type{t_1..t_n}
\]
The intuitive meaning of a type assignment $\term{M:!s=>!t}$ is then as follows: for any stack $S$ typed by $\type{!s}$ the term $\term M$ will evaluate successfully and return a stack $T$ typed by $\type{!t}$. This is formalized as a reducibility predicate in Section~\ref{sec:termination} to prove termination of the machine.
\[
	\term{M:!s=>!t} \quad\Longrightarrow\quad \forall S:\type{!s}.~\exists T:\type{!t}.~S \evs{M} T
\]
Typical of stack typing is further the notion of \emph{stack expansion}: a term $\term{M:?s=>!t}$ may also be typed $\term{M:?s\,?r=>!r\,!t}$, since if $\term M$ evaluates with a stack $S$, it also evaluates with a larger stack $R\,S$ (using juxtaposition for concatenation), leaving $R$ untouched.
\[
	S \evs{M} T  \quad\Longrightarrow\quad  R\,S \evs{M} R\,T
\qquad\qquad
	\term{M:?s=>!t} \quad\Longrightarrow\quad \term{M:?s\,?r=>!r\,!t}
\]

Semantically, stacks are products of terms, and the calculus forms a strict Cartesian closed category~\cite{Barrett-Heijltjes-McCusker-2023}, where morphisms are closed terms, identity is $\term{*:?t=>!t}$, and composition is $\term{M;N:?r=>!t}$ for $\term{M:?r=>!s}$ and $\term{N:?s=>!t}$. Other example morphisms are a projection $\term{<x>.<y>.[x].* : st=>s}$, a diagonal $\term{<x>.[x].[x].* : t=>tt}$, and a terminal map $\term{<x>.* : t=>\e}$. Interestingly, in this interpretation the traditional identity term $\lambda x.x$ becomes \emph{eta}/\emph{eval} $\term{<x>.x: (!s=>!t)\,!s=>!t}$. 

Typing is conservative over simple types for the $\lambda$-calculus. These embed as \emph{input-only} types $\type{!t=>\e}$, with empty output vector, in accordance with the observation that \emph{skip} is needed to yield a return stack. The base type $o$ is interpreted as $\type{\e=>\e}$, and the arrow type $\sigma\smallbin\to\tau$ adds a further input type $\sigma$ to the antecedent vector of $\tau$, as given below. The overall picture is given by $\typ{t_1}\smallbin\to\dots\smallbin\to\typ{t_n}\smallbin\to o=\type{t_1..t_n=>\e}$.
\[
	o~=~\type{\e=>\e} \qquad \type{s}\smallbin\to(\type{!t=>\e})~=~\type{s\,!t=>\e}
\]
Two further observations will be made that set the sequential $\lambda$-calculus apart from the $\lambda$-calculus. Firstly, all types are inhabited (see~\cite[Remark 3.7]{Heijltjes-2022} and Proposition~\ref{prop:inhabitation}). In particular, the embedded base type $o=\type{\e=>\e}$ is inhabited by $\term*$. Secondly, the calculus has a natural first-order restriction, an internal language for Cartesian categories, as follows (see also~\cite{Hasegawa-1995}). This is expanded to a model of \emph{relational} computation in~\cite{Barrett-Castle-Heijltjes-2024}.
\[
	\term{M,N} ~\coloneqq~ \term{<x>.M} ~\mid~ \term{[x].M} ~\mid~ \term* ~\mid~ \term{M;N}
\]


\subsection{Expressing evaluation strategies}

Along the same lines as previous higher-order stack calculi~\cite{Douence-Fradet-1998,Power-Thielecke-1999}, the sequential $\lambda$-calculus embeds Plotkin's \cbv\ $\lambda$-calculus~\cite{Plotkin-1975} and Moggi's computational metalanguage~\cite{Moggi-1991}. The key idea in both interpretations is that \emph{values}, respectively \emph{return values}, are returned on the stack. The former then embeds as below left, using subscript $-_\val$ to distinguish it from the call--by--name calculus, with $@_\val$ for application. This is a typed embedding, with types for values embedded as $\typ{s}\smallbin{\to_\val}\typ{t}=\type{s=>t}$ and for computations as $\typ{t}_\val=\type{\e=>t}$. The latter embeds as below right, likewise passing a single return value on the stack, and again this is typed, with $\typ{s_1}\smallbin\to\dots\smallbin\to\typ{s_n}\smallbin\to T(\typ t)$ for a monad $T$ embedding as $\type{s_1..s_n=>t}$.
\[
\begin{aligned}
	         x_\val &~=~\term{[x].*}			&&& \mathsf{return}\,M                &~=~ \term{[M].*}
\\ \lambda_\val x.M &~=~\term{[<x>.M].*}        &&& \mathsf{let}\,x=M\,\mathsf{in}\,N &~=~ \term{M;<x>.N}
\\     @_\val\,M\,N &~=~\term{N ; M ; <x>.x}
\end{aligned}
\] 
The embedding of $@_\val$ evaluates as illustrated below, where $N$ returns the value $V$ and $M$ returns $\lambda y.P$. First, $\term N$ and $\term M$ are evaluated, pushing $\term V$ and $\term{<y>.P}$ to the stack; then $\term{<y>.P}$ is popped, and executed with $\term V$ as its first argument on the stack, popping $\term V$ as $\term y$ and finally evaluating $\term{\{V/y\}P}$ to some value $\term W$. Note that this evaluates the argument before the function; the other way around is by the translation taking $@_\val\,M\,N$ to $\term{M ; N ; <y>.<x>.[y].x}$.
\[
	\term{N;M;<x>.x}:~~S \evs{N} S\,\term{V} \evs{M} S\,\term{V\,(<y>.P)} \ev{<x>} S\,\term{V} \evs{<y>} S \evs{\{V/y\}P} S\,\term{W}
\]

The sequential $\lambda$-calculus implicitly features the value/computation distinction of Levy's \cbpv~\cite{Levy-2003}, with the operand stack holding values. Making these explicit yields the variant of the calculus below, into which \cbpv\ embeds in the same way as the computational metalanguage. It distinguishes \emph{value} terms $\term V$ and \emph{computation} terms $\term M$, mediated by \emph{thunk} and \emph{force} constructs $\term{`!M}$ and $\term{`?V}$. The first-order calculus then arises by removing thunks $\term{`!M}$ as values, leaving only variables, and forced values $\term{`?V}$ as computations.
\[
	\term{V,W}~\coloneqq~\term{x}~\mid~\term{`!M} \qquad\qquad \term{M,N}~\coloneqq~\term{`?V}~\mid~\term{<x>.M}~\mid~\term{[V].M}~\mid~\term*~\mid~\term{M;N}
\]

In the monadic setting and \cbpv, sequencing is modelled by the \emph{return}- and \emph{let}-constructs, which pass a single return value. The translation interprets these as \emph{skip} and \emph{sequence}, with the return value passed on the stack. All three calculi thus express essentially the same notion of \emph{sequentiality}. This is made clearer still with the monadic \emph{bind} notation of Haskell, which is interpreted directly as $M\bind N=\term{M;N}$. The sequential $\lambda$-calculus may thus be viewed as a generalization of \cbpv\ where sequencing passes the entire argument stack instead of a single value. This interpretation, that $\term *$ passes all return values rather than no values, is expressed in the reduction semantics by the following rules.
\[
\begin{array}{c@{\qquad\qquad}c}
	\term{([M].N);P}\rw\term{[M].(N;P)} 
  & \term{(<x>.M);N}\rw\term{<x>.(M;N)} \quad (x\notin\fv(\term N))
\\[5pt]
	R           \ev{[M]} R\,\term M \evs{N}        S \evs{P} T
  & R\,\term{P} \ev{<x>} R          \evs{\{P/x\}M} S \evs{N} T
\end{array}
\]
These implement the idea that sequencing and prefixing in push- and pop-actions express the same notion of sequentiality, in accordance with the operational semantics. Together with standard $\beta$-reduction $\term{[M].<x>.N}\rw\term{\{M/x\}N}$ and the rule $\term{*;M}\rw\term M$, the interpretation of a \emph{let}-redex then reduces as follows. This demonstrates how values pushed before $\term *$ are passed on to the next computation.
\[
	\mathsf{let}\,x=\mathsf{return}\,M\,\mathsf{in}\,N \quad = \quad \term{([M].*) ; <x>.N}~\rw~\term{[M].(* ; <x>.N)}~\rw~\term{[M].<x>.N}~\rw~\term{\{M/x\}N}
\]

Like the computational metalanguage and \cbpv, the sequential $\lambda$-calculus may be may be extended with effect operators as well as constants, primitive operations, exceptions, etc.\ to give a model of higher-order computation with effects. The FMC however takes a different approach. Instead of introducing new primitives for effects, the machine and the existing constructs are generalised in subtle and minimal ways, so that effects and control flow may be modelled while preserving confluent reduction and typeability. The two generalisations, \emph{locations} and \emph{control}, are explored next.


\section{Locations}
\label{sec:locations}

To capture effects, the machine is generalised from one operand stack to multiple, named stacks, indexed in a global set of \emph{locations} $A=\{\lambda,a,b,c,\dots\}$, with $\lambda$ indicating the original or \emph{default} stack. Pop- and push-actions are parameterised in a location to operate on the designated stack, as $\term{a<x>.M}$ and $\term{[N]a.M}$. The default location $\lambda$ will be omitted from the notation to retain the previous $\term{<x>.M}$ and $\term{[N].M}$. This constitutes the FMC as previously published~\cite{Heijltjes-2022} (modulo the choice of primitives for sequencing).
\[
	\term{M,N} ~\coloneqq~ \term x ~\mid~ \term{a<x>.M} ~\mid~ \term{[N]a.M} ~\mid~ \term{*} ~\mid~ \term{M;N}
\]
Terms now operate on a family of stacks $S_A=\{S_a\mid a\in A\}$ called a \emph{memory}, where only a finite number of stacks are non-empty. Since a given term uses only a fixed, finite number of locations, in each particular case $A$ itself may be considered finite. For easier manipulation, a memory will be written as a sequence of terms indexed by locations, $a(\term{M})$, where terms on different locations may permute:
\[
	S_A ~\coloneqq~ a_1(\term{M_1}) \dots a_n(\term{M_n}) \qquad\text{where}~~a(\term M)\,b(\term N) = b(\term N)\,a(\term M)~~\text{if}~~a\neq b
\]
Pushing $\term M$ to the stack $S_a$ in the memory $S_A$ is then written $S_A\,a(\term M)$. The operational semantics of the indexed push- and pop-actions is as follows.
\[
	\term{[N]a.M}:~~S_A             \ev{[N]a} S_A\,a(\term N) \evs{M}        T_A \qquad\qquad
	\term{a<x>.M}:~~S_A\,a(\term N) \ev{a<x>} S_A             \evs{\{N/x\}M} T_A
\]
The reduction semantics expresses the independence of stacks on different locations by the following two rules. The first is the regular $\beta$-step from $\lambda$-calculus, when a push meets a pop on the same location. The second, \emph{passage} rule gives the case when the locations are distinct, resolved by permuting both operations past each other, in search of a counterpart that does match. (Note that the positioning of the location label indicates on which side the operation interacts.)
\[
	\term{[N]a.a<x>.M}~\rw~\term{\{N/x\}M} \qquad\qquad \term{[N]b.a<x>.M}~\rw~\term{a<x>.[N]b.M}\quad (a\neq b,\,x\notin\fv(\term N))
\]
Instead of the above two rules, the original presentation of the FMC~\cite{Heijltjes-2022} featured a single rule that allowed matching push- and pop-actions to interact \emph{at a distance}, ignoring other actions in between. The present formulation is now preferred: the rules are simpler, they give better normal forms where pop-actions precede push-actions, and they give a better equational theory, as follows. With $\eta$-equivalence $\term{M}\sim\term{a<x>.[x]a.M}$ where $x\notin\fv(\term M)$, reduction equivalence gives the following equations,  completing the intuitive idea that push- and pop-actions on distinct stacks are interchangeable.
\[
 	\term{a<x>.b<y>.M}\sim\term{b<y>.a<x>.M} \qquad\qquad \term{[P]a.[N]b.M}\sim\term{[N]b.[P]a.M}
\]
This equational theory then captures the algebraic theory for global store of Plotkin and Power~\cite{Plotkin-Power-2002}, via the encoding of store given below (see~\cite{Heijltjes-2022}).

Typing generalises along the same lines as memories, from one to many type vectors for input and output, indexed in the set of locations $A$. A \emph{memory type}, written $\type{!!t}$, is a family of type vectors in $A$, and like a memory, may be written as a sequence modulo permutation on distinct locations, as below. A type is then an implication between memory types, $\type{!!s=>!!t}$. The empty memory type is written $\type\e$, and as with terms, the main location $\lambda$ may be omitted, so that the notation is conservative over that for the sequential $\lambda$-calculus.
\[
	\type{s,t}~\coloneqq~\type{!!s=>!!t}
\qquad\quad
	\type{!!t}~\coloneqq~\type{a_1(t_1)..a_n(t_n)} 
\quad\text{where}~~
	\type{a(s)\,b(t)} = \type{b(t)\,a(s)}
~~\text{if}~~
	a\neq b
\]


\subsection{Modelling effects}

Stacks model higher-order store, input/output, and probablistic choice as follows.
\[
\begin{array}{@{}r@{}r@{}l@{\qquad}r@{}r@{}l@{}}
	c \smallbin{:=} M ~=~& \term{M ; <x>.c<\_>.[x]c.*} &\,:\, \type{c(t) => c(t)}
&   \mathsf{print}\,M ~=~& \term{M ; <x>.[x]\stdout.*} &\,:\, \type{\e => \stdout(t)} 
\\  !c                ~=~& \term{c<x>.[x]c.[x].*}      &\,:\, \type{c(t) => c(t)\,t} 
&       \mathsf{read} ~=~& \term{\stdin<x>.[x].*}      &\,:\, \type{\stdin(t) => t}  
\\&&& \mathsf{sample} ~=~& \term{\rnd<x>.[x].*}        &\,:\, \type{\rnd(t) => t}
\end{array}
\]
For higher-order store, a mutable variable $c$ is represented by a location, with \emph{update} $c \smallbin{:=} M$ and \emph{lookup} $!c$ as below left. Updating evaluates $\term M$, leaving the result $\term N$ on the main stack; this is popped as $\term x$; then $c$ is cleared by popping and discarding the value $\term P$, where the underscore $(\term{\_})$ represents a variable that may not occur; and finally $\term N$ (substituted for $\term x$) is pushed to $c$. 
\[
	S_A\,c(\term{P}) \evs{M} S_A\,c(\term P)\,\lambda(\term N) \ev{<x>} S_A\,c(\term P) \ev{c<\_>} S_A \ev{[N]c} S_A\,c(\term N)
\]
Lookup $!c$ pops the value from $c$ as $\term x$; restores it by pushing; and returns it on the main stack. Note that these encodings are familiar from Concurrent Haskell's MVars~\cite{PeytonJones-Gordon-Finne-1996}. The stack for $c$ is assumed to hold at most one term, but this need not be enforced explicitly since the encoding of the operations preserves it. 

Input/output is modelled by two  dedicated locations $\stdin$ and $\stdout$, with \emph{read} and \emph{print} operations as above. The $\stdin$ stack should allow only to \emph{pop} but not \emph{push}, and $\stdout$ only to push, which is again maintained by the encoding. Probabilistic choice is similar to input, with a location $\rnd$ representing a random number generator. The embeddings give a call--by--value semantics, returning a value to the stack in the case of lookup, read, and sample, and in the case of update and print first evaluating the argument $\term{M:\e=>t}$, which leaves a single value of type $\type t$ on the main stack. 

Untyped, the FMC provides an operational semantics to effect operators, and is able to express their call--by--name as well as their call--by--value interpretation. The simple types imposed by the translation are an innovation of the FMC, and agree with known \emph{intersection types} for store~\cite{Davies-Pfenning-2000,deLiguoro-Treglia-2021,Heijltjes-2025}. Typing for store is further similar to the state monad $T(\typ{t})=\typ{s}\smallbin\to(\typ{s}\smallbin\times\typ{t})$, but parameterized in a location $c$, with the type for lookup as $\type{c(s)=>c(s)\,t}$ and that for update as $\type{c(s)=>c(s)}$, corresponding to $T(1)=\typ{s}\smallbin\to(\typ{s}\smallbin\times1)$. Notable however is that extending a computation with a store $c$ in the FMC is not given by a transformation from $\type t$ to $\type{c(s)=>c(s)\,t}$, but from $\type{!!r=>!!t}$ to $\type{c(s)\,!!r=>!!t\,c(s)}$. It is thus not a state monad in the traditional sense, which is one reason why the FMC avoids the compositionality problem of the monadic approach. A final observation here is that for higher-order store, typing mutable variables themselves, as $c:\typ{s}$, does not guarantee termination, since it allows Landin's Knot~\cite{Landin-1964}, the fixed-point combinator $c\,\smallbin{:=} F(!c)\,;\,!c$, to be typed. The above type systems (intersection types, the state monad, and FMC typing) avoid this by typing both the input and output effect of store operations, which does enforce termination properties.

The locations of the FMC are akin to the \emph{channels} of process calculi~\cite{Milner-Parrow-Walker-1992}, as demonstrated by the encoding of input/output, and indeed channels are used in concurrency theory to encode store~\cite{Hirschkoff-Prebet-Sangiorgi-2020,Rocha-Caires-2021}. Future work will consider locations as channels for a message-passing concurrent FMC. Locations are also similar to the $\mu$-variables of $\lambda\mu$-calculus~\cite{Parigot-1992}, which likewise are names for stacks~\cite{Streicher-Reus-1998}. However, $\lambda\mu$-calculus abstracts over these, reifing stacks as values, a sufficiently different mechanism to the FMC that neither calculus readily encodes the other.


\section{Control}
\label{sec:control}

The contribution of the present paper is to extend the FMC to branching and looping computation, enabling the embedding of control flow operations including conditionals, exception handling, and loops. \emph{Skip} $(\term*)$, signifying successful termination, is generalized to a set of \emph{choices} $\{\trm*,i,j,k,\dots\}$ which name the potential branches of a computation, similar to \emph{exceptions} or \emph{exit codes}. Sequential composition $(\term{M;N})$ is generalized accordingly, to a \emph{case} $\term{M;i->N}$ which composes only on the chosen branch $i$: it first evaluates $\term M$, and if this exits with $i$ it continues with $\term N$, discarding it otherwise. Again for syntactic conservativity, sequencing is defined as $\term{M;N}=\term{M;*->N}$. Finally, a \emph{loop} construct $\term{M^i}$ is introduced, which repeats $\term M$ for as long as it exits with $i$, and terminates otherwise. The full calculus is then as follows.
\[
	\term{M,N}
~\coloneqq~ \term x
~\mid~		\term{[N]a.M}
~\mid~		\term{a<x>.M}
~\mid~		\term i
~\mid~		\term{M;i->N}
~\mid~		\term{M^i}
\]
In the operational semantics computations return a memory together with a choice. \emph{Choice}, \emph{case}, and \emph{loop} terms evaluate as follows, where $i\neq j$. 
\[
	\term i:~~S_A \evs[24]{i} S_A,\term i  
\qquad
	\term{M;i->N}:
\left\{\begin{array}{l}
	R_A \evs[24]{M} S_A,\term i \evs[24]{N} T_A,\term k
\\	R_A \evs[24]{M} S_A,\term j
\end{array}\right.
\qquad
	\term{M^i}:
\left\{\begin{array}{l}
	R_A \evs[24]{M} S_A,\term i \evs[24]{M^i} T_A,\term k
\\	R_A \evs[24]{M} S_A,\term j
\end{array}\right.
\]


\subsection{Expressing control flow}

Constructions for control flow are embedded as follows. Below left, the \emph{choice} and \emph{case} terms themselves are the standard \emph{throw} and \emph{try/catch} of exception handling, with an exception $e$ as a choice label. Below right, the boolean constants are encoded as choice terms pushed onto the stack. A conditional evaluates the condition, pops the (expected) boolean from the stack, and uses it as a choice to select the matching branch. Note that \emph{case} associates left, $\term{M;i->N;j->P}=\term{(M;i->N);j->P}$.
\[
\begin{aligned}
                \Throw\,e &~=~ \term{e}       &\qquad&& \top\,,\,\bot &~=~ \term{[\top].*}~,~\term{[\bot].*}
\\	\Try\,M\,\Catch\,e\,N &~=~ \term{M;e->N}  &\qquad&& \If\,B\,\Then\,M\,\Else\,N &~=~ \term{B\,;\,<x>.x\,;\,\top -> M\,;\,\bot -> N}
\end{aligned}
\]
The encoding of the booleans gives the general pattern for modelling constants and primitive operations. Primitive data such as bounded integers are modelled by a finite set of choices $\{c_1,\dots,c_n\}$, each representing a constant. A constant as an expression is encoded by pushing the choice to the stack, and functions on contants such as addition and multiplication are constructed from case switches, interpreted as below.
\[
	c_i ~=~ \term{[c_i].*} 
	\qquad 
	\Case\,M\,\Of\,\{ c_1\mapsto N_1,\dots,c_n\mapsto N_n\} ~=~ \term{M ; <x>.x ; c_1 -> N_1 ; \dots ; c_n -> N_n}
\]
These interpretations impose a call--by--value semantics on the encoded constructs, which follows the principle that \emph{exceptions} are choices as \emph{computations}, while \emph{constants} are choices as \emph{values}. The operational behaviour of the encoding is correct as long as constants and exceptions are modelled by distinct choice labels; then in the above interpretation of the case switch, an exeption in some $\term{N_k}$ cannot match a later constant $\term{c_m}$. 

The \emph{loop} construct $\term{M^i}$ encodes iteration, as expressed by its operational semantics and its rewrite rule, $\term{M^i}\rw\term{M;i->M^i}$. The use of the \emph{case} construct to model iteration means that if $\term{M}$ exits with any other choice $j\neq i$ this terminates the loop. Exceptions as choices in $\term M$ then behave as expected. Escape clauses such as \emph{break} and \emph{return} may likewise be modelled by choices, to be caught outside the loop. Where $\term{M^i}$ is naturally a \emph{``do~$\term M$~while~$\term i$''} loop, the escape mechanism allows easy \emph{while-do} loops as well. In the encodings below right, the \emph{do--while} loop repeats for the $\top$-choice of the boolean $\term B$, while the \emph{while--do} loop repeats if $\term M$ terminates correctly with $\term*$. Both catch the $\bot$-choice of $\term B$ and a $\Break$-choice as correct loop escapes (but omit to catch a \emph{return} clause).
\[
\begin{array}{@{}r@{}l@{\qquad}r@{}r@{}}
	\Break     &~=~ \term{\Break}      & \Do~M~\While~B ~=~& \term{(M\,;\,B\,;\,<x>.x)^\top\,;\,\bot->*\,;\,\Break->*}
\\	\Return~M  &~=~ \term{[M].\Return} & \While~B~\Do~M ~=~& \term{(B\,;\,<x>.x\,;\,\top -> M)^\star\,;\,\bot->*\,;\,\Break->*}
\end{array}
\]


\subsection{Control types}

The static semantics of the new constructs is that of a choice-indexed sum: computation may follow a finite number of possible branches, each labelled with a choice and returning a different memory. Typing is adjusted by changing output types to sums of memory types, as follows. The notation mirrors the syntax of terms ending in a choice, and sum types are considered moduly symmetry.
\[
	\type{s,t} ~\coloneqq~ \type{!!s => !!tI} \qquad\quad \type{!!tI} ~\coloneqq~ \type{!!t_{i_1}.i_1 + .. + !!t_{i_n}.i_n} \quad\text{where}\quad I=\{i_1,\dots,i_n\} 
\]
The meaning of a typing judgement $\term{M:!!s => !!tI}$ is that given an input memory $S_A:\type{!!s}$ evaluation returns a memory $T_A:\type{!!t_i}$ for some $i\in I$ --- \emph{if} it terminates, which is guaranteed in the absence of the loop construct, but not in its presence. Types thus guarantee \emph{progress} in the general case (Proposition~\ref{prop:progress}) and \emph{termination} in the loop-free case (Theorem~\ref{thm:termination}). The interpretation of types supports the following two properties. Below left is \emph{sum expansion}: a term with return type $\type{!!s_I}$ may also be typed with any larger return type $\type{!!sI+!!tJ}$, where the sum notation implicitly assumes $I\cap J=\varnothing$. Below right is \emph{stack expansion}: as before, a term that evaluates with a memory $\type{!!r}$ may also use a larger memory $\type{!!s\,!!r}$, leaving the additional memory $\type{!!s}$ unchanged in each branch of the computation. Implementing this, the notation $\type{!!s\,!!tI}$ prefixes the memory type $\type{!!s}$ to each $\type{!!t_i}$ in $\type{!!tI}$, taking the summand $\type{!!t_i.i}$ to $\type{!!s\,!!t_i.i}$. Since stacks are products this represents the familiar distribution law $A\times (B+C)=(A\times B)+(A\times C)$.
\[
\term{M:!!r=>!!sI} ~\Longrightarrow~ \term{M:!!r=>!!sI+!!tJ}
\qquad\qquad
\term{M:!!r=>!!tI} ~\Longrightarrow~ \term{M:!!r\,!!s => !!s\,!!tI}
\]
The constructors are typed as follows. Choice terms are injections, typed $\term{i: !!s => !!s.i + !!tJ}$, a type which may be obtained from $\term{i: \e=>\e.i}$ by stack and sum expansion. A case $\term{M;i->N}$ is typed as follows:
\[
 	\term{M: !!r => !!tJ + !!s.i} 
	\quad\text{and}\quad
	\left\{\begin{array}{ll}
	     \term{N: !!s => !!tJ}         & \text{or} 
	  \\ \term{N: !!s => !!tJ + !!u.i} 
	\end{array}\right\}
	\quad\Longrightarrow\quad
	\left\{\begin{array}{ll} 
	    \term{M;i->N : !!r => !!tJ}      & \text{or}
	 \\ \term{M;i->N : !!r => !!tJ + !!u.i}
	\end{array}\right.
\]
That is, $\term M$ must exit on $i$ with an appropriate memory $\type{!!s}$ to serve as input for $\term N$; for any other choice $j\in J$ the terms $\term M$ and $\term N$ must exit with the same type $\type{!!t_j}$, both matching a potential later case $\term{j->P}$; and $\term N$ may or may not exit with a choice $i$ and type $\type{!!u}$. The merger of both types $\type{!!tJ}$ represents the co-diagonal of the sum, $A+A\to A$. 

Composition is made flexible by stack and sum expansion, as follows. First, evaluating $\term M$ may provide fewer or more arguments on the stack than used by $\term N$, with the difference made up with stack expansion for $\term M$ or $\term N$ respectively. Second, the return choices for $\term M$ and $\term N$ need not coincide, or even overlap: as long as their types agree for the choices they have in common, they can be sum-expanded to match.

A loop $\term{M^i}$, which repeats on $i$, expects a type $\term{M:!!s=>!!s.i + !!tJ}$ where the input type coincides with the output type for $i$. The loop itself is then typed $\term{M^i:!!s => !!tJ}$, omitting the choice $i$, as it exits on the choices in $J$. This gives the expected typing pattern of iteration, which is to take $f:A\to A+B$ to $\mathsf{iter}\,f:A\to B$~\cite{Bloom-Esik-1993}. An interesting consequence is what happens with iteration on a term with only a single branch. This results in a sum type where $J$ is empty, the empty sum or \emph{void} type $\type{0}$. This is not ruled out in principle, and may happen for example with the loop $\term{i^i:\e => 0}$ for $\term{i:\e=>\e.i}$. Types $\type{!!s=>0}$ returning void are thus inhabited by terms looping on their only output branch, which are guaranteed non-terminating.

The type system imposes typing on the embedded constructions, including exceptions, constants, case switches, and while-loops. The type for booleans may be defined as $\type\B=\type{\e => \e.\bot + \e.\top}$, which correctly gives $\term{\bot:\B}$ and $\term{\top:\B}$. The type for the conditional is built up as follows, for a simple case where $\term B$ returns only a boolean and $\term M$ and $\term N$ return a single value of type $\type t$. The term $\term B$ returns a boolean on the main stack $\lambda$ for the default choice $\star$. This is picked up and executed by $\term{<x>.x}$, giving $\term{B;<x>.x}$ the type $\type\B$. Then the case $\term{\top -> M}$ composes on the $\top$-branch, returning $\type t$ on $\star$ instead, and $\term{\bot->N}$ composes on the $\bot$-branch, also returning $\type t$ for $\star$ and merging both branches.
\[
	\term{B : \e => \B.*} \qquad
	\term{<x>.x : \B => \e.\bot + \e.\top} \qquad
    \term{M : \e => t.*} \qquad
    \term{N : \e => t.*}
\]
\[
	\If\,B\,\Then\,M\,\Else\,N~=~\term{B ; <x>.x ; \top -> M ; \bot -> N : \e => t.*}
\]
Types for constants such as bounded integers use this general pattern, as $\type{\e => \e.i_1 + .. + \e.i_n}$ for constants $\{i_1,\dots,i_n\}$. Continuing the above example, if $B$ may throw an exception $e$, with the type $\type{\e => \B.* + \e.e}$, or one or both of $M$ or $N$ with type $\type{\e => t.* + \e.e}$, this is seamlessly integrated into the types to give the conditional the type $\type{\e => t.* + \e.e}$ as well. The following example will demonstrate how store integrates with these constructs, and how a \emph{while-do} loop is typed.

\begin{example}
Below is the interpretation and typing of the following imperative program.
\[
	\While~!a<5~\Do~a\,\smallbin{:=}\,!a+1
\]
It is broken up as $\While~B~\Do~N$ where $B={}!a<5$, $N=a\smallbin{:=}M$ and $M={}!a+1$. Let $\type\Z$ be a type for bounded integers and assume terms for addition $\term{`+: \Z\,\Z => \Z.*}$ and inequality $\term{`<: \Z\,\Z => \B.*}$ modelled by case switches. The interpretation of an expression $P+Q$ is its translation into reverse Polish notation, $\term{P;Q;`+}$, standard for a stack machine implementation. Then $M$ is interpreted as below, where $\term{[1].*:\e=>\Z.*}$ is stack-expanded with the memory type $\type{a(\Z)\,\Z}$ to allow the composition, and $\term{`+: \Z\,\Z => \Z.*}$ with $\type{a(\Z)}$. Note that this implicit expansion is the equivalent of lifting into the state monad in the monadic approach.
\[
\term{a<x>.[x]a.[x].* : a(\Z)=>a(\Z)\,\Z.*} \qquad
          \term{[1].* : \Z\,a(\Z)=>a(\Z)\,\Z\,\Z.*} \qquad
             \term{`+ : \Z\,\Z\,a(\Z)=>a(\Z)\,\Z.*}
\]
\[
 	M~=~!a + 1~=~\term{a<x>.[x]a.[x].* ; [1].* ; `+ : a(\Z)=>a(\Z)\,\Z.*}
\]
Next, the term $\term N$ composes this with inputs on $a$ and $\lambda$, returning on $a$, and $B$ is similar to $M$.
\[
\begin{array}{r@{}r@{}r@{}l}
   N ~=~ & a\,\smallbin{:=}\,!a + 1 ~=~ & \term{M ; <y>.a<\_>.[y]a.*}                      &~:~ \type{a(\Z)=>a(\Z).*}
\\ B ~=~ &	                 !a < 5 ~=~ & \term{a<x>.[x]a.[x].* ; [5].* ; `<}            &~:~ \type{a(\Z)=>a(\Z)\,\B.*}
\end{array}
\]
The loop $\While~B~\Do~N$ translates to $\term{(B;<x>.x;\top->N)^\star;\bot->*}$, which is built up below. First, to compose with $\term B$, the term $\term{<x>.x:\B => \e.\bot+\e.\top}$ is lifted with $\type{a(\Z)}$. The second line composes $\term{B}$ with $\term{\top -> N}$, and the third line gives the loop on the default choice $\star$, which is well typed since the output type $\type{a(\Z)}$ matches the input. The final step adds the catching of $\bot$, giving the final type of the program as an update on $a$.
\[
\begin{array}{r@{}l@{}l}
                                    & \phantom{\term{(B;}}\term{<x>.x}                   &~:~ \type{\B\,a(\Z)=>a(\Z).\bot + a(\Z).\top}
\\                                  & \phantom( \term{B; <x>.x ; \top->N}                &~:~ \type{a(\Z)=>a(\Z).\bot + a(\Z).*}
\\                                           & \term{(B; <x>.x ; \top->N)^\star}         &~:~ \type{a(\Z)=>a(\Z).\bot}
\\ \While~!a<5~\Do~a\,\smallbin{:=}\,!a+1 ~=~& \term{(B; <x>.x ; \top->N)^\star;\bot->*} &~:~ \type{a(\Z)=>a(\Z).*}
\end{array}
\]
\end{example}

The above demonstrates how the type system of the FMC carries over to the embedded imperative programming language, complete with higher-order store, input/output, probabilistic choice, constants, conditionals, exceptions, and iteration with escape. The properties of the type system, \emph{progress} (Proposition~\ref{prop:progress}) and \emph{termination} in the absence of loops (Theorem~\ref{thm:termination}), apply. Concretely, since types make all potential branches of a computation explicit, \emph{progress} guarantees that exceptions are caught and loop escapes handled correctly. The embedding is naturally compatible with that of the call--by--value $\lambda$-calculus, which likewise returns values to the main stack, giving the embedding of an ML-like language.

As a final consideration it will be shown how algebraic (non-inductive) datatypes with a call--by--name semantics embed. With the \cbn~$\lambda$-calculus and recursive definitions, the latter omitted here, this forms the core of the Haskell programming language. Data constructors embed as choice terms, but not returned on the stack as in the call--by--value interpretation. Following the standard \cbn\ interpretation, a fully applied constructor $i$ thus leaves its data on the stack and exits with choice $i$, as below. A case switch on a datatype consists of a series of \emph{pattern matching} cases $i\,x_1\,\dots\,x_m\to N$ which bind the arguments $M_k$ of the datatype to the variables $x_k$ for use in $N$; the notation below uses vectors of variables $\trm{!x}$ for the parameters of each case. In the FMC interpretation the arguments are passed on the stack, so that pattern-matching may be given by abstractions, where the notation uses $\term{<!x>}$ for $\term{<x_1>..<x_n>}$. The computation $\term{N_i}$ for the matching case is then pushed to the stack, since evaluating it might incorrectly trigger a later case, and is popped and evaluated by $\term{<x>.x}$ after the case switch is completed.
\[
\begin{aligned}
                                                       i\,M_1\,\dots\,M_m  &~=~  \term{[M_m]\dots[M_1].i}
\\ \Case\,M\,\Of\,\{ i_1\,\rvec x_1 \to N_1,\dots,i_n\,\rvec x_n\to N_n\}  &~=~  \term{M\,;\,i_1 -> <!x_1>.[N_1].*\,;\,\dots\,;\,i_n -> <!x_n>.[N_n].*\,;\,<x>.x}
\end{aligned}
\]
Typing is as follows. A datatype definition as below left, using Haskell's \texttt{data} keyword, creates a new type $\sigma$ as a sum--of--products indexed by constructors. The FMC interprets the type $\sigma$ directly as a sum type.
\[
	\texttt{data}~\typ s~=~i_1\,\typ{!t_1}\mid\dots\mid i_n\,\typ{!t_n} \qquad\Longrightarrow\qquad \type{s}~=~\type{\e => !t_1.i_1 + .. + !t_n.i_n}
\]
The embedding of datatypes is then a typed embedding, extending that of the call--by--name $\lambda$-calculus. A typed constructor is interpreted as below, where $i$ is some $i_k$ of the above datatype with $\typ{!t_k}=\typ{t_1..t_m}$, so that it has type $\type s$ when fully applied. 
\[
	i: \typ{t_1 \to .. \to t_m \to s} ~=~  \term{i: t_1 .. t_m => !t_1.i_1 + .. + !t_n.i_n}
\qquad
	\term{[M_m]\dots[M_1].i : s}
\]
The case switch on $M:\typ s$ requires all terms $N_i$ to share the same type $\typ r$, and will then return $\typ r$ itself. The overall type $\type r$ for the case switch is composed in the following way.
\[
\term{M:s} \qquad \term{!x_i:!t_i} \qquad \term{N_i:r} \qquad \term{<!x_i>.[N_i].* : !t_i => r.*}
\]
\[
\term{M\,;\,i_1 -> <!x_1>.[N_1].*\,;\,\dots\,;\,i_n -> <!x_n>.[N_n].*\,;\,<x>.x : r}
\]


\subsection{Discussion and related work}

The present discussion will compare with related type systems and semantics for branching computation, in particular with typed exceptions. These have been studied in various forms, with subtle syntactic and semantic differences that the discussion will attempt to clarify.

The FMC takes the \emph{exceptions--as--values} approach~\cite{Appel-MacQueen-Miler-Tofte-1988,PeytonJones-Reid-Henderson-Hoare-Marlow-1999,Spivey-1990,Wadler-1985}, which follows the idea that computations return either a value or an exception. This casts exceptions as \emph{coproducts}, characterized by the exception monad $TX=E+X$ for a set of exceptions $E$. Going back to Gentzen's natural deduction~\cite{Gentzen-1935}, coproducts in $\lambda$-calculus have been handled with a case distinction on each summand, as below. 
\[
	\infer{\Case\,M\,\Of\,\{\mathsf{inl}\,x\to N,\mathsf{inr}\,y\to P\}: C}{
	    M: A+B 
	 && \begin{array}[b]{@{}c@{}} [x:A] \\[-3pt] \vdots \\[-3pt] N:C \end{array}
	 && \begin{array}[b]{@{}c@{}} [y:B] \\[-3pt] \vdots \\[-3pt] P:C \end{array}
	}
\]
Benton and Kennedy's \emph{exception handlers}~\cite{Benton-Kennedy-2001} adapt case switches to be \emph{fall-through}, capturing the return value and some (but not necessarily all) exceptions, with uncaught exceptions being passed through. This appears crucial to a compositional treatment of exceptions, by macro expansion. Otherwise, the encoding of a catch-construct $\Try\,M\,\Catch\,e\,N$ depends on the potential exceptions $e_1\dots e_n$ that $M$ might throw:
\[
	\Case\,M\,\Of\,\{\Return~x \to \Return~x,\,e_1\to e_1,\,\dots,\,e_n\to e_n,\,e\to N\}
\] 

Exception handlers feature a clear notion of \emph{sequentiality}, and are implemented through a \emph{continuation stack} similar to the one that will be used for the FMC in Definition~\ref{def:machine}. They are introduced into \cbpv\ in~\cite{Levy-2006-Exceptions}, and later generalized by Plotkin and Pretnar to \emph{effect handlers}~\cite{Plotkin-Pretnar-2013}, a general model of branching computation that will be discussed below.

The FMC features two key adaptations to the above. First, pattern-matching is avoided by passing values on the stack. Second, by unifying \emph{sequencing} and \emph{exceptions}, case switches may become not only \emph{fall-through} but also \emph{single-case}, since multiple cases can now be handled sequentially. Both changes are made possible, at least conceptually, by the model of the FMC as a sequential stack-based calculus.

A different approach to exceptions uses classical \emph{continuations}~\cite{Bierman-1998,Crolard-1999,DeGroote-1995,Kameyama-Sato-2002,Lebresne-2009,Ong-Stewart-1997,VanBakel-2019}, which give a typed $\lambda$-calculus with continuation operators such as call/cc~\cite{Griffin-1990} via the correspondence with classical logic. Continuation operators have a natural interpretation also on the Krivine machine~\cite{Carraro-Ehrhard-Salibra-2012,Streicher-Reus-1998}, where they \emph{reify} the stack, allowing to pass an entire stack as an argument. This is a different mechanism to how the FMC extends the KAM, not easily simulated in either direction. From an operational perspective, exceptions and continuations are incomparable constructs: neither can macro-express the other, not even in the presence of types or state~\cite{Laird-2002,Riecke-Thielecke-1999}. 

A separate but related notion are \emph{jumps}, which like the present \emph{control} extension, model branching and looping computation. Low-level languages, where jumps are a basic construct, may be typed~\cite{Saabas-Uustalu-2007}. In a more general setting, Fiore and Staton model jumps through \emph{explicit substitutions}~\cite{Fiore-Staton-2014}, and Maurer, Downen, Ariola, and Peyton Jones introduce jumps as explicit substitutions to optimize case switches in the Haskell compiler~\cite{Maurer-Downen-Ariola-PeytonJones-2017}. Syntactically, an explicit substitution $M[N/x]$ is very similar to a \emph{case} $\term{M;i->N}$, with the only difference that the former uses a \emph{variable} and the latter a \emph{choice} (a constant). Maurer et al.\ demonstrate their appeal as a \emph{single-case} construct, and further allow loops. This was a major inspiration for the present approach, and earlier drafts of this work adopted their terminology of \emph{jump} and \emph{join}~\cite{Heijltjes-2024}, though this was changed to \emph{choice} and \emph{case} for the following considerations.

There are key distinctions between explicit substitutions, jumps, and exceptions. First, the continuation $N$ in $M[N/x]$ may be duplicated, where both forward program jumps and exception handling are \emph{affine}: the continuation is used or discarded, but not duplicated. Fiore and Staton model jumps by making explicit substitutions affine through syntactic restrictions, as do Maurer et al., who restrict the target variables of an explicit substitution to \emph{head} position. Second, explicit substitutions and jumps are \emph{static}: the connection between the jump (the variable) and the continuation is determined by the term structure, before evaluation. Exceptions however are \emph{dynamic}: the choice whether a continuation is used or discarded is made only during evaluation. This is the key distinction between variables and constants, and is reflected also in the types for both constructs, where jumps and explicit substitutions are generally typed with type arrows as intuitionistic continuations, but exceptions as coproducts.

Plotkin and Pretnar's \emph{effect handlers}~\cite{Kammar-Lindley-Oury-2013,Marsik-Amblard-deGroote-2021,Plotkin-Pretnar-2013,Wu-Schrijvers-2015} are a general model of branching computation. Built over two layers, \cbpv\ as a model of sequential computation is extended with \emph{effect operator symbols}, that create symbolic branching points in the computation tree. A layer of \emph{effect handlers} then interprets these branching points operationally to model effects. This treatment of branching is sufficiently different from the present \emph{control} extension that it is not directly obvious which concept in one model maps onto which concept in the other. 

Following the origin of effect handlers in exception handlers, one view would be of operators as a generalisation of \emph{raising} an exception, analogous to an FMC \emph{choice}, and handlers as \emph{catching} it, analogous to \emph{case}. Handlers and cases have in common that they are matched \emph{dynamically}, but there are three key distinctions. First, handlers are generally defined globally, separate from computation terms, where the FMC has cases within the term language. Second, the standard \emph{deep} handlers apply to multiple operators in sequence, and thus are not \emph{affine}, though \emph{shallow} handlers are~\cite{Hillerstrom-Lindley-2018}. Third, like exception handlers, effect handlers require a return case as well as catching effects, and are thus \emph{fall-through} but not \emph{single-case}.

Taking a different perspective, like \cbpv\ the sequential $\lambda$-calculus is a model of higher-order sequential computation into which effect operators and handlers may be introduced. This is a present line of investigation, but preliminary observations suggest that this forms a natural combination: the presence of sequential composition as a primitive, enabled by passing arguments on the stack, means the use of continuations may be avoided. This not only simplifies the model but moreover allows the characteristic algebraic laws to be enforced by the syntax, rather than imposed externally. A further line of enquiry is how effect handlers relate to \emph{locations}, noting that both model specific effects, in particular \emph{store}. These observations serve to underline that both models, the FMC and handlers, complement each other in interesting ways. 

To summarise, the overall technical contributions are as follows. First, the new \emph{choice} and \emph{case} constructs of the FMC, in the termilogy used here, combine the \emph{single-case} nature of explicit substitutions with the \emph{affine}, \emph{dynamic} coproduct semantics of the exceptions--as--values paradigm. Second, inherited from the first iteration of the FMC, is the passing of values through (multiple, independent) stacks, which reduces pattern-matching to case-matching. Third, at a higher level, these two contributions allow the FMC to \emph{unify} (rather than \emph{combine}) the concepts of sequential composition, exception handling, and case switches, and seamlessly integrate them with the effects encoded through \emph{locations}. Fourth, again derived from the previous, is the type system, which gives a natural coproduct semantics to these constructions (to be confirmed formally in future work), where approaches through explicit substitution inherit continuation-style typing. Finally, the treatment of \emph{loops} with escape, enabled by the design of the choice and case constructs, appears novel.

The remaining sections will present the formal development of the calculus: its operational semantics, reduction relation, confluence, types, and machine termination and normalization of the typed calculus. Its denotational semantics will be the subject of future work.

The FMC as presented here covers a complete \emph{theoretical} functional--imperative language. In practice, further features are expected. \emph{Inductive data types} and \emph{recursion} require (standard) syntactic extensions but otherwise seem unproblematic. As the encoding of non-inductive data types above suggests, to allow recursive type aliasing is sufficient to make these inductive. An open question here is whether such a treatment may solve the problem of \emph{extensible data types}~\cite{Swierstra-2008}. \emph{Local store} would require a construct for introducing new locations, which appears similarly free of complications. \emph{First-class} locations, storing and passing on locations as arguments, would cover things like \emph{pointers} and \emph{file handles}, and appears more challenging to administer in the type system. Even more so \emph{arrays} and \emph{pointer arithmetic}, which call for dependent types (note that adding these untyped is trivial; the challenge again is preserving types and confluence).


\section{The Functional Machine Calculus}
\label{sec:FMC}

This section will present the new Functional Machine Calculus and its operational aspects: the abstract machine, a big-step evaluation relation, and the reduction relation. It will demonstrate that these agree: big-step evaluation defines complete runs on the machine (Proposition~\ref{prop:operational}), weak head reduction simulates the machine (Proposition~\ref{prop:weak-head}), and evaluation commutes with reduction (Proposition~\ref{prop:evaluation-reduction}). Confluence is proved in the next section.

Let $x,y,z$ range over \emph{variables}; $a,b,c$ over a global set of \emph{locations} $A$ with distinguished element $\lambda$; and $i,j,k$ over \emph{choices} with distinguished choice $\trm*$, with $I,J$ denoting finite sets of choices.

\begin{definition}
\emph{Terms} are given by the following grammar.
\[
	\term{M,N}
~\coloneqq~ \term x
~\mid~		\term{[N]a.M}
~\mid~		\term{a<x>.M}
~\mid~		\term i
~\mid~		\term{N;i->M}
~\mid~		\term{M^i}
\]
The constructs are: a \emph{variable} $\term x$, an \emph{application} or \emph{push} on location $a$, an \emph{abstraction} or \emph{pop} on $a$ that binds $\term x$ in $\term M$, a \emph{choice} $\term i$, a \emph{case} $\term{N;i->M}$, and a \emph{loop} $\term{M^i}$. The notions of \emph{variable binding}, \emph{free variables} $\fv(-)$, and \emph{capture-avoiding substitution} $\term{\{N/x\}M}$ of $\term N$ for $\term x$ in $\term M$ are standard.
\end{definition}

Define \emph{argument stacks} $S,T$ as stacks of terms, \emph{memories} $S_A$ as families of stacks in the set of locations $A$, and \emph{continuation stacks} $K,L$ as stacks of \emph{conditional continuations} $\term{i->M}$, as follows.
\[
				S,T	 ~\coloneqq~ \e ~\mid~ S\,\term{M}
\qquad\qquad	S_A  ~\coloneqq~ \{S_a\mid a\in A\}
\qquad\qquad	K,L  ~\coloneqq~ (\term{i->M})\,K~\mid~\e
\]
Stacks are composed by juxtaposition, $ST$, lifted to memories pointwise: $S_AT_A=\{S_aT_a\mid a\in A\}$. Write $a(\term{M})$ for the singleton memory with $\term M$ on the stack $a$ and the empty stack on other locations. Memories may be assumed to have \emph{finite support} (only finitely many stacks are non-empty), since a term uses only a fixed, finite set of locations. \emph{Streams} may be used informally instead of stacks to model certain effects.

\begin{definition}
\label{def:machine}
The \emph{abstract machine} is as follows. \emph{States} are triples $(S_A,\term M,K)$ of a memory $S_A$, term $\term M$, and continuation stack $K$. \emph{Transitions} are given by the top-to-bottom rules below, where $i\neq j$.
\[
\begin{array}{@{(\,}l@{\,,\,}r@{\,,\,}r@{\,)}}
         {S_A}                 & \term  {[N]a.M} & {K} 
\\\hline {S_A\msep a(\term N)} & \term       {M} & {K}
\\[10pt] {S_A\msep a(\term N)} & \term  {a<x>.M} & {K}
\\\hline {S_A}                 & \term{\{N/x\}M} & {K}
\end{array}
\qquad\qquad
\begin{array}{@{(\,}l@{\,,\,}r@{\,,\,}r@{\,)}}
         {S_A} & \term{i} & {(\term{i->M})\,K}
\\\hline {S_A} & \term{M} &                {K}
\\[10pt] {S_A} & \term{i} & {(\term{j->M})\,K}
\\\hline {S_A} & \term{i} &                {K}
\end{array}
\qquad\qquad
\begin{array}{@{(\,}l@{\,,\,}r@{\,,\,}r@{\,)}}
         {S_A} & \term{N;i->M} &                  {K}
\\\hline {S_A} & \term{N}      &   {(\term{i->M})\,K}
\\[10pt] {S_A} & \term{M^i}    &                  {K}
\\\hline {S_A} & \term{M}      & {(\term{i->M^i})\,K}
\end{array}
\]
A \emph{run} of the machine is a sequence of steps written with a double line as below. A \emph{final} state is of the form $(S_A,\term i,\e)$ and a \emph{failure} state of the form $(S_A,\term x,K)$, or $(S_A,\term{a<x>.M},K)$ where $S_a=\e$. A run is \emph{successful} if it terminates in a final state.
\[
	\steps {S_A}MK {T_A}NL
\]
\end{definition}

Observe that every state is either final, a failure state, or has a transition. The machine gives the small-step operational semantics of the FMC. The following big-step evaluation relation $(\evalarrow)$ describes the overall behaviour of successful runs of the machine, and formalizes the intuitive version of the introduction: 
\[
	S_A \evs{M} T_A,\term i \quad\iff\quad \evl{S_A}M{T_A}i
\]
Like the machine, it is deterministic, i.e.\ it is a partial function. 

\begin{definition}
The \emph{evaluation} relation $\evl{S_A}M{T_A}i$ is defined inductively as follows, where $i\neq j$.
\[
\bigstep{}{\eval {S_A} i {S_A} i}
\quad
\begin{array}{@{}r@{\quad~~}r@{\quad~~}r@{}}
	\bigstep  
	  { \eval {S_A\msep a(\term N)}     M  {T_A} i }
	  { \eval {S_A}               {[N]a.M} {T_A} i }
&
	\bigstep
	  { \eval {R_A}  M       {S_A} i \quad
	    \eval {S_A}  N       {T_A} j }
	  { \eval {R_A} {M;i->N} {T_A} j }
&
	\bigstep
	  { \eval {R_A}  M       {T_A} i }
	  { \eval {R_A} {M;j->N} {T_A} i }
\\ \\[-10pt]
	\bigstep
	  { \eval {S_A}               {\{N/x\}M}  {T_A} i }
	  { \eval {S_A\msep a(\term N)} {a<x>.M}  {T_A} i }
&
	\bigstep
	  { \eval {R_A}  M    {S_A} i \quad
	    \eval {S_A} {M^i} {T_A} j }
	  { \eval {R_A} {M^i} {T_A} j }
&
	\bigstep
	  { \eval {R_A}  M    {T_A} i }
	  { \eval {R_A} {M^j} {T_A} i }
\end{array}
\]
\end{definition}

\begin{proposition}
\label{prop:operational}
Small-step and big-step semantics agree:
\[
	\steps {S_A}M\e {T_A}i\e \quad\iff\quad \evl{S_A}M{T_A}i
\]
\end{proposition}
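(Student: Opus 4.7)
The plan is to strengthen the claim with an arbitrary continuation stack and prove both directions by induction. Specifically, I would show that for any continuation stack $K$, a run $\steps{S_A}M{K}{T_A}i{K}$ in which every intermediate continuation stack has $K$ as a suffix---call such a run \emph{$K$-balanced}---corresponds exactly to a big-step derivation $\evl{S_A}M{T_A}i$. The proposition is then the case $K = \e$, since the empty stack is a suffix of any continuation, so every run ending in a final state $(T_A, \term i, \e)$ is automatically $\e$-balanced.

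For $(\Leftarrow)$, I would induct on the derivation of $\evl{S_A}M{T_A}i$. The axiom corresponds to the empty run. The push and pop rules each prepend a single machine transition and then apply the hypothesis at the same $K$. For the case-match rule, the first transition $(R_A, \term{M;i->N}, K) \to (R_A, \term M, (\term{i->N})K)$ grows the continuation by one; the hypothesis applied at the enlarged stack $(\term{i->N})K$ yields a $(\term{i->N})K$-balanced (hence $K$-balanced) subrun, followed by the choice-match step to $(S_A, \term N, K)$, followed by the hypothesis on $\term N$ at $K$. The case-skip, loop-match, and loop-skip rules are handled analogously.

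For $(\Rightarrow)$, I would induct on the length of the $K$-balanced run, case-splitting on the first transition. Push, pop, and the length-zero case $\term M = \term i$ are routine. A first transition on a choice $\term M = \term j$ would pop the top of $K$ and hence violate balance, so this cannot occur. The interesting cases $\term M = \term{N;j->N'}$ and $\term M = \term{P^k}$ grow the continuation by one. Because every transition changes continuation height by at most one, and height decreases only at choice transitions, I can locate the first index $m$ at which the continuation returns to $K$; just before that step the continuation must equal $(\term{j->N'})K$, respectively $(\term{k->P^k})K$, and the current term is some choice $\term\ell$. The prefix is a strictly shorter balanced run at the enlarged continuation, and the suffix is a strictly shorter $K$-balanced run, so the induction hypothesis applies to both. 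If $\term\ell$ matches, combining the two evaluations via the matching rule gives the result; if not, the suffix is a $K$-balanced run starting with a choice, which forces it to have length zero and to equal $(T_A, \term i, K)$, exactly matching the shape of the skip rule.

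The main obstacle is the bookkeeping in this ``first return'' argument: I need to verify, using the observation that the bottom portion of the continuation stack cannot be popped without first passing through height $|K|$, that the continuation at step $m-1$ is forced to $(\term{j->N'})K$ (resp.\ $(\term{k->P^k})K$), and that the balance conditions on both subruns are inherited from balance of the full run. Once those invariants are isolated, the case analysis on matching versus skipping choices mirrors the corresponding pair of big-step rules directly.
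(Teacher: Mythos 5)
Your proposal is correct and follows the same route as the paper, which proves $(\Rightarrow)$ by induction on the machine run and $(\Leftarrow)$ by induction on the evaluation derivation, leaving the details implicit. The strengthening to $K$-balanced runs and the ``first return to height $|K|$'' decomposition are exactly the bookkeeping needed to make those two inductions go through, so you have simply made explicit what the paper's one-line proof omits.
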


\begin{proof}
$\Longrightarrow$ By induction on the run of the machine. $\Longleftarrow$ By induction on $\evalarrow$.
\end{proof}

\begin{definition}
The \emph{reduction} relation $\rw$ is given by closing the following rules under any context, where $a\neq b$, $i\neq j$, in the \emph{passage} rule $x\notin\textsf{fv}(\term N)$, and in the \emph{prefix (pop)} rule $x\notin\textsf{fv}(\term M)$.
\[		
\begin{array}{@{}l@{\quad~~}r@{}l@{\qquad~~}r@{}l@{\quad~~}r@{}}
		\textnormal{beta}    & \term{[N]a.a<x>.M} &~\rw~\term{\{N/x\}M}    & \term{(a<x>.N);i->M} &~\rw~\term{a<x>.(N;i->M)} & \textnormal{prefix (pop)}
	\\	\textnormal{passage} & \term{[N]b.a<x>.M} &~\rw~\term{a<x>.[N]b.M} & \term{([P]a.N);i->M} &~\rw~\term{[P]a.(N;i->M)} & \textnormal{prefix (push)}
	\\	\textnormal{select}  & \term{i;i->M}      &~\rw~\term{M}           & \term{P;i->N;i->M}   &~\rw~\term{P;i->(N;i->M)} & \textnormal{associate}
	\\	\textnormal{reject}  & \term{i;j->M}      &~\rw~\term{i}		   & \term{M^i}			  &~\rw~\term{M;i-> M^i}     & \textnormal{unroll}    
\end{array}		
\]
\emph{Weak head reduction} $\rw_\wh$ is given by closing the reduction rules under application contexts only: if $\term M\rw_\wh\term N$ then $\term{[P]a.M}\rw_\wh\term{[P]a.N}$. The reflexive--transitive closure of a reduction relation $\rw$ is written $\rws$, and reduction to normal form $\rwn$.
\end{definition}

Weak head reduction operates under a sequence of applications corresponding to a memory on the machine. To relate the machine and reduction, define the \emph{readback} relation $(\mapsto)$ from states to terms by the exhaustive application of the following steps, reversing the \emph{push} and \emph{sequence} rules of the machine:
\[
			     (\e,\term M,\e) \mapsto \term M
\qquad (S_A\msep a(\term N),\term M,K) \mapsto (S_A,\term{[N]a.M},K)
\qquad (S_A,\term M,(\term{i->N})\,K) \mapsto (S_A,\term{M;i->N},K)
\]
The following then shows that weak head reduction simulates the machine.

\begin{proposition}
\label{prop:weak-head}
If a state $(S_A,\term M,K)$ reads back to $\term{M'}$ and evaluates to a state $(T_A,\term N,L)$, then the latter state reads back to a term $\term{N'}$ such that $\term{M'}\rws_\wh\term{N'}$.
\[
\begin{tikzpicture}[x=56pt,y=-30pt]
	\node (S) at (0,0) {$(S_A,\term M,K)$};
	\node (T) at (0,1) {$(T_A,\term N,L)$};
	\node (M) at (1,0) {$\term{M'}$};
	\node (N) at (1,1) {$\term{N'}$};
	\draw[|->]        (S)--(M);
	\draw[|->,dash pattern = on 2pt off 2pt] (T)--(N);
	\draw[->,rws,dash pattern = on 2pt off 2pt]     (M)--node[right]{$\scriptstyle\wh$} (N);
	\draw[double distance=1.5pt](-25pt,0.5)--(25pt,0.5);
\end{tikzpicture}
\]
\end{proposition}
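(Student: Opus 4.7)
The plan is to proceed by induction on the number of machine transitions in the run. The base case of zero steps is immediate: source and target coincide, so any readback works and the $\rws_\wh$-reduction is reflexive. For a run of length $n{+}1$, factor off the first transition $(S_A, \term M, K) \to (S'_A, \term{M'}, K')$ and apply the inductive hypothesis to the remaining $n$-step run; it then suffices to establish the claim for a single transition and compose.

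For the single-step claim I would case analyze the six machine transitions. The \emph{push} and \emph{sequence} transitions merely shift material between the memory, the continuation stack, and the term without changing the readback: choosing the readback of the target to pull the newly added push or conditional continuation first recovers the source readback verbatim, so zero weak head steps are needed. The \emph{loop} transition introduces an \emph{unroll} reduction $\term{M^i} \rw \term{M;i->M^i}$ at the root, after which the argument reduces to the \emph{sequence} case. The \emph{select} and \emph{reject} transitions correspond directly to their eponymous root-level reduction rules: in a suitable readback of the source, the choice $\term i$ is immediately followed by the case pulled from the top of $K$, and the rule fires there in a single weak head step. The \emph{pop} transition is the most delicate: the beta redex $\term{[N]a.a<x>.M}$ will in general sit inside a chain of trailing cases inherited from $K$, so one first applies the \emph{prefix (push)} and \emph{prefix (pop)} rules to commute the push and abstraction outward past these cases --- each such commutation being a weak head step --- before a single beta step yields a readback of the target.

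The main obstacle I anticipate is the \emph{pop} case, where a careful sequencing of prefix reductions is required to bring the beta redex into weak head position without disturbing the rest of the readback. A secondary subtlety throughout is the non-uniqueness of readback: different interleavings of pulling pushes from $S_A$ and cases from $K$ yield syntactically different terms, related to each other by the prefix rules. The reduction path in each single-step case must therefore be constructed by choosing matching readback representatives on the source and target sides of the diagram.
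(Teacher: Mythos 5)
Your overall strategy---induction on the length of the run, reduced to a case analysis of the single transitions---is exactly the paper's own (its proof is the one-line ``by induction on the machine run''), and the \emph{push} and \emph{sequence} cases, where the target state can be read back through the source state so that zero reduction steps are needed, are handled correctly.

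The gap is in the remaining cases, and it lies in the reliance on ``a suitable readback'' putting the redex into weak head position. Whenever the continuation stack retains further frames beyond the one being consumed, every readback of the source has the shape $C[(\cdots(R;i_1{\to}P_1)\cdots);i_m{\to}P_m]$ where $R$ contains the redex: the frames of $K$ are pulled as successive left-nested cases, and the readback relation gives freedom only in how the pushes from $S_A$ interleave with them, not in removing or reordering that nesting. The \emph{select}, \emph{reject}, \emph{unroll}, \emph{prefix}, and \emph{beta} steps you need therefore fire in the head of a case, and weak head reduction as the paper defines it is closed under application contexts only---so these are not weak head steps, and no choice of readback on either side avoids this. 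Concretely, for the state $(\e,\, i,\, (i{\to}M)(j{\to}P))$ with $i\neq j$ the unique readback is $(i;i{\to}M);j{\to}P$, which has no weak head redex at all, yet the machine steps to $(\e,\,M,\,(j{\to}P))$ with readback $M;j{\to}P$. The statement (and hence both your proof and the paper's) needs weak head reduction to be closed additionally under the head of a case, i.e.\ $M\rw_\wh N$ implying $(M;i{\to}P)\rw_\wh(N;i{\to}P)$; with that extra closure your case analysis, including the prefix-then-beta treatment of \emph{pop}, does go through.
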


\begin{proof}
By induction on the machine run.
\end{proof}

In the other direction, weak head reduction preserves and reflects the evaluation behaviour of terms.

\begin{proposition}
If $\term{M}\rw_\wh\term{N}$ then $\evl{S_A}M{T_A}i$ if and only if $\evl{S_A}M{T_A}i$.
\[
\begin{tikzpicture}[x = 30pt , y = -30pt]
	\node (M) at (0,0) {$S_A,\term M$};
	\node (N) at (2,0) {$S_A,\term N$};
	\node (T) at (1,1) {$T_A,\term i$};
	\node at (1,0)     {$\vphantom M\rw_\wh$};
	\node at (0.3,0.5) {\rotatebox[origin=c]{45}{$\evalarrow$}};
	\node at (1.7,0.5) {\rotatebox[origin=c]{-45}{$\evalarrow$}};
\end{tikzpicture}
\]
\end{proposition}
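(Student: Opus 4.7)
The plan is to proceed by induction on the derivation of $\term{M}\rw_\wh\term{N}$. Since weak head reduction closes only under push contexts, either the reduction is one of the eight base rewrites applied at the root of $\term M$, or it has the form $\term{[P]a.M'}\rw_\wh\term{[P]a.N'}$ with a strictly smaller derivation of $\term{M'}\rw_\wh\term{N'}$. The inductive case is immediate from the single-premise big-step push rule: $\evl{S_A}{[P]a.M'}{T_A}i$ iff $\evl{S_A\msep a(\term P)}{M'}{T_A}i$ iff, by the induction hypothesis, $\evl{S_A\msep a(\term P)}{N'}{T_A}i$ iff $\evl{S_A}{[P]a.N'}{T_A}i$.

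Each of the eight base cases is then verified by inversion of the big-step rules for the root construct of $\term M$ and direct re-derivation on $\term N$. Beta collapses a push--pop pair into the substitution. Select and reject are immediate because $\evl{S_A}{i}{R_A}k$ forces $R_A=S_A$ and $k=i$, so the surrounding case either keeps or discards its body as appropriate. Unroll is a rule-by-rule correspondence: the two big-step rules for $\term{M^i}$ (continue on $i$, exit on any $j\neq i$) match the two big-step rules for $\term{M;i->M^i}$. The associate rule and the two prefix rules require unfolding nested case and pop/push derivations, but each sub-case on one side pairs uniquely with a sub-case on the other; the freshness hypothesis $x\notin\fv(\term M)$ on the prefix rules is exactly what lets $\term M$ pass through the substitution triggered by moving the pop outwards.

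The most delicate case is passage, $\term{[N]b.a<x>.M}\rw\term{a<x>.[N]b.M}$ with $a\neq b$ and $x\notin\fv(\term N)$. The pop on $a$ requires the $a$-stack to be non-empty, and since $a\neq b$ pushing on $b$ leaves location $a$ untouched, so this requirement is identical on both sides. Writing $S_a=S_a'\,\term{N'}$, the permutation of memory slots on distinct locations lets both sides reduce to $\evl{S_A''\msep b(\term N)}{\{N'/x\}M}{T_A}i$, where on the right the hypothesis $x\notin\fv(\term N)$ allows the substitution to commute with the push of $\term N$. No case presents a genuine obstacle; the entire argument is a careful but routine book-keeping exercise, with the variable-freshness side-conditions ensuring that substitutions commute past the constructs being relocated by the rewrite.
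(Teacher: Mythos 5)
Your proof is correct and matches the paper's approach exactly: the paper's proof is simply ``by induction on $\rw_\wh$,'' and your proposal is a faithful expansion of that induction, with the push-context closure as the only case needing the inductive hypothesis and the eight root rewrites handled by inversion and re-derivation of the big-step rules (including the correct use of the freshness side-conditions in the passage and prefix cases).
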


\begin{proof}
By induction on $\rw_\wh$.
\end{proof}

Combining both directions, evaluation with an empty memory $\evl\e M{S_A}i$ coincides with weak head reduction to a term of the form $\term{[N_1]a_1\dots[N_n]a_n.i}$ that represents the memory $S_A$ with the choice $i$.

The following establishes that reduction in any context commutes with evaluation, demonstrating, in essence, that the evaluation semantics and reduction semantics of the FMC are compatible. Reduction may then be viewed as compile-time optimization, preserving the behaviour of evaluation. To state this formally reduction $\rw$ is extended to memories: if $\term M\rw\term N$ then $S_A\,a(\term M)\rw S_A\,a(\term N)$, and if $S_A\rw T_A$ then $S_A\,a(\term M)\rw T_A\,a(\term M)$. Note that this does not extend the machine itself, but enables to compare evaluation of terms before and after reduction, modelling the optimization of stored functions.

\begin{proposition}
\label{prop:evaluation-reduction}
If $R_A\rws S_A$, $\term M\rws\term N$, and $\evl{R_A}M{T_A}i$ then there is a memory $U_A$ such that $T_A\rws U_A$ and $\evl{S_A}N{U_A}i$.
\[
\begin{tikzpicture}[x = 50pt , y = -30pt]
	\node (R) at (0,0) {$R_A,\term M$};
	\node (S) at (1,0) {$S_A,\term N$};
	\node (T) at (0,1) {$T_A,\term i$};
	\node (U) at (1,1) {$U_A,\term i$};
	\draw[->,rws]           (R)--(S);
	\draw[->,rws,dashed]    (T)--(U);
	\node at (0,.5) {$\evalarrow$};
	\node at (1,.5) {$\evalarrow$};
\end{tikzpicture}
\]
\end{proposition}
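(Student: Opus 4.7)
The plan is to reduce to single-step reductions and then induct on the structure of the evaluation derivation. Since $\rws$ is the reflexive--transitive closure of $\rw$, by a straightforward outer induction on the total length of the two reductions it suffices to handle the case where exactly one step occurs, either in the memory ($R_A \rw S_A$ with $\term M = \term N$) or in the term ($\term M \rw \term N$ with $R_A = S_A$), and to produce $U_A$ with $T_A \rws U_A$ and $\evl{S_A}{N}{U_A}{i}$.

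For the memory case, a single reduction inside a stored term affects evaluation only when that term is popped and substituted. I would therefore first invoke the standard substitution lemma --- if $\term P \rws \term{P'}$ then $\term{\{P/x\}M} \rws \term{\{P'/x\}M}$, with several steps possibly required when $\term x$ occurs more than once in $\term M$ --- and then induct on the evaluation derivation, the key case being the pop rule. For the term-reduction case I would again induct on the evaluation derivation and split on whether $\term M \rw \term N$ occurs strictly inside a subterm (direct appeal to the induction hypothesis) or at the head of the active subterm, matching one of the eight reduction rules. Each head case is then verified by inspection: \emph{beta} and \emph{passage} decompose the evaluation into a push followed by a pop, and the resulting sub-derivations agree (up to independence of distinct locations in the passage case); \emph{select} and \emph{reject} follow immediately from the rules for choices and the case construct; the two \emph{prefix} rules and \emph{associate} are checked by directly comparing the evaluation trees of the two sides; and \emph{unroll} is immediate because the rule for $\term{M^i}$ mirrors the evaluation of $\term{M;i->M^i}$.

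The main obstacle I anticipate is loops and substitution-duplication. When a reduction $\term M \rw \term{M'}$ lies underneath a $\term{M^i}$, the evaluation derivation may traverse the body several (but finitely many) times, and each traversal requires a fresh appeal to the induction hypothesis, with the reductions on the intermediate memories accumulated onto the final memory. The same phenomenon appears in the memory case when a popped term has multiple occurrences of the bound variable in the body into which it is substituted. Both are handled uniformly by iterating the induction hypothesis, provided the accumulation of memory reductions is tracked carefully; beyond this bookkeeping, every case is routine.
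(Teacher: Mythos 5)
There is a genuine gap, and it sits exactly where you locate the ``bookkeeping'': the move of \emph{iterating the induction hypothesis} is not justified by the induction you set up. Your ordering is outer induction on the length of the reductions, inner induction on the evaluation derivation. Consider the pop case under a single memory step $\term N\rw\term{N'}$ where $\term x$ occurs $k\geq 2$ times in $\term M$: the sub-derivation concludes $\evl{S_A}{\{N/x\}M}{T_A}i$ and you must relate it to $\term{\{N'/x\}M}$, which is $k$ reduction steps away. Applying the inner induction hypothesis to the sub-derivation with the first of these steps yields a \emph{new} evaluation derivation for the intermediate term, which is neither a sub-derivation of the original nor of controlled size; a second appeal to the induction hypothesis on it is therefore not licensed, and the outer measure (total reduction length) has not decreased either --- indeed it has grown from $1$ to $k$. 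The same problem defeats the initial decomposition into single steps: because substitution converts one memory step into several term steps (and pushes convert term reductions into memory reductions), the single-step format is not preserved by the recursion.

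The paper's proof resolves this by inverting the lexicographic order and strengthening the statement: it inducts on $(m,n)$ where $m$ is the \emph{size of the evaluation derivation} and $n$ is the number of reduction steps, and it additionally asserts that the resulting derivation for $\evl{S_A}N{U_A}i$ has size at most $m$. With derivation size as the dominant component, the pop case applies the induction hypothesis \emph{once} to the strictly smaller sub-derivation together with the entire $k$-step reduction, and the size bound is exactly what makes chaining successive reduction steps at the top level legitimate ($(\leq m,\,n-1)<(m,n)$). Your case analysis of the individual rewrite rules is essentially right and matches the paper's treatment (cf.\ Lemma~\ref{lem:eval-reduce}), but without the size-preservation invariant and the reordered measure the induction does not close.
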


\begin{proof}
By induction on the measure $(m,n)$ where $m$ is the size of the derivation for $\evl{R_A}M{T_A}i$ and $n$ is the number of reduction steps in $\term M\rws\term N$, strengthening the statement with the assertion that the size of the derivation for $\evl{S_A}N{U_A}i$ is at most $m$. The proof is similar to that of Lemma~\ref{lem:eval-reduce}.
\end{proof}

The above propositions serve to demonstrate that the reduction relation is the correct one for the calculus, given its operational semantics. The reduction rules are \emph{sound} for evaluation, in the sense of Proposition~\ref{prop:evaluation-reduction}, and \emph{complete} in the sense that weak head reduction implements the machine, Proposition~\ref{prop:weak-head}. The next section will show that the reduction semantics is \emph{consistent} by demonstrating confluence.


\section{Confluence}
\label{sec:confluence}

The confluence proof follows the standard \emph{parallel reduction} technique~\cite{Takahashi-1995}. Reduction is split into \emph{duplicating} reduction $\rw_\dup$, comprising \emph{beta} and \emph{unroll}, and \emph{affine} reduction $\rw_\aff$, consisting of the remaining rules. The former is shown to be confluent by parallel reduction, while the latter is shown confluent and terminating by Newman's Lemma. The two relations are then shown to commute.

\begin{lemma}
\label{lem:aff}
Affine reduction $\rw_\aff$ is terminating and confluent.
\end{lemma}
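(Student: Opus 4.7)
The plan is to establish termination of $\rw_\aff$ by a polynomial interpretation, and then derive confluence via Newman's Lemma from a critical-pair analysis for local confluence.

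For termination I would assign to each term $\term M$ a positive integer $\mu(\term M)$ inductively, with $\mu(\term x) = \mu(\term i) = 1$, $\mu(\term{M^i}) = \mu(\term M)$, $\mu(\term{a<x>.M}) = \mu(\term M) + 1$, $\mu(\term{[N]a.M}) = 2\mu(\term M) + \mu(\term N)$, and the crucial product form $\mu(\term{M;i->N}) = \mu(\term M)\cdot(\mu(\term N) + 1)$ for the case constructor. Since every coefficient is at least $1$, $\mu$ is strictly monotonic in each argument position, so a strict decrease at a redex propagates under arbitrary contexts. Checking the six rules is then routine: \emph{select} and \emph{reject} drop to a proper subterm; \emph{passage} decreases by $1$ and \emph{prefix (pop)} by $\mu(\term M)$; and the two interesting cases, \emph{prefix (push)} and \emph{associate}, both decrease by exactly $\mu(\term P)\cdot\mu(\term M) \geq 1$, where $\term P$ is the subterm moved outward. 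This last calculation is the main motivation for the product form on case.

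With termination in hand, Newman's Lemma reduces confluence to local confluence, which I would prove by a critical-pair analysis. The LHSs of the six affine rules are distinguished by their \emph{head shape}: \emph{passage} is rooted at a push, while the five case-rules are all rooted at a case and further distinguished by the shape of the left argument (a choice for \emph{select}/\emph{reject}, a pop for \emph{prefix (pop)}, a push for \emph{prefix (push)}, or a case with the matching label for \emph{associate}). Since these shapes are mutually exclusive, no two rules overlap at the root. Genuine critical pairs arise only at two internal non-variable positions. First, the inner case in the LHS of \emph{associate}, $\term{(P;i->N);i->M}$: when $\term P$ instantiates to a choice, pop, push, or case with label $i$, another case-rule matches there. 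Second, the push subterm in the LHS of \emph{prefix (push)}, $\term{([P]a.N);i->M}$: when $\term N$ instantiates to a pop $\term{c<y>.N'}$ on a different location, \emph{passage} matches there.

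Each critical pair joins by direct reduction. For the associate overlaps: choices give a one-step join via \emph{select} or \emph{reject}; for $\term P$ a pop or push, the two routes converge after one further \emph{prefix} step; and for $\term P$ itself a case with label $i$, both branches equalise at $\term{P_0;i->(P_1;i->(N;i->M))}$ after a further one or two \emph{associate} steps. For the prefix(push)/passage overlap, one branch yields $\term{[P]a.c<y>.(N';i->M)}$ and the other $\term{c<y>.[P]a.(N';i->M)}$, which join via one additional \emph{passage} step. All remaining redex pairs lie either at disjoint positions in the term or inside variable subterms of each other's LHS, and commute by independent reduction. The main obstacle is the prefix(push)/passage overlap, which spans the non-variable push structure and part of its variable body, and is the only cross-boundary critical pair in the system; once handled, Newman's Lemma yields confluence.
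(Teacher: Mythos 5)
Your proof is correct, and its overall shape---termination by a well-founded measure, then Newman's Lemma with a critical-pair analysis---is the same as the paper's. The two halves differ in instructive ways. For termination the paper uses a lexicographic pair: the total size of the left arguments $\term M$ over all case subterms $\term{M;i->N}$ (invariant under \emph{passage}, strictly decreasing for the other five rules), together with the total size of the bodies $\term M$ over all push subterms $\term{[N]a.M}$ (strictly decreasing for \emph{passage}). The lexicographic structure is needed there because \emph{prefix (push)} increases the second component while decreasing the first. Your multiplicative polynomial interpretation collapses this into a single strictly decreasing quantity; the arithmetic checks out for all six rules, with \emph{prefix (push)} and \emph{associate} indeed dropping by $\mu(\term P)\cdot\mu(\term M)$, and strict monotonicity in each argument position gives closure under contexts. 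For confluence, your enumeration of overlaps is systematic and in fact slightly more complete than the paper's list: besides the \emph{passage}--\emph{prefix (push)} pair and the \emph{select}/\emph{reject}/\emph{prefix}--\emph{associate} pairs, you also treat the self-overlap of \emph{associate} (where $\term P$ in $\term{(P;i->N);i->M}$ is itself a case on $i$), which the paper's case analysis does not list explicitly; it joins in the standard associativity diamond as you say. Your step counts in the resolutions are occasionally compressed (the \emph{prefix}--\emph{associate} branches need a further \emph{associate} step on one side, and both \emph{passage}--\emph{prefix (push)} branches need intermediate \emph{prefix} steps before the final \emph{passage}), but every join you claim does exist, so these are presentational rather than substantive.
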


\begin{proof}
For termination, define a measure on terms as the pair $(n,m)$ where:
\begin{itemize}
	\item $n$ is the sum over the size of $\term M$ for every subterm $\term{M;i->N}$, and
	\item $m$ is the sum over the size of $\term M$ for every subterm $\term{[N]a.M}$. 
\end{itemize}
The first component is invariant under the \emph{passage} rewrite rule, and strictly reduces for the other rules (\emph{select}, \emph{reject}, both \emph{prefix} rules, and \emph{associate}), while the second component strictly reduces for \emph{passage}, proving termination.

Confluence follows by Newman's Lemma from local confluence. There are the following critical pairs.

\begin{itemize}
	\item\emph{Passage--prefix}:
\[
	\term{[N]b.a<x>.M;i->P} \rw_\aff
	\left\{\begin{aligned}
		\term{Q}  &= \term{a<x>.[N]b.M;i->P}
	\\  \term{Q'} &= \term{[N]b.(a<x>.M;i->P)}
	\end{aligned}\right.
\]
This is closed as follows.
\[
	\left.\begin{aligned}
		\term{Q}  &\rw_\aff \term{a<x>.([N]b.M;i->P)} 
	\\  \term{Q'} &\rw_\aff \term{[N]b.a<x>.(M;i->P)}
	\end{aligned}\right\}
	\rw_\aff \term{a<x>.[N]b.(M;i->P)}
\]

	\item\emph{Select--associate}:
\[
	\term{i;i->M;i->N} \rw_\aff
	\left\{\begin{aligned}
		\term{P} &= \term{M;i->N}
	\\  \term{Q} &= \term{i;i->(M;i->N)}
	\end{aligned}\right.
\]
This is closed by $\term Q\rw_\aff\term P$. 

	\item\emph{Reject--associate}:
\[
	\term{i;j->M;j->N} \rw_\aff
	\left\{\begin{aligned}
		\term{P} &= \term{i;j->N}
	\\  \term{Q} &= \term{i;j->(M;j->N)}
	\end{aligned}\right.
\]
This is closed by $\term{P}\rw_\aff\term i$ and $\term{Q}\rw_\aff\term i$. 

	\item\emph{Prefix--associate}: The case for \emph{prefix (pop)} is as follows.
\[
	\term{a<x>.M;i->N;i->M}
	\left\{\begin{aligned}
		\term{Q}  &= \term{a<x>.(M ; i->N) ; i->P}
	\\  \term{Q'} &= \term{a<x>.M ; i->(N ; i->P)}
	\end{aligned}\right.
\]
This is closed as follows.
\[
	\left.\begin{aligned}
		\term{Q}  \rw_\aff \term{a<x>.(M ; i->N ; i->P)} 
	\\  \term{Q'} 
	\end{aligned}\right\}
	\rw_\aff \term{a<x>.(M ; i->(N ; i->P))}
\]
The case for \emph{prefix (push)} is similar.
\end{itemize}
\end{proof}

For duplicating reduction $\rw_\dup$, parallel reduction is defined by marking selected redexes and reducing these simultaneously by induction on the term.

\begin{definition}
A \emph{marked} term is one equipped with a marking on a selection of \emph{beta-} and \emph{unroll-}redexes, indicated by underlining. The \emph{marked reduct} $\term{|_M_|}$ of a marked term $\term M$ is defined inductively as follows.
\[
\begin{array}{@{}r@{{}={}}l@{\quad~~}r@{{}={}}l@{\quad~~}r@{{}={}}l@{\quad~~}r@{{}={}}l@{}}
	\term{|_ __{[N]a.a<x>}.M _|} & \term{\{|_N_|/x\}|_M_|}     
  & \term{|_x_|}                 & \term x 
  & \term{|_ [N]a.M _|}          & \term{[|_N_|]a.|_M_|} 
  & \term{|_ M;i->N _|}          & \term{|_M_|;i->|_N_|}
\\  \term{|_ __{M^i}_|}          & \term{|_M_| ; i -> |_M_|^i} 
  & \term{|_i_|}                 & \term i 
  & \term{|_ a<x>.M _|}          & \term{a<x>.|_M_|} 
  & \term{|_ M^i _|}             & \term{|_M_|^i}
\end{array}
\]
A \emph{parallel reduction step} $\term{M}\rwp_\dup\term{N}$ takes $\term M$ to the marked reduct $\term N = \term{|_M_|}$ for some marking of $\term M$. The \emph{complete development} $\term{\llfloor M\rrfloor}$ of $\term M$ is the marked reduct of marking every duplicating redex in $\term M$.
\end{definition}

Parallel reduction is in-between single-step and multi-step reduction:

\begin{lemma}
\label{lem:dup-complete}
$(\rw_\dup)\subset(\rwp_\dup)\subset(\rws_\dup)$.
\end{lemma}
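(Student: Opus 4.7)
The plan is to prove the two inclusions separately, by straightforward structural inductions that exploit the compatibility of $\rw_\dup$ with arbitrary contexts.

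For the first inclusion $(\rw_\dup)\subset(\rwp_\dup)$, I would first observe that an unmarked term $\term M$ satisfies $\term{|_M_|}=\term M$, by a routine induction on $\term M$: each clause of the definition of marked reduct preserves the term when no marking is present. Now suppose $\term M\rw_\dup\term N$. There is a unique beta- or unroll-redex $\term R$ being contracted, sitting inside some context $C$. Consider the marking of $\term M$ that marks exactly $\term R$ and nothing else. By induction on $C$, the marked reduct is exactly $\term N$: the contextual clauses of $\term{|_\cdot_|}$ apply verbatim outside $\term R$ (where all subterms are unmarked and hence returned unchanged), while at $\term R$ itself the relevant beta- or unroll-clause replaces the redex by its contractum. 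This gives $\term M\rwp_\dup\term N$.

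For the second inclusion $(\rwp_\dup)\subset(\rws_\dup)$, it suffices to prove, by induction on the marked term $\term M$, that $\term M\rws_\dup\term{|_M_|}$. The variable and choice cases are immediate. For the compatible cases — unmarked push $\term{[N]a.M}$, pop $\term{a<x>.M}$, case $\term{M;i->N}$, and unmarked loop $\term{M^i}$ — apply the inductive hypothesis to each immediate subterm and close under the surrounding context. For a marked beta-redex $\term{[N]a.a<x>.M}$, use the IH on $\term N$ and $\term M$ to reach $\term{[|_N_|]a.a<x>.|_M_|}$ in several steps, and then fire a single beta-step to arrive at $\term{\{|_N_|/x\}|_M_|}$, which is the marked reduct. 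For a marked loop, first perform one unroll step $\term{M^i}\rw_\dup\term{M;i->M^i}$, then apply the IH independently to each of the two copies of $\term M$, reaching $\term{|_M_|;i->|_M_|^i}$.

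The main delicate point is the unroll case, because the rule duplicates $\term M$ and I need to check that placing $\term{|_M_|}$ in both positions is reachable by reduction from both copies. This works because the definition of marked reduct copies the marking on $\term M$ identically into each occurrence, so the same IH-derived sequence applies in parallel to each copy (using compatibility of $\rws_\dup$ with contexts). The beta case tacitly relies on the standard substitution lemma $\term{|_\{N/x\}M_|} = \term{\{|_N_|/x\}|_M_|}$, provable by an easy induction on $\term M$, which should be stated and proved before the main argument.
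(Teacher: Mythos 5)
Your proof is correct and takes essentially the same route as the paper's (very terse) argument: for the first inclusion, mark exactly the contracted redex and check the marked reduct is the contractum; for the second, induct on the structure of the marked term, firing one beta- or unroll-step per mark. One minor remark: since in the beta case you reduce $\term N$ and $\term M$ to $\term{|_N_|}$ and $\term{|_M_|}$ \emph{before} firing the beta-step, that step yields $\term{\{|_N_|/x\}|_M_|}$ directly, so the substitution lemma you flag at the end is not actually needed for this direction (it is needed for Lemma~\ref{lem:dup-par}, not here).
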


\begin{proof}
By marking the redex reduced in $\term M\rw_\dup\term N$, respectively by induction on $\term{|_-_|}$.
\end{proof}

Then duplicating reduction and parallel reduction are equivalent, $(\rws_\dup)=(\rwps_\dup)$. Next, a parallel step may be completed to a complete development by another parallel step, reducing the remaining redexes.

\begin{lemma}
\label{lem:dup-par}
If $\term M\rwp_\dup\term N$ then $\term N\rwp_\dup\term{\llfloor M \rrfloor}$. 
\end{lemma}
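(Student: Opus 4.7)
The plan is the standard Takahashi \emph{triangle} argument, by structural induction on the marked term $\term M$, preceded by a \emph{substitution lemma} for parallel duplicating reduction:
\[
	\term P \rwp_\dup \term{P'} \text{ and } \term Q \rwp_\dup \term{Q'}
	\quad\Longrightarrow\quad
	\term{\{Q/x\}P} \rwp_\dup \term{\{Q'/x\}P'}.
\]
The substitution lemma is proved by induction on $\term P$: for $\term P = \term x$ it follows from $\term Q \rwp_\dup \term{Q'}$; for $\term P = \term y \neq \term x$ it is trivial; and for the non-variable constructors it follows from the IH after compatibly marking the image. The only delicate case is when $\term P$ is a marked beta- or unroll-redex, where one must compose the IH with the definition of $\term{|_{-}_|}$, using that substitution commutes with meta-substitution in the redex position.

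For the triangle property itself, I would induct on the structure of the marked term $\term M$. The constructor cases ($\term x$, $\term i$, $\term{[N]a.M}$, $\term{a<x>.M}$, $\term{M;i->N}$) are straightforward: the reduct $\term N$ and the complete development $\term{\llfloor M\rrfloor}$ decompose along the same constructor, and the IH applied to each immediate subterm yields a compatible marking whose parallel application sends $\term N$ to $\term{\llfloor M\rrfloor}$. The interesting cases are the four kinds of (marked or unmarked) redex:
\begin{itemize}
\item Marked beta $\term{\_{[N]a.a<x>.M}}$: we have $\term N = \term{\{|_N_|/x\}|_M_|}$ and $\term{\llfloor M\rrfloor} = \term{\{\llfloor N\rrfloor/x\}\llfloor M\rrfloor}$. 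By the IH, $\term{|_N_|}\rwp_\dup\term{\llfloor N\rrfloor}$ and $\term{|_M_|}\rwp_\dup\term{\llfloor M\rrfloor}$, and the substitution lemma gives the required parallel step.
\item Unmarked beta $\term{[N]a.a<x>.M}$: the reduct $\term N$ is $\term{[|_N_|]a.a<x>.|_M_|}$, in which the outer redex is now unmarked. We mark it in the second parallel step and appeal to the IH plus the substitution lemma as above.
\item Marked unroll $\term{\_{M^i}}$: the reduct is $\term{|_M_|;i->|_M_|^i}$, and the complete development is $\term{\llfloor M\rrfloor;i->\llfloor M\rrfloor^i}$. Applying the IH to $\term M$ (twice) and marking the unmarked inner loop produces a marking of $\term{|_M_|;i->|_M_|^i}$ whose marked reduct is exactly $\term{\llfloor M\rrfloor;i->\llfloor M\rrfloor^i}$.
\item Unmarked unroll $\term{M^i}$: the reduct is $\term{|_M_|^i}$; mark the outer loop in a further parallel step and reduce to the previous case.
\end{itemize}

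The main obstacle is the substitution lemma, specifically the interaction between marked beta-redexes at the root of $\term P$ and the outer substitution $\term{\{Q/x\}(-)}$: one must check that the two substitutions (the meta-level one and the one arising from the beta step) compose coherently, and that the variable conventions can be arranged so that $\term x$ does not clash with the bound variable of the beta-redex. Once the substitution lemma is in hand, the triangle proof itself is a bookkeeping exercise: for each structural case, one describes explicitly which redexes of $\term N$ to mark so that the marked reduct is $\term{\llfloor M\rrfloor}$, and invokes the IH to handle the marks inherited from inside the subterms.
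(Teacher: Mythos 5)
Your approach is the same as the paper's: the Takahashi triangle property, proved by induction on the marked term by exhibiting, in each case, a marking of $\term N$ whose marked reduct is $\term{\llfloor M \rrfloor}$. The paper in fact spells out only the two \emph{unmarked}-redex cases and treats the marked ones as immediate, so your explicit substitution lemma for $\rwp_\dup$ --- needed in the marked-beta case to see that the marking of $\term{\{|_N_1_|/x\}|_N_2_|}$ induced by the inductively obtained markings of $\term{|_N_1_|}$ and $\term{|_N_2_|}$ has the right reduct --- is a point in your favour rather than a divergence, and your sketch of its proof is fine. (In the unmarked-beta case you do not actually need it: once the surviving redex is marked, the definition of the marked reduct already performs the substitution.)

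One step as written would fail: in the marked-unroll case you propose to mark ``the unmarked inner loop'' of the reduct $\term{|_M_| ; i -> |_M_|^i}$. You must do the opposite and leave it unmarked. The inner loop is a redex \emph{created} by the unrolling, not a residual of any redex of the original term, and the complete development does not contract it: $\term{\llfloor M^i \rrfloor}$ is $\term{\llfloor M\rrfloor ; i -> \llfloor M\rrfloor^i}$, with the residual loop intact. Marking the inner loop would give the marked reduct $\term{\llfloor M\rrfloor ; i -> (\llfloor M\rrfloor ; i -> \llfloor M\rrfloor^i)}$, which overshoots the target. The correct marking carries the inductive marking into both copies of $\term{|_M_|}$ and marks nothing else; the same remark applies to the second stage of your unmarked-unroll case, where after marking the outer loop no further loop may be marked. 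With that correction the argument goes through and matches the paper's.
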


\begin{proof}
A marking on $\term N$ such that $\term{|_N_|}=\term{\llfloor M\rrfloor}$ will be given by induction on the marked term $\term M$. The two non-trivial cases are when $\term M$ is an unmarked redex:
\begin{align*}
	\term{|_ [M_1]a.a<x>.M_2 _|} &= \term{[|_M_1_|]a.a<x>.|_M_2_|}
\\  \term{|_ M_1^i_|}            &= \term{|_ M_1 _|^i} 			  
\end{align*}
Let $\term{N_k}=\term{|_M_k_|}$ for $k=1,2$ and mark the above redexes in $\term N$. By induction, $\term{|_N_k_|}=\term{\llfloor M_k\rrfloor}$. Then for $\term{|_N_|}$ and $\term{\llfloor M\rrfloor}$ the cases are completed by:
\[
\begin{array}{@{}r@{{}={}}c@{{}={}}l@{}}
        \term{|_ __{[N_1]a.a<x>}.N_2 _|} & \term{\{|_N_1_|/x\}|_N_2_|}	   & \term{\llfloor [M_1]a.a<x>.M_2 \rrfloor}
\\[3pt] \term{|_ __{N_1^i}_|}            & \term{|_N_1_| ; i -> |_N_1_|^i} & \term{\llfloor M_1^i \rrfloor}
\end{array}
\]
\end{proof}

It follows that parallel reduction is diamond, and hence duplicating reduction is confluent. To prove confluence for the full reduction relation $\rw$, terms will be reduced to their affine normal form. First, it is shown how affine reduction commutes with parallel duplicating reduction.

\begin{lemma}
\label{lem:aff-dup}
If $\term P \lwn[\aff] \term M \rwp_\dup \term N$ then $\term P \rwp_\dup \term Q \lws[\aff] \term N$ for some $\term Q$.
\[
\begin{tikzpicture}[x = 30pt , y = -30pt]
	\node (M) at (0,0) {$\term M$};
	\node (N) at (1,0) {$\term N$};
	\node (P) at (0,1) {$\term P$};
	\node (Q) at (1,1) {$\term Q$};
	\draw[->,rwp] (M)--node[above]{$\scriptstyle\dup$} (N);
	\draw[->,rwn] (M)--node[left]{$\scriptstyle\aff$} (P);
	\draw[->,rwp,dashed] (P)--node[below]{$\scriptstyle\dup$} (Q);
	\draw[->,rws,dashed] (N)--node[right]{$\scriptstyle\aff$} (Q);
\end{tikzpicture}
\]
\end{lemma}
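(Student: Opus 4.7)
The proof follows the standard pattern for commuting parallel reduction with an auxiliary reduction: first a single-step \emph{strip lemma}, then iteration along $\term M \rws_\aff \term P$ via well-founded induction, enabled by termination of $\rw_\aff$ (Lemma~\ref{lem:aff}).

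\textbf{Strip lemma.} I would first establish: if $\term M \rw_\aff \term{M_1}$ and $\term M \rwp_\dup \term N$, then there exist $\term{M_2}$ and $\term Q$ with $\term{M_1} \rws_\aff \term{M_2}$, $\term{M_2} \rwp_\dup \term Q$, and $\term N \rws_\aff \term Q$. The proof is by induction on $\term M$ with case analysis on the affine rule used. In nearly every case the marking on $\term M$ transfers directly to $\term{M_1}$, so $\term{M_2} = \term{M_1}$, and the matching affine step is applied to $\term N$ either once (if it lies outside every marked redex), once per duplicated copy (if it lies inside the argument of a marked beta, which substitution can replicate), or not at all (if it lies in a branch discarded by a marked \emph{reject}-like rewrite).

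\textbf{Main obstacle.} The delicate case is the \emph{prefix (push)} rule when it straddles a marked beta: if $\term M = \term{([P]a.a<y>.N');i->M_0}$ with $\term{[P]a.a<y>.N'}$ marked, then $\term{M_1} = \term{[P]a.((a<y>.N');i->M_0)}$ no longer has a push immediately followed by a matching pop, so no marking of $\term{M_1}$ alone reproduces $\term N = \term{(\{|_P_|/y\}|_N'_|);i->|_M_0_|}$. The resolution is to alpha-rename $\term y$ fresh for $\term{M_0}$ (always possible by standard conventions) and apply one further affine step \emph{prefix (pop)} to obtain $\term{M_2} = \term{[P]a.a<y>.(N';i->M_0)}$, restoring the beta-form at the head. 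Marking this restored beta in $\term{M_2}$ yields a parallel reduct equal to $\term N$ on the nose, closing the diagram with zero affine steps on the $\term N$-side. A symmetric re-alignment (by \emph{associate}) handles a marked beta whose argument is a nested case; the other rules do not create new seams.

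\textbf{Iteration.} The lemma itself then follows by well-founded induction on $\term M$ using termination of $\rw_\aff$. If $\term M$ is affine-normal, $\term P = \term M$ and $\term Q = \term N$ suffice. Otherwise, pick any $\term M \rw_\aff \term{M_1}$ on the reduction path to $\term P$ and apply the strip lemma to obtain $\term{M_1} \rws_\aff \term{M_2} \rwp_\dup \term{Q_1}$ with $\term N \rws_\aff \term{Q_1}$. By confluence of $\rw_\aff$ and normality of $\term P$, $\term{M_2} \rws_\aff \term P$, so the induction hypothesis applies to $\term{M_2} \rws_\aff \term P$ and $\term{M_2} \rwp_\dup \term{Q_1}$, giving $\term P \rwp_\dup \term Q$ and $\term{Q_1} \rws_\aff \term Q$. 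Concatenating the two affine chains yields $\term N \rws_\aff \term Q$, as required.
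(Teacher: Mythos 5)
Your proposal is correct and follows essentially the same route as the paper: carry the marking across each affine step, isolate \emph{prefix (push)} acting on a marked beta as the one problematic case, repair it by immediately applying the \emph{prefix (pop)} step that restores the redex (using freshness of the bound variable), and iterate by induction on the affine normalization sequence using confluence and termination of $\rw_\aff$. The only cosmetic difference is that you package the single-step case as an explicit strip lemma, whereas the paper phrases the same fix as rearranging the reduction $\term M\rwn_\aff\term P$ so that the restoring \emph{prefix (pop)} step follows directly.
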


\begin{proof}
Let $\term M$ be marked such that $\term{|_M_|}=\term N$. For every affine reduction step $\term M\rw_\aff\term P$ except \emph{prefix (push)} on a marked redex, carrying over the marking from $\term M$ to $\term P$ gives $\term{|_M_|}\rws_\aff\term{|_P_|}$. The remaining case is as follows, where the redex becomes separated so that it cannot be marked.
\[
  \term{__{[N]a.a<x>}.M ; i-> P} \rw_\aff \term{[N]a.(a<x>.M ; i-> P)}
\]
To address this, by confluence and termination the reduction $\term{M}\rwn_\aff\term{P}$ may be arranged so that a \emph{prefix (push)}-step on a marked redex is immediately followed by the \emph{prefix (pop)}-step that restores the redex: 
\[
  \term{[N]a.(a<x>.M ; i-> P)} \rw_\aff \term{__{[N]a.a<x>}.(M ; i-> P)}
\]
This give the required commutation (since $x$ is not free in $\term P$):
\[
\left.\begin{aligned}
   \term{|_ __{[N]a.a<x>}.M ; i-> P_|}
\\ \term{|_ __{[N]a.a<x>}.(M ; i-> P)_|}
\end{aligned}\right\}
~=~
\term{\{|_N_|/x\}|_M_| ; i-> |_P_|}
\]
By induction on $\term M\rwn_\aff\term P$ it follows that $\term{|_M_|}\rws_\aff\term{|_P_|}$.
\end{proof}

Next, define \emph{complete reduction} $(\rwp)=(\rwp_\dup)\cdot(\rwn_\aff)$ as a parallel duplicating step followed by affine normalization.

\begin{lemma}
\label{lem:complete-diamond}
Complete reduction is diamond.
\end{lemma}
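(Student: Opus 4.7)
The plan is to chase a diamond by combining Lemma~\ref{lem:dup-par}, which yields the diamond for $\rwp_\dup$ via complete developments, with Lemma~\ref{lem:aff-dup}, which permutes affine reduction past a parallel duplicating step, and Lemma~\ref{lem:aff}, the confluence and termination of $\rw_\aff$.

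Suppose $\term M \rwp \term P$ and $\term M \rwp \term Q$. Unpacking the definition of complete reduction, there exist $\term{P_1}$ and $\term{Q_1}$ with $\term M \rwp_\dup \term{P_1} \rwn_\aff \term P$ and $\term M \rwp_\dup \term{Q_1} \rwn_\aff \term Q$. By Lemma~\ref{lem:dup-par}, both $\term{P_1} \rwp_\dup \term{\llfloor M \rrfloor}$ and $\term{Q_1} \rwp_\dup \term{\llfloor M \rrfloor}$, so the two parallel-duplicating branches reconverge at the complete development.

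Next, I would apply Lemma~\ref{lem:aff-dup} to the peak $\term P \lwn[\aff] \term{P_1} \rwp_\dup \term{\llfloor M \rrfloor}$ to obtain a term $\term{R_P}$ with $\term P \rwp_\dup \term{R_P}$ and $\term{\llfloor M \rrfloor} \rws_\aff \term{R_P}$, and symmetrically $\term{R_Q}$ with $\term Q \rwp_\dup \term{R_Q}$ and $\term{\llfloor M \rrfloor} \rws_\aff \term{R_Q}$. By Lemma~\ref{lem:aff}, $\term{\llfloor M \rrfloor}$ has a unique affine normal form $\term R$; since $\term{R_P}$ and $\term{R_Q}$ are both reducts of $\term{\llfloor M \rrfloor}$, each normalizes to the same $\term R$. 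Composing, $\term P \rwp_\dup \term{R_P} \rwn_\aff \term R$ witnesses $\term P \rwp \term R$, and symmetrically $\term Q \rwp \term R$, closing the diamond.

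The bulk of the work has already been isolated into the preceding lemmas, so there is no real obstacle to overcome here: Lemma~\ref{lem:dup-par} delivers the diamond for the duplicating fragment, and Lemma~\ref{lem:aff-dup} supplies the only delicate permutation, namely recovering a marked beta-redex that was split apart by a \emph{prefix (push)} step. The one point worth checking carefully is that the trailing $\rws_\aff$ produced by Lemma~\ref{lem:aff-dup} can always be extended to a common affine normal form from both sides, which is immediate from confluence plus termination of $\rw_\aff$.
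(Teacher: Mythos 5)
Your proof is correct and follows essentially the same route as the paper: both reconverge the two parallel duplicating steps at the complete development $\term{\llfloor M\rrfloor}$ via Lemma~\ref{lem:dup-par}, permute the affine normalizations past them with Lemma~\ref{lem:aff-dup}, and close the final square by confluence and termination of affine reduction (Lemma~\ref{lem:aff}). No gaps.
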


\begin{proof}
By the following diagram, where the top left triangles are by Lemma~\ref{lem:dup-complete}, the top right and bottom left squares are by Lemma~\ref{lem:aff-dup}, and the bottom right square is by confluence and termination of affine reduction (Lemma~\ref{lem:aff}).
\[
\begin{tikzpicture}[x = 30pt , y = -30pt]
	\node (M)  at (0,0) {$\term M$} ;
	\node (MN) at (1,0) {$\term\cdot$} ;
	\node (N)  at (2,0) {$\term N$} ;
	\node (MP) at (0,1) {$\term\cdot$} ;
	\node (P)  at (0,2) {$\term P$} ;
	\node (NQ) at (2,1) {$\term\cdot$} ;
	\node (PQ) at (1,2) {$\term\cdot$} ;
	\node (C)  at (1,1) {$\term\cdot$} ;
	\node (Q)  at (2,2) {$\term Q$} ;
	\draw[->,rwp]  (M)--node[above]{$\scriptstyle\dup$} (MN);
	\draw[->,rwp]  (M)--node[left] {$\scriptstyle\dup$} (MP);
	\draw[->,rwp]  (M)--node[above right]{$\scriptstyle\dup$} (C);
	\draw[->,rwp] (MN)--node[right]{$\scriptstyle\dup$} (C);
	\draw[->,rwp] (MP)--node[below]{$\scriptstyle\dup$} (C);
	\draw[->,rwp]  (N)--node[right]{$\scriptstyle\dup$} (NQ);
	\draw[->,rwp]  (P)--node[below]{$\scriptstyle\dup$} (PQ);
	\draw[->,rwn] (MN)--node[above]{$\scriptstyle\aff$} (N);
	\draw[->,rwn] (MP)--node[left] {$\scriptstyle\aff$} (P);
	\draw[->,rwn] (NQ)--node[right]{$\scriptstyle\aff$} (Q);
	\draw[->,rwn] (PQ)--node[below]{$\scriptstyle\aff$} (Q);
	\draw[->,rws]  (C)--node[below]{$\scriptstyle\aff$} (NQ);
	\draw[->,rws]  (C)--node[right]{$\scriptstyle\aff$} (PQ);
\end{tikzpicture}
\]
\end{proof}

To connect reduction to complete reduction, the following lemma will show that the image of a reduction step under affine normalization is a complete step.

\begin{lemma}
\label{lem:aff-complete}
If $\term M\rw\term N$ then $\term{M_\aff}\rwp\term{N_\aff}$ where $\term{M_\aff}$ and $\term{N_\aff}$ are the affine normal forms of $\term M$ and $\term N$.
\end{lemma}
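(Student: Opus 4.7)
The proof will proceed by a case split on whether the step $\term M \rw \term N$ is affine or duplicating. In each case the goal is to exhibit a parallel duplicating step out of $\term{M_\aff}$ followed by affine normalization reaching $\term{N_\aff}$, i.e.\ a complete reduction step in the sense of $(\rwp)=(\rwp_\dup)\cdot(\rwn_\aff)$.

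If $\term M \rw_\aff \term N$, then by confluence and termination of $\rw_\aff$ (Lemma~\ref{lem:aff}) the unique affine normal forms coincide: $\term{M_\aff} = \term{N_\aff}$. Taking the empty marking gives the trivial parallel step $\term{M_\aff} \rwp_\dup \term{M_\aff}$, and as $\term{M_\aff}$ is already in affine normal form the subsequent $\rwn_\aff$ consists of zero steps. This composes to the required complete step $\term{M_\aff} \rwp \term{N_\aff}$.

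If $\term M \rw_\dup \term N$, Lemma~\ref{lem:dup-complete} upgrades the step to a parallel one, $\term M \rwp_\dup \term N$. Since $\term M \rwn_\aff \term{M_\aff}$, Lemma~\ref{lem:aff-dup} applies with $\term{M_\aff}$ in the role of $\term P$, yielding a term $\term Q$ with $\term{M_\aff} \rwp_\dup \term Q$ and $\term N \rws_\aff \term Q$. By confluence and termination of $\rw_\aff$ again, the affine normal form of $\term Q$ must agree with that of $\term N$, so $\term Q \rwn_\aff \term{N_\aff}$. Chaining these produces $\term{M_\aff} \rwp_\dup \term Q \rwn_\aff \term{N_\aff}$, which is by definition a complete reduction step $\term{M_\aff} \rwp \term{N_\aff}$.

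The potentially awkward interaction --- an affine step of \emph{prefix (push)} can separate a marked duplicating redex in $\term{M_\aff}$ --- is already handled inside Lemma~\ref{lem:aff-dup}, so the main obstacle has been discharged upstream. The only point to verify here is the bookkeeping: that the empty marking counts as a parallel step in the affine case, and that Lemma~\ref{lem:aff-dup} can be invoked with $\rwn_\aff$ on the left (which it can, since its statement is phrased with $\lwn[\aff]$, i.e.\ reduction to affine normal form). No new technical difficulty arises.
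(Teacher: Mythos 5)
Your proposal is correct and follows essentially the same route as the paper's proof: an affine step is absorbed by uniqueness of affine normal forms, and a duplicating step is lifted to a parallel step and pushed past affine normalization via Lemma~\ref{lem:aff-dup}, with confluence and termination of $\rw_\aff$ closing the diagram. The only difference is cosmetic: you cite Lemma~\ref{lem:dup-complete} for the inclusion $(\rw_\dup)\subset(\rwp_\dup)$, which is indeed the correct reference, and you spell out the trivial parallel step in the affine case that the paper leaves implicit.
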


\begin{proof}
The case of an affine step $\term M\rw_\aff\term N$ is immediate since $\term{M_\aff}=\term{N_\aff}$ by confluence and termination of affine reduction (Lemma~\ref{lem:aff}). In the case of a duplicating step $\term M\rw_\dup\term N$, Lemma~\ref{lem:dup-par} gives a parallel step $\term M\rwp_\dup\term N$, for which Lemma~\ref{lem:aff-dup} gives reductions $\term{M_\aff}\rwp_\dup\term{P}\lws[\aff]\term{N}$ for some term $\term P$. Then $\term{P}\rwn_\aff\term{N_\aff}$ again by Lemma~\ref{lem:aff}, giving the required reduction $\term{M_\aff}\rwp_\dup\term{P}\rwn_\aff\term{N_\aff}$.
\[
\begin{tikzpicture}[x = 30pt , y = -30pt, baseline=-36pt]
	\node (M)  at (0,0) {$\term M$};
	\node (N)  at (1,0) {$\term N$};
	\node (Ma) at (0,1) {$\term{M_\aff}$};
	\node (Na) at (1,1) {$\term{N_\aff}$};	
	\node      at (.5,1) {$=$};
	\draw[->,rw]  (M)--node[above]{$\scriptstyle\aff$} (N);
	\draw[->,rwn] (M)--node[left] {$\scriptstyle\aff$} (Ma);
	\draw[->,rwn] (N)--node[right]{$\scriptstyle\aff$} (Na);
\end{tikzpicture}
\qquad
\begin{tikzpicture}[x = 30pt , y = -30pt, baseline=-36pt]
	\node (M)  at (0,0) {$\term M$};
	\node (N)  at (1,0) {$\term N$};
	\node (Ma) at (0,1) {$\term{M_\aff}$};
	\node (P)  at (1,1) {$\term P$};
	\node (Na) at (2,1) {$\term{N_\aff}$};
	\draw[->,rw]   (M)--node[above]{$\scriptstyle\dup$} (N);
	\draw[->,rwn]  (M)--node[left] {$\scriptstyle\aff$} (Ma);
	\draw[->,rws]  (N)--node[right]{$\scriptstyle\aff$} (P);
	\draw[->,rwn]  (N)--node[above]{$\scriptstyle\aff$} (Na);
	\draw[->,rwp] (Ma)--node[below]{$\scriptstyle\dup$} (P);
	\draw[->,rwn]  (P)--node[below]{$\scriptstyle\aff$} (Na);
\end{tikzpicture}
\]
\end{proof}

The confluence proof puts everything together: affine normalization maps reduction to complete reduction, which is confluent, and which is included in reduction.

\begin{theorem}
\label{thm:confluence}
Reduction $\rw$ is confluent.
\end{theorem}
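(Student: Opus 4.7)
The plan is to combine the preceding lemmas into a standard diamond-via-parallel-reduction argument, with the twist that the diamond property sits at the level of complete reduction $\rwp$ between affine normal forms, and the bridge back to $\rw$ passes through affine normalization.

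First, I would lift Lemma~\ref{lem:aff-complete} from single to multi-step reductions. A straightforward induction on the length of $\term M \rws \term N$ gives $\term{M_\aff} \rwps \term{N_\aff}$, where $\term{M_\aff}$ is the (unique, by Lemma~\ref{lem:aff}) affine normal form of $\term M$ and $\rwps$ denotes the reflexive--transitive closure of complete reduction.

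Now consider any span $\term P \lws \term M \rws \term Q$. Applying the lifted simulation on each side gives $\term{M_\aff} \rwps \term{P_\aff}$ and $\term{M_\aff} \rwps \term{Q_\aff}$. By Lemma~\ref{lem:complete-diamond}, complete reduction is diamond, and hence its reflexive--transitive closure $\rwps$ is confluent by the standard tiling argument that lifts the diamond property to $\rwps$. This yields a term $\term R$ with $\term{P_\aff} \rwps \term R$ and $\term{Q_\aff} \rwps \term R$. Since a complete step is by definition a parallel duplicating step followed by affine normalization, and parallel duplicating reduction is included in $\rws_\dup$ by Lemma~\ref{lem:dup-complete}, we have $(\rwps) \subseteq (\rws)$. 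Combined with $\term P \rws \term{P_\aff}$ and $\term Q \rws \term{Q_\aff}$ from affine normalization, concatenating reductions closes the diagram: $\term P \rws \term R \lws \term Q$.

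The argument is essentially routine bookkeeping once the preceding lemmas are in place; the substantive technical content has already been discharged in Lemmas~\ref{lem:dup-par} and~\ref{lem:aff-dup}, which jointly underpin the diamond property of $\rwp$. The only point requiring any care is verifying $(\rwps) \subseteq (\rws)$, but this follows immediately from Lemma~\ref{lem:dup-complete} together with the definition of complete reduction. I do not anticipate any substantive obstacles beyond ensuring the right bookkeeping for these inclusions.
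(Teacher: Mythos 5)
Your proposal is correct and follows essentially the same route as the paper's proof: map the span to affine normal forms via Lemma~\ref{lem:aff-complete}, close the diagram using the diamond property of complete reduction (Lemma~\ref{lem:complete-diamond}), and translate back to $\rws$ via Lemma~\ref{lem:dup-complete} and the affine normalization steps. The only difference is that you spell out the lifting of Lemma~\ref{lem:aff-complete} to multi-step reductions and the tiling argument for the diamond property, which the paper leaves implicit.
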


\begin{proof}
Let $\term{P}\lws\term{M}\rws\term{N}$. By Lemma~\ref{lem:aff-complete} there are complete reductions $\term{P_\aff}\lwps\term{M_\aff}\rwps\term{N_\aff}$ where $\term{M_\aff}$, $\term{N_\aff}$, and $\term{P_\aff}$ are the affine normal forms of respectively $\term M$, $\term N$, and $\term P$. The diamond property for complete reduction (Lemma~\ref{lem:complete-diamond}) gives reductions $\term{P_\aff}\rwps\term{Q}\lwps\term{N_\aff}$ for some $\term Q$. Since a parallel step $\rwp_\dup$ corresponds to a duplicating reduction $\rws_\dup$ (Lemma~\ref{lem:dup-complete}), a complete step $\rwp$ corresponds to a reduction $\rws$, which gives the desired converging reductions $\term{P}\rwn_\aff\term{P_\aff}\rws\term{Q}\lws\term{N_\aff}\lwn[\aff]\term{N}$.
\[
\begin{tikzpicture}[x = 30pt , y = -30pt]
	\node (M)  at (0,0) {$\term M$};
	\node (N)  at (2,0) {$\term N$};
	\node (P)  at (0,2) {$\term P$};
	\node (Ma) at (1,1) {$\term{M_\aff}$};
	\node (Na) at (2,1) {$\term{N_\aff}$};	
	\node (Pa) at (1,2) {$\term{P_\aff}$};
	\node (Q)  at (2,2) {$\term Q$};
	\draw[->,rws]  (M)--(N);
	\draw[->,rws]  (M)--(P);
	\draw[->,rwn]  (M)--node[above]{$\scriptstyle\aff$} (Ma);
	\draw[->,rwn]  (N)--node[right]{$\scriptstyle\aff$} (Na);
	\draw[->,rwn]  (P)--node[above]{$\scriptstyle\aff$} (Pa);
	\draw[->,rwps] (Ma)--(Na);
	\draw[->,rwps] (Ma)--(Pa);
	\draw[->,rwps] (Na)--(Q);
	\draw[->,rwps] (Pa)--(Q);
\end{tikzpicture}
\]
\end{proof}


\section{Types}

This section formally introduces the simply typed FMC with control. Types are stratified into four layers: types for terms, vectors of types for stacks, location-indexed families of vectors for memories, and choice-indexed families of memory types as return types. The formal definitions use indexed families directly, while the notation of the informal introduction is given as operations on types or as syntactic sugar.
\[
\begin{array}{lr@{}l@{\qquad\quad}l}
    	\text{Types:}        & \type{r,s,t}  & ~\coloneqq~ \type{!!s => !!tI}	
\\[2pt] \text{Stack types:}  & \type{ !t}    & ~\coloneqq~ \type{t_1..t_n}  
\end{array}
\qquad
\begin{array}{lr@{}l@{\qquad\quad}l}
    	\text{Memory types:} & \type{!!t}    & ~\coloneqq~ \{\type{!t_a}\mid a\in A\} 
\\[2pt] \text{Sum types:}    & \type{!!tI }  & ~\coloneqq~ \{\type{!!t_i}\mid i\in I\}
\end{array}
\]
The base cases are given by empty stack types and memory types, both written $\type\e$, and the empty sum type, written $\type0$. Vectors are composed by juxtaposition $\type{!s!t}$, lifted to families point-wise: $\type{!!s\,!!t}=\{\type{!s_a\,!t_a}\mid a\in A\}$. The singleton family holding $\type{!t}$ at location $a$ and empty elsewhere is written $\type{a(!t)}$. This retrieves the notation $\type{a_1(t_1)..a_n(t_n)}$ for memory types in the introductory sections, and generalizes it to allow type vectors $\type{a_1(!t_1)..a_n(!t_n)}$. 

Sum types are combined by $\type{!!sI+!!tJ}$ where $I\cap J=\varnothing$. To retrieve the notation from the informal development, the notation $\type{!!t.i}$ indicates the singleton family over $\{i\}$ containing only the memory type $\type{!!t}$ for the choice $i$; formally, $\type{!!t.i}$ is $\type{!!t_{\{i\}}}$ where $\type{!!t_i}=\type{!!t}$. That is, $\type{!!t_i}$ is a \emph{memory type} as a member of a family, and $\type{!!t.i}$ is a \emph{sum type} that is a singleton family. Sum types may then be written $\type{!!t_1.i_1+..+!!t_n.i_n}$ as before. Finally, $\type{!!t_{I\setminus i}}$ (respectively $\type{!!t_{I\setminus J}}$) denotes the family $\type{!!tI}$ minus the element $\type{!!t_i}$ (respectively the elements $\type{!!t_j}$ for $j\in J$), if present.

Stack types follow the order of terms on the stack, $\term{MNP:rst}$ for $\term{M:r}$, $\term{N:s}$, and $\term{P:t}$. Since stacks are last-in-first-out, identity terms are of the form $\term{<x>.<y>.<z>.[z].[y].[x].*}$, with the order of pops reversed relative to a given input stack. The convention is then that stack types on the left of an implication are presented in reverse order, i.e.\ the type for this term would be $\type{\lambda(tsr)=>\lambda(rst).\star}$, written $\type{\lambda(!t)=>\lambda(!t).\star}$ for the stack type $\type{!t}=\type{rst}$.

A \emph{context} $\Gamma$ is a finite function from variables to types, written as a sequence $\term{x_1:t_1,,x_n:t_n}$. A \emph{typing judgement} $\term{G |- M:t}$ assigns the type $\type t$ to the term $\term M$ in the context $\Gamma$.

\begin{definition}
The \emph{simply-typed FMC with control} is given by the typing rules in Figure~\ref{fig:types}.
\end{definition}

\begin{figure}
\[
\begin{array}{@{}c@{}}
	   \infer[\sr x]{\term{G , x: t |- x: t}}{}
\qquad \infer[\sr c]{\term{G |- i : \e => \e.i }}{}
\\ \\
\begin{array}{@{}c@{\qquad}c@{\qquad}c@{}}
	  \infer[\sr a]{\term{G |- [N]a.M : ??s => !!tI}}{
		\term{G |- N : r}
	  & \term{G |- M : a(r)\,??s => !!tI}
	  }
&	  \infer[\sr l]{\term{G |- a<x>.M : a(r)\,??s => !!tI}}{\term{G , x:r |- M : ??s => !!tI}}
&     \infer[\sr e]{\term{G |- M: ??r\,??s => (!!s\,!!t)_I}}{\term{G |- M: ??r => !!tI}}
\\ \\
	  \infer[\sr ;]{\term{G |- M;i->N : ??r => !!tI}}{
		\term{G |- M: ??r => !!t_{I\setminus i} + !!s.i}
	  & \term{G |- N: ??s => !!tI}
	  }	  
&     \infer[\sr o]{\term{G |- M^i : ??s => !!tI}}{\term{G |- M: ??s => !!tI + !!s.i}}
&     \infer[\sr i]{\term{G |- M: ??r => !!sI + !!tJ}}{\term{G |- M: ??r => !!sI}}
\end{array}
\end{array}
\]
\caption{The simply-typed FMC with control}
\label{fig:types}
\end{figure}

A few notes on the typing rules. The rules $\SR a$ and $\SR l$ are the equivalent of the rules for application and abstraction for the simply-typed $\lambda$-calculus, since the arrow type $\typ{s->t}$ is interpreted as introducing an additional input type $\typ s$ to the input type vector of $\typ t$. The rule $\SR e$ for \emph{(stack) expansion} extends the input and output memory types of a term by $\type{!!s}$, on every output branch, reflecting the principle of stack calculi that terms may operate on arbitrarily large stacks, returning any additional part untouched. The rule $\SR;$ for $\term{M;i->N}$ requires the output of $\term M$ on choice $i$ to match the \emph{input} of $\term N$, and on any other choice $I\setminus i$ to match the \emph{output} of $\term N$; the output type $\type{!!tI}$ of $\term N$ may or may not have a component $\type{!!t_i}$.

Observe that the typing rules as given are not inductive on terms, due to the rules $\SR e$ (\emph{expansion}) and $\SR i$ (\emph{inclusion}). This gives a simpler presentation and reduces repetition in proofs. Both rules can however be permuted up past the other rules, which means they may instead be integrated into the \emph{variable} and \emph{choice} rules; or they may permuted down to be integrated into \emph{push} and \emph{case} rules.

Next, the basic properties of preservation of types under substitution and reduction are given.

\begin{lemma}[Subject substitution]
\label{lem:substitution}
If $\term{G |- N:s}$ and $\term{G , x:s |- M:t}$ then $\term{G |- \{N/x\}M : t}$.
\end{lemma}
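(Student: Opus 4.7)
The plan is a standard subject substitution proof by induction on the derivation of $\term{G , x:s |- M:t}$, rather than on the structure of $\term M$. Induction on derivations is preferable here because the typing rules $\SR e$ (stack expansion) and $\SR i$ (sum inclusion) are not syntax-directed: they change the type but leave the term fixed. For each of these two cases the induction hypothesis directly gives $\term{G |- \{N/x\}M : t'}$ for the smaller type $\type{t'}$, and one reapplies the same rule to conclude. A weakening lemma (if $\term{G |- N:s}$ then $\term{G , y:r |- N:s}$ for $y$ fresh) will be needed as a routine preliminary; it is itself a straightforward induction on derivations.

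For the base cases, rule $\SR x$ splits on whether the typed variable is $x$ or some other $y$. If it is $x$, then necessarily $\type t = \type s$ and $\term{\{N/x\}x} = \term N$, and the assumption $\term{G |- N:s}$ yields the conclusion directly. If it is $y \neq x$, then $\term{\{N/x\}y} = \term y$ and the judgement $\term{G , x:s |- y:t}$ implies $\term{y:t} \in \term G$, so $\term{G |- y:t}$ by $\SR x$. Rule $\SR c$ is immediate because choice terms are closed and unaffected by substitution.

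The compound cases $\SR a$, $\SR;$, and $\SR o$ are uniform: apply the induction hypothesis to each typed premise, then reapply the original rule; since substitution commutes with the term formers, one obtains $\term{\{N/x\}M}$ with the correct type. The case $\SR l$ for $\term{a<y>.M'}$ is the only one requiring attention to binders: by the Barendregt convention, assume $y$ is distinct from $x$ and does not occur free in $\term N$, so that $\term{\{N/x\}(a<y>.M')} = \term{a<y>.\{N/x\}M'}$. The premise $\term{G , x:s , y:r |- M':??s => !!tI}$ is, after reordering the context, $\term{(G , y:r) , x:s |- M'}$, and weakening gives $\term{G , y:r |- N:s}$, so the induction hypothesis applies and yields $\term{G , y:r |- \{N/x\}M' : ??s => !!tI}$, from which $\SR l$ recovers the goal.

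The mildly non-routine point is simply the combination of the binder-renaming convention in the $\SR l$ case with the need for weakening on $\term N$, together with the observation that $\SR e$ and $\SR i$ cause no genuine trouble despite breaking syntax-directedness. No case involves any new idea beyond the standard pattern for substitution lemmas, so the proof is essentially bookkeeping.
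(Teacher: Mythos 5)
Your proposal is correct and follows exactly the paper's approach: the paper proves this lemma by induction on the typing derivation for $\term M$, which is precisely your strategy, and your case analysis (including the observation that the non-syntax-directed rules $\SR e$ and $\SR i$ pass through harmlessly) fills in the routine details the paper leaves implicit.
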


\begin{proof}
By induction on the typing derivation for $\term M$.
\end{proof}

\begin{proposition}[Subject reduction]
If $\term{G |- M:t}$ and $\term M\rw\term N$ then $\term{G |- N:t}$.
\end{proposition}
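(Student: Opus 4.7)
The proof proceeds by induction on the derivation $\term M \rw \term N$. The inductive (contextual) cases are routine: if the rewrite takes place inside a constructor $\term{[P]a.M'}$, $\term{a<x>.M'}$, $\term{M';i->N'}$, or $\term{(M')^i}$, inversion of the outermost typing rule yields typings for the subterms, the inductive hypothesis replaces the reduced subterm with its reduct at the same type, and the outer typing rule is re-applied. The real work is in the eight base rules.

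The main obstacle is that the rules $\SR e$ (expansion) and $\SR i$ (inclusion) are not syntax-directed: a given term admits many derivations differing in where these structural rules are interleaved. Consequently naive inversion is unsound, because the principal connective of $\type t$ need not be visible at the last rule of the derivation. The plan is to begin with a \emph{permutation lemma} showing that $\SR e$ and $\SR i$ commute with every other rule, and in particular may be pushed down to the leaves or pulled up to the root. This yields the inversion lemmas needed for each constructor, modulo an outer block of expansion/inclusion steps that is faithfully re-applied to the reduct. (This is exactly the observation flagged by the author in the commentary just after Figure~\ref{fig:types}.) Once these inversion lemmas are in place, each base case is a mechanical rearrangement.

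For \emph{beta}, inversion of $\term{[N]a.a<x>.M:t}$ through $\SR a$ and $\SR l$ gives a type $\type r$ with $\term{G |- N:r}$ and $\term{G, x{:}r |- M:t'}$, and Lemma~\ref{lem:substitution} yields $\term{G |- \{N/x\}M:t'}$, whence $\type t$ is restored via the same outer structural steps. \emph{Passage} is symmetric: since $a\neq b$ the memory types commute, and since $x\notin\fv(\term N)$ the context for $\term N$ is unaffected, so the derivation is reassembled by swapping the order of $\SR a$ and $\SR l$. \emph{Select} and \emph{reject} use the fact that $\term i$, by $\SR c$ and $\SR e,\SR i$, admits any type $\type{??s => !!s.i + !!tJ}$; inversion of the case then either forwards $\term{M}$ at the common output type (select) or preserves $\term i$'s $i$-branch which must be an available summand of the output (reject). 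The \emph{prefix} and \emph{associate} rules are closed by applying the rules $\SR l$, $\SR a$, and $\SR;$ in the opposite order, using the side-conditions to justify that free variables and location channels are respected.

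The distinctive case is \emph{unroll}, $\term{M^i}\rw\term{M;i->M^i}$. A derivation of $\term{G |- M^i:??s => !!tI}$ ends (after structural steps) in $\SR o$ with premise $\term{G |- M:??s => !!tI + !!s.i}$, where implicitly $i\notin I$. But the very typing $\term{G |- M^i:??s => !!tI}$ is itself a valid premise, so $\SR;$ applied to $\term M$ and $\term{M^i}$ produces $\term{G |- M;i->M^i : ??s => !!tI}$ immediately. I expect the hardest part of the formalisation to be the bookkeeping for the inversion lemmas in the presence of $\SR e$ and $\SR i$; once that scaffolding is set up, every base case is a short calculation.
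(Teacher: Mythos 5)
Your proof is correct, but it is organised around a different induction than the paper's. You induct on the derivation of $\term M\rw\term N$, treating the contextual closure as the routine part and the eight root rules as the real cases; this forces you to confront the non-syntax-directedness of $\SR e$ and $\SR i$ up front, via a permutation/inversion lemma. The paper instead inducts on the \emph{typing} derivation, with top-level reduction steps as base cases: this makes an application of $\SR e$ or $\SR i$ at the root of the derivation a trivial inductive case (apply the hypothesis to the premise, which types the same term, and re-apply the rule), and likewise absorbs the contextual cases into the constructor rules. What the paper's decomposition buys is that the structural rules only need to be permuted out of the \emph{premises} of the final rule in the base cases, not characterised globally; what your decomposition buys is an explicit, reusable inversion principle, which is arguably the more honest account since the paper itself flags (in the commentary after Figure~\ref{fig:types}) that $\SR e$ and $\SR i$ permute past the other rules, and some such fact is needed either way. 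Your treatment of the individual base cases --- beta via Lemma~\ref{lem:substitution}, passage via commutation of memory types on distinct locations plus weakening for $x\notin\fv(\term N)$, select/reject via the characterisation of derivable types for $\term i$ as $\type{??s => !!s.i + !!tJ}$, and unroll by feeding the conclusion $\term{G |- M^i:??s=>!!tI}$ back in as the second premise of $\SR;$ --- matches what the paper's one-line proof implicitly requires, and the unroll observation in particular is exactly right.
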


\begin{proof}
By induction on the typing derivation for $\term M$, with top-level reduction steps as base cases, and using the subject substitution lemma (Lemma~\ref{lem:substitution}) in the case of a \emph{beta} step.
\end{proof}

To demonstrate that machine evaluation preserves types, the type system is extended to stacks, memories, and machine states by the rules in Figure~\ref{fig:extra-types}. Continuation stacks will have types $\type{!!sI => !!tJ}$, extending the grammar of types. 

\begin{proposition}[Machine evaluation preserves types]
For a typed state $\term{|- +{S_A}MK : \e => !!tI}$, if
\[
	\step {S_A}MK {T_A}NL
\]
then $\term{|- +{T_A}NL : \e => !!tI}$.
\end{proposition}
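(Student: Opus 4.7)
The plan is to prove the statement by case analysis on the six transition rules of Definition~\ref{def:machine}, since the claim concerns a single machine step rather than a full run. For each rule I would invert the typing derivation of the source state to recover typings of its three components---the memory $S_A$, the active term $\term M$, and the continuation stack $K$---and then reassemble a derivation of the same type $\type{\e => !!tI}$ for the target state, using the typing rules of Figure~\ref{fig:types} together with the rules for memories and continuation stacks in Figure~\ref{fig:extra-types}.

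A preliminary issue is that the term-typing rules $\SR e$ (expansion) and $\SR i$ (inclusion) are not syntax-directed. As the text after Figure~\ref{fig:types} notes, however, both can be permuted past the syntax-directed rules, so any typing derivation for a compound term can be normalized to end in the constructor rule for its outermost symbol, optionally followed by applications of $\SR e$ and $\SR i$. This yields inversion lemmas for each syntactic form that absorb any extra input memory or additional summands into the premises of the constructor rule.

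With inversion in hand the six cases divide naturally. \emph{Push} follows by inverting $\SR a$ on $\term{[N]a.M}$, giving $\term{N:r}$ and $\term{M: a(r)\,??s => !!rJ}$, and re-typing the extended memory $S_A\,a(\term N)$. \emph{Pop} is the only case requiring a non-trivial lemma: inversion of $\SR l$ gives $\term{G , x:r |- M : ??s => !!rJ}$, and Subject Substitution (Lemma~\ref{lem:substitution}) with $\term{N:r}$ read off from the memory typing produces $\term{\{N/x\}M: ??s => !!rJ}$. The \emph{case} and \emph{loop} transitions move structure from the term into the continuation stack, and inversion of $\SR;$ or $\SR o$ supplies exactly the pair of types needed to push a new conditional continuation onto $K$. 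The two \emph{choice} transitions are dual: inversion of $\SR c$ on $\term i$ together with the existing typing of $(\term{i->M})\,K$ or $(\term{j->M})\,K$ gives the required type on the target.

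The main obstacle I expect is the \emph{loop} case $(S_A, \term{M^i}, K) \to (S_A, \term M, (\term{i->M^i})\,K)$, because the pushed continuation refers back to the entire loop term, so the typing must close up self-consistently. Inversion of $\SR o$ gives $\term{M: ??s => !!rJ + !!s.i}$, which types $\term M$ in the target directly, while a second application of $\SR o$ is required to retype $\term{M^i: ??s => !!rJ}$ inside the conditional continuation, so that $(\term{i->M^i})\,K$ accepts any output of $\term M$ and delivers the original return type. Once this recursive bookkeeping is verified the remaining cases are direct, and no induction is required beyond the single-step case analysis.
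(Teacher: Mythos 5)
Your proposal is correct and matches the paper's proof, which is simply stated as ``by inspection of the machine transitions'': a case analysis on the six transition rules, inverting the state typing and reassembling it for the target state. The details you supply --- permuting the non-syntax-directed rules $\SR e$ and $\SR i$ to obtain inversion, invoking Subject Substitution (Lemma~\ref{lem:substitution}) for the pop transition, and re-deriving $\term{M^i}$'s type via a second use of $\SR o$ inside the pushed continuation for the loop transition --- are exactly the ones the paper leaves implicit.
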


\begin{proof}
By inspection of the machine transitions.
\end{proof}

\begin{proposition}[Machine progress]
\label{prop:progress}
A typed state is either final or has a machine step.
\end{proposition}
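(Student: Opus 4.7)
The plan is to proceed by case analysis on the shape of the term $\term M$ in the state $(S_A,\term M,K)$, showing that each of the six syntactic forms either triggers a machine transition or leads to a final state, and that the two failure patterns are incompatible with the typing hypothesis.

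For the three cases $\term M=\term{[N]a.M'}$, $\term M=\term{N;i->M'}$, and $\term M=\term{M'^i}$, a machine transition applies unconditionally, independently of $S_A$ and $K$, so nothing needs to be checked beyond inspecting the transition table in Definition~\ref{def:machine}. The variable failure case $\term M=\term x$ is immediately ruled out by the typing hypothesis: the judgement $\term{|- +{S_A}MK:\e=>!!tI}$ has empty variable context, so the only typing rule applicable to a variable, $\SR x$, cannot derive it, hence $\term M$ is closed.

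For $\term M=\term{a<x>.M'}$, the machine transitions provided the stack $S_a$ is non-empty, so the failure pattern is precisely the case $S_a=\e$. Here the argument needs to use the typing rules for states in Figure~\ref{fig:extra-types}: a derivation of $\term{|- +{S_A}MK:\e=>!!tI}$ forces $\term{|- M:??s=>!!uJ}$ for some memory type $??s$ such that the memory $S_A$ can be typed by $??s$ (after possible expansion) and the continuation $K$ has type $\type{!!uJ=>!!tI}$. By inspection of $\SR l$ (together with $\SR e$ and $\SR i$, which do not change the head locations of the input type), the input memory type of $\term{a<x>.M'}$ has $r$ at the top of location $a$, so $??s$ contains $a(r)$ at the head, and hence the typing of $S_A$ against $??s$ requires $S_a$ to contain at least one term of type $r$. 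Thus $S_a\neq\e$ and the pop step applies. Finally, for $\term M=\term i$, either $K=\e$ and the state is final by definition, or $K=(\term{j->N})\,K'$ and the \emph{select} or \emph{reject} transition fires according to whether $i=j$ or $i\neq j$.

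The main obstacle is the pop case: it requires unfolding the composite state-typing judgement into its three component typings and observing that the input memory type of a pop term forces the corresponding memory location to be non-empty. All other cases are either immediate from the transition table or ruled out syntactically by the emptiness of the variable context.
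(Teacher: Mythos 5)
Your proof is correct and follows essentially the same route as the paper's: the paper simply recalls that every state is final, a failure state (free variable, or pop on an empty stack), or has a transition, and notes that the two failure patterns are excluded by typing. You spell out the details the paper leaves implicit --- in particular the unfolding of the state-typing rule and the observation that $\SR e$ and $\SR i$ preserve the head of the input memory type at each location --- which is exactly the reasoning the paper's one-line proof relies on.
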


\begin{proof}
Recall that states are either final, have a machine step, or are failure states, which are those with a free variable $\term x$ as term or with an abstraction $\term{a<x>.M}$ and an empty stack $\e_a$ in the memory. Both cases are ruled out by the type system.
\end{proof}

\begin{figure}
\[
\begin{array}{c@{\qquad}c}
  \infer[\sr s]{\term{|- \e : \e}}{}
& \infer[\sr k]{\term{|- \e : !!tI => !!tI}}{}
\\ \\  
  \infer[\sr S]{\term{|- `{\black S}\,M : !s\,t}}{\term{|- `{\black S}: !s} && \term{|- M:t}}
& \infer[\sr K]{\term{|- (i -> M)\,`{\black K} : !!r.i + !!s_{I\setminus i} => !!tJ}}{\term{|- M : !!r => !!sI + !!t_{J\setminus I}} && \term{|- `{\black K} : !!sI => !!tJ}}
\\ \\
  \infer[\sr M]{\term{|- `{\black S_A} : \{!s_a\mid a\in A\}}}{\{\term{|- `{\black S_a} : !s_a}\}_{a\in A}}
& \infer[\sr +]{\term{|- +{S_A}MK : \e => !!tJ}}{\term{|- S_A : !!r} & \term{|- M : !!r => !!sI + !!t_{J\setminus I}} & \term{|- K : !!sI => !!tJ}}
\end{array}
\]
\caption{Extended types for stacks, memories, and states}
\label{fig:extra-types}
\end{figure}

In the FMC, unlike the $\lambda$-calculus, all types are inhabited. A \emph{zero} term for a type $\type{t}=\type{!!s=>!!tI}$ will evaluate on the machine by discarding a memory of $\type{!!s}$, and for some $i\in I$ returning a memory of zero-terms of type $\type{!!t_i}$. Zero terms are formally defined as follows: if $I$ is non-empty, select $i\in I$ and define:
\[
	\term{0_{\typ t}} = \term{<\__{\typ{!!s}}>.[0_{\typ{!!t_i}}].i}
\]
where $\term{<\__{\typ{!!s}}>}$ is a sequence of non-binding abstractions $\term{a_1<\_>\dots a_n<\_>}$ matching $\type{!!s}=\type{a_1(s_1)..a_n(s_n)}$, and $\term{[0_{\typ{!!t}}]}$ is a sequence of applications $\term{[0_{\typ{t_1}}]a_1\dots[0_{\typ{t_n}}]a_n}$ matching $\type{!!t_i}=\type{a_1(t_1)..a_n(t_n)}$. For zero terms where $I$ is empty, where $\type{t}=\type{!!s=>0}$, let $\type{t'}=\type{!!s=>!!s.\star}$ and define the zero term as the loop $\term{0_{\typ t}}=\term{(0_{\typ{t'}})^\star}$. Note that for a type $\type{!!s=>!!tI}$ and a given $i\in I$, zero-terms are equivalent by the permutations $\term{a<x>.b<y>.M}\sim\term{b<y>.a<x>.M}$ and $\term{[P]a.[N]b.M}\sim\term{[N]b.[P]a.M}$ where $a\neq b$.

\begin{proposition}[Type inhabitation]
\label{prop:inhabitation}
Every type $\type t$ is inhabited by a zero term $\term{|- 0_{\typ t}:t}$.
\end{proposition}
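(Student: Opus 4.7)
The plan is to prove inhabitation by structural induction on the type $\type{t}=\type{!!s=>!!tI}$, measured (say) by the total number of $\type{=>}$ symbols it contains. Since the definition of $\term{0_{\typ t}}$ splits on whether the outer sum $I$ is empty, the proof follows the same split, building a typing derivation directly from the rules of Figure~\ref{fig:types}.

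In the non-empty case, pick the designated $i\in I$ and consider $\term{0_{\typ t}}=\term{<\__{\typ{!!s}}>.[0_{\typ{!!t_i}}].i}$. I would derive its type from the inside out. Start from $\term{|- i:\e=>\e.i}$ by the choice rule $\SR c$ and widen the output sum to $\type{\e=>!!tI}$ using the inclusion rule $\SR i$ (filling in the remaining $\type{!!t_k}$ for $k\in I\setminus i$). Next, for each entry of $\type{!!t_i}=\type{a_1(t_1)..a_n(t_n)}$, alternate an expansion step $\SR e$ (to match the stack on the input side) with a push $\SR a$ applied to the induction hypothesis, which provides $\term{|- 0_{\typ{t_k}}:t_k}$ since each $\typ{t_k}$ is a strict subformula of $\type{t}$. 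After all pushes the term has type $\type{\e=>!!tI}$. Finally, for each entry of $\type{!!s}=\type{a_1(s_1)..a_m(s_m)}$, apply the pop rule $\SR l$ with a non-binding abstraction, yielding $\type{!!s=>!!tI}$.

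For the empty case, $\type{t}=\type{!!s=>0}$ and $\term{0_{\typ t}}=\term{(0_{\typ{t'}})^\star}$ with $\type{t'}=\type{!!s=>!!s.\star}$. I would unfold $\term{0_{\typ{t'}}}$ by the non-empty case construction, noting that this requires inhabitants only for types appearing within $\type{!!s}$, all of which are strict subformulas of $\type{t}$ and so are covered by the induction hypothesis. This gives $\term{|- 0_{\typ{t'}}:!!s=>!!s.\star}$, which is exactly $\type{!!s=>0+!!s.\star}$. One application of the loop rule $\SR o$ then yields $\term{|- (0_{\typ{t'}})^\star:!!s=>0}$, as required.

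The main subtlety is that $\type{t'}$ is not a structural subformula of $\type{t}$: it duplicates $\type{!!s}$ on its output side, and by any naive size measure it is larger than $\type{t}$, so one cannot simply invoke the induction hypothesis at $\type{t'}$. The fix is to inline the construction of $\term{0_{\typ{t'}}}$ at its definition rather than treating it as an independent inductive call; this way the induction hypothesis is only ever applied to the constituent types of $\type{!!s}$, which are genuine subformulas of $\type{t}$. This keeps the recursion well-founded and is really the only delicate point in an otherwise routine verification.
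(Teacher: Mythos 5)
Your overall strategy coincides with the paper's (which proves this by a one\-/line induction on the type), and you have correctly isolated and resolved the one genuine subtlety: in the empty\-/sum case $\type{t}=\type{!!s=>0}$ the auxiliary type $\type{t'}=\type{!!s=>!!s.\star}$ is not a subformula of $\type t$, and inlining its construction so that the induction hypothesis is invoked only at the constituent types of $\type{!!s}$ is exactly what keeps the recursion well\-/founded.

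There is, however, a local error in the order in which you apply the rules in the non\-/empty case. You cannot ``widen the output sum to $\type{\e=>!!tI}$'' by applying $\SR i$ to $\term{i:\e=>\e.i}$: inclusion only adds new summands, it does not change the existing $i$\-/summand from $\type{\e}$ to $\type{!!t_i}$, so what you actually obtain is $\type{\e => \e.i + !!t_{I\setminus i}}$. The subsequent alternation of $\SR e$ and $\SR a$ then goes wrong: expansion prefixes \emph{every} summand of the output sum with the expanded memory type, so after pushing all of $\type{!!t_i}$ the summands for $k\in I\setminus i$ have become $\type{!!t_i\,!!t_k}$ rather than $\type{!!t_k}$, and the derivation does not land on $\type{!!s=>!!tI}$. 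The repair is to postpone inclusion: derive $\term{i: !!t_i => !!t_i.i}$ by $\SR c$ followed by a single expansion, discharge the inputs with the pushes $\SR a$ (using the induction hypothesis for each $\term{0_{\typ{t_k}}}$) to reach $\type{\e => !!t_i.i}$, then apply $\SR i$ to add the summands $\type{!!t_{I\setminus i}}$, and finish with the pops $\SR l$ to obtain $\type{!!s=>!!tI}$. The paper's remark that $\SR e$ and $\SR i$ permute past the other rules is precisely the freedom needed here; with that reordering your argument goes through.
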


\begin{proof}
By induction on the type $\type t$.
\end{proof}


\section{Machine termination}
\label{sec:termination}

For terms without loops, types guarantee termination of the machine. By this, the following is meant.

\begin{definition}
A term $\term M$ is \emph{terminating} if $\evl{S_A}M{T_A}i$ for some memories $S_A$, $T_A$ and choice $i$.
\end{definition}

The typed notion, proved below, is stronger. For a term $\term{M:!!s=>!!tI}$, for \emph{any} memory $S_A:\type{!!s}$ of the correct type there are $i\in I$ and $T_A:\type{!!t_i}$ such that $\evl{S_A}M{T_A}i$. Type inhabitation guarantees that a suitable input memory $S_A:\type{!!s}$ exists, so that the typed notion implies the untyped definition above. 

This will be proved using the standard Tait reducibility technique~\cite{Tait-1967}. Each type $\type t$ is associated with a set $\RUN{t}$ of terminating terms, here called the \emph{runnable} terms by analogy to the usual \emph{reducible} terms of strong normalization arguments. It is then shown by induction on typing derivations that every typed term is runnable, and hence terminating.

\begin{definition}
The set $\RUN{t}$ of \emph{runnable terms} for a type $\type t$ is defined as the set of closed terms
\[
\RUN{!!s => !!tI} = \{ \term M ~\mid~ \forall S_A\in\RUN{!!s}.~\exists i\in I.~\exists T_A\in\RUN{!!t_i}.~\evl{S_A}M{T_A}i~\}
\]
where the runnable sets for memory types $\type{!!t}$ and stack types $\type{!t}$ are as follows.
\[
\begin{aligned}
    \RUN{\{!t_a\mid a\in A\}} &= \{ S_A \mid \forall a\in A.~S_a\in\RUN{!t_a} \}
&&& \RUN{t_1..t_n}            &= \{ \e\,\term{M_1}\dots \term{M_n} \mid \term{M_i}\in\RUN{t_i} \}
\end{aligned}
\]
\end{definition}

To work with open terms, a \emph{substitution map} $\subs$ is a finite function from variables to terms, applied to terms as $\term{\subs M}$ as a simultaneous substitution for the variables in its domain. Denote by $\subs\{\term M/x\}$ the map that assigns $\term M$ to $x$ and is as $\subs$ for other variables. Then $\SEQ{G}$ associates a context $\Gamma$ with the substitution maps over runnable terms of the types in $\Gamma$:
\[
	\SEQ{x_1:t_1,,x_n:t_n} = \{\subs\mid \term{\subs x_i}\in\RUN{t_i}\}
\]

The next lemma will show that any loop-free, typed term is runnable. It assumes return types to be non-empty, i.e.\ types are not of the form $\type{!!s=>0}$. The proof is then a direct induction on typing derivations.

\begin{lemma}[Typed terms are runnable]
\label{lem:runnable}
If $\term{G |- M:t}$ then $\term{\subs M}\in\RUN{t}$ for any $\subs\in\SEQ{G}$.
\end{lemma}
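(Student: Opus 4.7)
The plan is to proceed by induction on the typing derivation $\term{G |- M:t}$, covering every rule except the loop rule $\SR o$; this exclusion implicitly restricts the lemma to loop-free terms, consistent with the stated scope of machine termination. Each case combines the induction hypothesis on the premises with the corresponding evaluation rule and the definition of runnability.

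The atomic cases are immediate: for $\SR x$, $\term{\subs x}$ is runnable by the definition of $\SEQ{G}$; for $\SR c$, the choice term $\term i$ evaluates on the empty memory to itself with choice $i$, giving membership in $\RUN{\e=>\e.i}$. For the push rule $\SR a$, given $R_A \in \RUN{!!s}$, the induction hypothesis on $\term N$ gives $\term{\subs N}\in\RUN{r}$, so $R_A\msep a(\term{\subs N})\in\RUN{a(r)\,!!s}$; the induction hypothesis for $\term{\subs M}$ then supplies an evaluation that lifts, via the push evaluation rule, to the required evaluation of $\term{\subs([N]a.M)}$. The pop rule $\SR l$ is dual: given $U_A \in \RUN{a(r)\,!!s}$, decompose $U_A = S_A\msep a(\term N)$ with $\term N\in\RUN{r}$ and $S_A\in\RUN{!!s}$; after $\alpha$-renaming $x$ away from $\subs$, extend to $\subs' = \subs\{\term N/x\}\in\SEQ{G,x:r}$; the induction hypothesis yields an evaluation of $\term{\subs' M} = \term{\{N/x\}\subs M}$, which lifts via the pop evaluation rule.

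The case rule $\SR{;}$ requires a split on the choice returned by $\term{\subs M}$. The induction hypothesis produces an evaluation to some memory with choice $j$, where $j\in I\setminus\{i\}$ or $j=i$. If $j\neq i$, the output memory is already runnable at the required type $\type{!!t_j}$, and the reject-style case evaluation rule delivers the result directly with $j\in I$. If $j=i$, the output memory is runnable at $\type{!!s}$, so the induction hypothesis on $\term{\subs N}$ supplies the continuation, composed via the select-style case rule. The expansion rule $\SR e$ decomposes a runnable memory of type $\type{!!r\,!!s}$ pointwise into the $\type{!!r}$-part expected by the original type and the added $\type{!!s}$-part, applies the induction hypothesis on the former, and uses the stack-expansion property of evaluation — that a run on a larger memory leaves the extra part unchanged in every branch — to extend this to the full memory. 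The inclusion rule $\SR i$ is trivial, since a witness $i\in I$ remains a witness in $I\cup J$.

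The main obstacle is bookkeeping in two places. In the pop case, pushing the substitution through the binder relies on the standard $\alpha$-renaming precaution so that $\term{\subs(a<x>.M)} = \term{a<x>.\subs M}$ and, at the next step, $\term{\subs(\{N/x\}M)} = \term{\{N/x\}\subs M}$. In the expansion case, the stack-expansion property of the big-step relation $(\evalarrow)$ must be invoked; this is not stated as a standalone lemma in the excerpt, but follows by a routine induction on the evaluation rules, each of which touches only specified locations and leaves others intact. The implicit assumption that return sum types are non-empty is used throughout to supply the witness choice $i\in I$ demanded by runnability, and propagates from the base case $\SR c$ upward through the induction.
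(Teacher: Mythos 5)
Your proposal is correct and follows essentially the same route as the paper's proof: induction on the typing derivation, with the same case analysis (including the split on the returned choice for $\SR;$, the $\alpha$-renaming bookkeeping for $\SR l$, and the auxiliary stack-expansion property of $(\evalarrow)$ for $\SR e$, which the paper likewise dispatches by induction on the evaluation rules). The observations about loop-freeness and non-empty return sums also match the paper's stated assumptions.
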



\begin{proof}
By induction on the typing derivation for $\term{G |- M:t}$.

\begin{itemize}
	\item\emph{Variable:}
\[
\infer[\sr x]{\term{G , x: t |- x: t}}{}
\]
For any $\subs\in\SEQ{G , x:t}$ by definition $\term{\subs x}\in\RUN{t}$.

	\item\emph{Push:}
\[
\infer[\sr a]{\term{G |- [N]a.M : ??s => !!tI}}{
		\term{G |- N : r}
	  & \term{G |- M : a(r)\,??s => !!tI}
	  }
\]
Let $\subs\in\SEQ{G}$ and $S_A\in\RUN{!!s}$. By the inductive hypothesis, $\term{\subs N}\in\RUN{r}$, so that $S_A\msep a(\term{\subs N})\in\RUN{!!s\,a(r)}$. Again by the inductive hypothesis, $\term{\subs M}\in\RUN{a(r)\,??s=>!!tI}$, which means it evaluates as $\eval{S_A\msep a(\term{\subs N})}{\subs M}{T_A}i$ for some $i\in I$ and $T_A\in\RUN{!!t_i}$. Then $\eval{S_A}{[\subs N]a.\subs M}{T_A}i$ by the definition of $(\evalarrow)$, which gives $\term{\subs([N]a.M)}=\term{[\subs N]a.\subs M}\in\RUN{??s => !!tI}$.

	\item\emph{Pop:}
\[
\infer[\sr l]{\term{G |- a<x>.M : a(r)\,??s => !!tJ}}{\term{G , x:r |- M : ??s => !!tI}}
\]
Let $\subs\in\SEQ{G}$ and $S_A\msep a(\term N)\in\RUN{!!s\,a(r)}$, and assume by $\alpha$-equivalence that $x$ is not in the domain of $\Gamma$. Then $S_A\in\RUN{!!s}$ and $\subs\{\term{N}/x\}\in\SEQ{G,x:r}$. Since $\term N$ is closed, $\term{(\subs\{N/x\})M}=\term{\subs(\{N/x\}M)}$. By the inductive hypothesis, $\term{\subs(\{N/x\}M)}\in\RUN{??s => !!tI}$ so that $\eval{S_A}{\subs(\{N/x\}M)}{T_A}i$ for some $i\in I$ and $T_A\in\RUN{!!t_i}$. By the definition of $(\evalarrow)$ it follows that $\eval{S_A\msep a(\term N)}{\subs(a<x>.M)}{T_A}i$ and hence $\term{\subs(a<x>.M)}\in\RUN{a(r)\,??s => !!tI}$.

	\item\emph{Choice:}
\[
\infer[\sr c]{\term{G |- i : \e => \e.i }}{}
\]
Note that the empty memory $\e\in\RUN{\e}$ is the only inhabitant of $\RUN{\e}$, and that $\term{\subs i}=\term i$ for any substitution map $\subs$. Since $\eval\e i\e i$ it follows that $\term i\in\RUN{\e => \e.i}$.

	\item\emph{Case:}
\[
\infer[\sr ;]{\term{G |- N;i->M : ??r => !!tI}}{
		\term{G |- N: ??r => !!t_{I\setminus i} + !!s.i}
	  &&\term{G |- M: ??s => !!tI}
	  }
\]
Let $\subs\in\SEQ{G}$ and $R_A\in\RUN{!!r}$. The inductive hypothesis gives:
\[
		\term{\subs N}\in\RUN{??r => !!t_{I\setminus i}+!!s.i} \qquad\text{and}\qquad  \term{\subs M}\in\RUN{??s => !!tI}
\]
For the former, there are two cases:
\[
	\eval{R_A}{\subs N}{S_A}i  \qquad \text{or} \qquad \eval{R_A}{\subs N}{T_A}j \quad\text{where}\quad j\in I,~i\neq j
\]
In the first case, $S_A\in\RUN{!!s}$ and hence $\eval{S_A}{\subs M}{T_A}k$ for some $k\in I$ and $T_A\in\RUN{!!t_k}$. The definition of $(\evalarrow)$ gives the evaluation below left. For the second case there is the evaluation below right. It follows that $\term{\subs(M;i->N)}\in\RUN{??r => !!tI}$.
\[
	\bigstep
	  { \eval {R_A}  M       {S_A} i \quad
	    \eval {S_A}  N       {T_A} k }
	  { \eval {R_A} {M;i->N} {T_A} k }
\qquad
\qquad
	\bigstep
	  { \eval {R_A}  M       {T_A} j }
	  { \eval {R_A} {M;i->N} {T_A} j }
\]

	\item\emph{Expansion}
\[
\infer[\sr e]{\term{G |- M: ??r\,??s => (!!s\,!!t)_I}}{\term{G |- M: ??r => !!tI}}
\]
By induction on $(\evalarrow)$, if $\evl{R_A}M{T_A}j$ then $\evl{S_AR_A}M{S_AT_A}i$. The case is then immediate.

	\item\emph{Inclusion}
\[
\infer[\sr i]{\term{G |- M: ??r => !!sI + !!tJ}}{\term{G |- M: ??r => !!sI}}
\]
Immediate since a return memory $S_A\in\RUN{!!s_i}$ for a choice $i\in I$ is also one for $i\in I\cup J$.
\end{itemize}
\end{proof}

The theorem is then as follows.

\begin{theorem}[Machine termination]
\label{thm:termination}
Any loop-free, typed term $\term{|- M:!!s=>!!tI}$ with a zero-free type is terminating.
\end{theorem}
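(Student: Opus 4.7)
The plan is a standard and short application of the reducibility machinery just developed in Lemma~\ref{lem:runnable}. Since $\term M$ is closed (the judgement $\term{|- M:!!s=>!!tI}$ has empty context), the empty substitution map $\subs$ lies in $\SEQ{\varnothing}$, so Lemma~\ref{lem:runnable} directly yields $\term M = \term{\subs M} \in \RUN{!!s=>!!tI}$. To convert this membership into termination I only need to exhibit some input memory $S_A \in \RUN{!!s}$; then unfolding the definition of $\RUN{!!s=>!!tI}$ produces $i\in I$ and $T_A\in\RUN{!!t_i}$ with $\evl{S_A}M{T_A}i$, and Proposition~\ref{prop:operational} promotes this big-step evaluation to a successful machine run $\steps{S_A}M\varnothing{T_A}i\varnothing$, which witnesses termination in the sense of the stated definition.

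The remaining step is to supply the input memory $S_A$. The natural candidate is a memory built out of zero terms (Proposition~\ref{prop:inhabitation}), one for each type in $\type{!!s}$. The only thing to verify is that zero terms themselves are runnable at their type, which goes by a straightforward induction on types: for a zero-free type $\type{!!s'=>!!tI'}$ the zero term $\term{0_{\typ t}} = \term{<\__{\typ{!!s'}}>.[0_{\typ{!!t_i}}].i}$ evaluates on \emph{any} input memory in $\RUN{!!s'}$ by discarding it, pushing component zero terms to the output memory, and exiting on the selected choice $i\in I'$; by the inductive hypothesis the pushed zero terms lie in $\RUN{!!t_i}$ for their respective components, so the returned memory is in $\RUN{!!t_i}$, making $\term{0_{\typ t}}$ runnable.

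The main subtlety, and the likely obstacle, is the meaning of the \emph{zero-free} hypothesis. Both Lemma~\ref{lem:runnable} and the runnability of zero terms require \emph{every} return-type sum $\type{!!tI'}$ occurring inside the type of $\term M$ to be non-empty, not only the outermost one: a subterm with type $\type{!!s'=>0}$ would force the chosen $i\in I'$ in the runnable definition to come from an empty set, breaking both the induction in Lemma~\ref{lem:runnable} and the induction showing zero terms are runnable. The hypothesis ``zero-free type'' should therefore be read uniformly: every return type sum appearing in $\type{!!s=>!!tI}$ (including those inside pushed thunks) is non-empty. With that reading the two inductions go through and the theorem follows immediately from the three ingredients above: Lemma~\ref{lem:runnable}, inhabitation by runnable zero terms, and Proposition~\ref{prop:operational}.
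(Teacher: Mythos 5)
Your proposal is correct and follows essentially the same route as the paper: Lemma~\ref{lem:runnable} for $\term M$, Proposition~\ref{prop:inhabitation} to build the input memory from zero terms, and the definition of runnability to extract the evaluation (the paper's definition of \emph{terminating} is already in terms of $\evalarrow$, so the detour through Proposition~\ref{prop:operational} is harmless but unneeded). The only cosmetic difference is that you establish runnability of the zero terms by a separate induction on types, whereas the paper simply applies Lemma~\ref{lem:runnable} to them directly (they are closed, typed, and loop-free when the type is zero-free in the uniform sense you correctly identify, which is exactly the non-empty-sum assumption stated before that lemma).
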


\begin{proof}
By type inhabitation there is a loop-free memory $S_A:\type{!!s}$. By Lemma~\ref{lem:runnable}, $\term{M}\in\RUN{!!s=>!!tI}$ and $S_A\in\RUN{!!s}$. By the definition of runnable terms, $\evl{S_A}M{T_A}i$ for some $i$ and $T_A$.
\end{proof}


\section{Strong normalization}
\label{sec:normalization}

This section will prove that typed, loop-free terms are strongly normalizing. The proof is centered around the idea that beta-reduction shortens a run of the machine by eliminating consecutive push- and pop-transitions. This gives a proof in two stages. First, a reducibility argument along the lines of that for machine termination establishes that typed terms have a finite evaluation on the machine, for which the \emph{measured evaluation} relation defined below counts the number of pop-transitions. Second, it is shown that \emph{beta}-reduction strictly reduces this measure, while affine reduction does not increase it. Since affine reduction is inherently normalizing, this establishes typed strong normalization.

The first stage needs to overcome the obstacle that terms that are discarded without evaluation must somehow be measured as well. This is familiar from many strong normalization proofs, such as those deriving strong normalization from weak normalization~\cite{Nederpelt-1973,Klop-1980} and those using perpetual reduction strategies~\cite{Bergstra-Klop-1982,Raamsdonk-Severi-Sorensen-Xi-1999}, and can be distinguished in the combinatoric arguments of many others. Here, it is addressed by requiring machine evaluation of all subterms: for an application $\term{[N]a.M}$ the argument $\term N$ is evaluated as well as pushed to the stack, and for a case $\term{M;i->N}$ the continuation $\term N$ is evaluated even when $\term M$ terminates with $j\neq i$. Evaluating such terms requires an input stack, which for typed terms is guaranteed by type inhabitation (Proposition~\ref{prop:inhabitation}).

The proof thus comprises a \emph{logical} part, using abstract reducibility, and a \emph{combinatorial} part, measuring abstraction steps in machine evaluation and demonstrating that the measure reduces under \emph{beta}-reduction. This may be viewed as a decomposition of the proof for the previous iteration of the FMC~\cite{Barrett-Heijltjes-McCusker-2023}, which followed the style of Gandy's proof~\cite{Gandy-1980,Fuhs-Kop-2012}, computing the same measure directly from typing derivations. The key to this decomposition is type inhabitation: Gandy's proof, which maps terms onto domains of monotone functionals over integers, relies on the existence of minimal elements, essentially higher-order versions of the constant zero function, to provide input to any function. Here, because all types are inhabited, zero-terms $\term{0}_\type{t}$ may fulfil that r\^ole.

This section will again assume loop-free terms (i.e.\ not containing the loop construct $\term{M^i}$) and non-empty sum types (i.e.\ $I\neq\varnothing$ for any $\type{!!tI}$).

\begin{definition}
\emph{Measured evaluation} $\evl[n]{S_A}M{T_A}i$ is defined inductively as follows, where $i\neq j$.
\[
\bigstep{}{\eval[0]{S_A} i {S_A} i}
\quad
\begin{array}{@{}r@{\quad~~}r@{\quad~~}r@{}}
	\bigstep  
	  { \eval[n]   {R_A}                     N  {U_A} k \quad
	    \eval[m]   {S_A\msep a(\term N)}     M  {T_A} i }
	  { \eval[n+m] {S_A}               {[N]a.M} {T_A} i }
&
	\bigstep
	  { \eval[m]   {R_A}  M       {S_A} i \quad
	    \eval[n]   {S_A}  N       {T_A} k }
	  { \eval[m+n] {R_A} {M;i->N} {T_A} k }
\\ \\[-10pt]
	\bigstep
	  { \eval[n]   {S_A}               {\{N/x\}M}  {T_A} i }
	  { \eval[n+1] {S_A\msep a(\term N)} {a<x>.M}  {T_A} i }
&
	\bigstep
	  { \eval[m]   {R_A}  M       {S_A} j \quad
	    \eval[n]   {T_A}  N       {U_A} k }
	  { \eval[m+n] {R_A} {M;i->N} {S_A} j }
\end{array}
\]
\end{definition}

The reducibility sets $\EVL-$ interpret types as a guarantee that a measured evaluation exists.

\begin{definition}
The set $\EVL{t}$ for a type $\type t$ is defined as the set of closed terms
\[
\EVL{!!s => !!tI} = \{ \term M ~\mid~ \forall S_A\in\EVL{!!s}.~\exists m\in\mathbb N.~\exists i\in I.~\exists T_A\in\EVL{!!t_i}.~\evl[m]{S_A}M{T_A}i~\}
\]
where for memory types $\type{!!t}$ and stack types $\type{!t}$:
\[
\begin{aligned}
    \EVL{\{!t_a\mid a\in A\}} &= \{ S_A \mid \forall a\in A.~S_a\in\EVL{!t_a} \}
&&& \EVL{t_1..t_n}            &= \{ \e\,\term{M_1}\dots \term{M_n} \mid \term{M_i}\in\EVL{t_i} \}
\end{aligned}
\]
For contexts, $\SEQ{G}$ is a set of substitution maps $\subs$:
\[
	\EVS{x_1:t_1,,x_n:t_n} = \{\subs\mid \term{\subs(x_i)}\in\EVL{t_i}\}
\]
\end{definition}

Evaluation sets are inhabited at least by zero-terms.

\begin{lemma}
\label{lem:zero-eval}
A zero-term $\term{0_{\typ t}}$ is in $\EVL{t}$.
\end{lemma}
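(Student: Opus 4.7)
The plan is to prove the lemma by induction on the size of the type $\type t=\type{!!s => !!tI}$, with $I$ non-empty by the section's standing assumption. Recall that $\term{0_{\typ t}}=\term{<\__{\typ{!!s}}>.[0_{\typ{!!t_i}}].i}$ for the $i\in I$ fixed in its construction. Given an arbitrary $S_A\in\EVL{!!s}$, the goal is to produce $m\in\mathbb{N}$ and $T_A\in\EVL{!!t_i}$ with $\evl[m]{S_A}{0_{\typ t}}{T_A}i$, which will certify $\term{0_{\typ t}}\in\EVL{!!s=>!!tI}$ by definition.

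First I would handle the prefix of pops $\term{a_1<\_>\dots a_n<\_>}$: since $S_A\in\EVL{!!s}$ has the shape dictated by $\type{!!s}=\type{a_1(s_1)..a_n(s_n)}$, the pop rule of measured evaluation strips one component and adds one to the measure per step, so after $n$ applications the input memory has been reduced to the empty memory $\e$. Next I would build the output by pushing $\term{[0_{\typ{!!t_i}}]}$ onto $\e$, which by repeated use of the push rule assembles precisely the memory of component zero-terms $\term{0_{\typ{t_k}}}$; by the inductive hypothesis applied to each component type, $\term{0_{\typ{t_k}}}\in\EVL{t_k}$, and so the assembled memory $T_A$ lies in $\EVL{!!t_i}$ by definition of $\EVL$ on memory- and stack-types. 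The closing choice step $\evl[0]{T_A}i{T_A}i$ completes the derivation.

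The main obstacle is the side condition in the push rule of measured evaluation: to evaluate $\term{[N]a.M}$ one must also exhibit a separate measured evaluation of the pushed subterm $\term N$ on some fresh input of its own input type. For each pushed $\term{0_{\typ{t_k}}}$, writing $\type{t_k}=\type{!!r_k => \cdots}$, I would supply this by a second appeal to the inductive hypothesis: applied componentwise to $\type{!!r_k}$ (whose constituents are strictly smaller than $\type t$) it yields a memory of zero-terms in $\EVL{!!r_k}$, and applied to $\type{t_k}$ itself it unfolds via the definition of $\EVL{!!r_k=>\cdots}$ to a concrete measured evaluation of $\term{0_{\typ{t_k}}}$ on that memory. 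Threading these side derivations through each push, and concatenating them with the pop prefix and the final choice step, yields the required $\evl[m]{S_A}{0_{\typ t}}{T_A}i$ with $T_A\in\EVL{!!t_i}$, completing the inductive step.
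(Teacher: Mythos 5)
Your proof is correct and follows essentially the same route as the paper's: induction on the type, stripping the non-binding pops against the input memory, assembling the output memory of zero-terms via the push rule, and discharging the side premise of the push rule (a measured evaluation of each pushed zero-term) by appealing to the inductive hypothesis on the strictly smaller component types to obtain zero-memories as inputs. The only cosmetic difference is that the paper strengthens the induction statement to all memories of the right \emph{dimensions} rather than quantifying over $\EVL{!!s}$, which changes nothing essential since the pops are non-binding and only the shape of the input matters.
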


\begin{proof}
For a type $\type s$, let $\type{\src(s)}$ and $\type{\tgt(s)}$ be its source (memory) type and target (sum) type, so that $\type s=\type{\src(s)=>\tgt(s)}$. By induction on $\type t$ it will be shown for all $S_A$ of the dimensions of $\type{\src(t)}$ that $\eval[n]{S_A}{0_{\typ t}}{0_{\type{\tgt(t)_i}}}i$ for some $n$ and $i$. This immediately implies $\term{0_{\typ t}}\in\EVL{t}$, since any $S_A\in\EVL{\src(t)}$ will be of the right dimensions, and inductively $0_{\type{\tgt(t)_i}}\in\EVL{\tgt(t)_i}$ for the return memory. Let $\term{0_{\typ t}}$ be as follows.
\[
	\term{0_{\type t}} = \term{<\_\kern1pt_{\typ{\src(t)}}>.[0_{\typ{\tgt(t)_i}}].i}
\]
By induction, for every $\type{s}$ in $\type{\tgt(t)_i}$ there is the following evaluation for some $j_\type{s}$ and $m_{\type s}$, since $0_{\type{\src(s)}}$ is of the required dimensions.
\[
	\eval[m_{\type s}]{0_{\type{\src(s)}}}{0_{\typ s}}{0_{\type{\tgt(s)_j}}}{j_{\typ s}}
\]
Then for any memory $S_A$ of the dimensions of $\type{\src(t)}$ there is the following evaluation, where the double lines indicate multiple evaluation rules, $m$ is the sum over the measures $m_{\type s}$ for each $\type{s}$ in $\type{\tgt(t)_i}$, and $n$ is the total size of $\type{\src(t)}$.
\[
\begin{array}{@{}r@{}}
	  \{\eval[m_{\type s}]{0_{\type{\src(s)}}}{0_{\typ s}}{0_{\type{\tgt(s)_j}}}{j_{\typ s}}\}_{\type{s}\in\type{\tgt(t)_i}} \quad
	  \overline{\eval[0]{0_{\type{\tgt(t)_i}}}i{0_{\type{\tgt(t)_i}}}i}
\\ \dline
\begin{array}{@{}r@{}}
	  \eval[m]{\e}{[0_{\typ{\tgt(t)_i}}].i}{0_{\type{\tgt(t)_i}}}i
\\ \dline
	\eval[n+m]{S_A}{<\_\kern1pt_{\typ{\src(t)}}>.[0_{\typ{\tgt(t)_i}}].i}{0_{\type{\tgt(t)_i}}}i
\end{array}
\end{array}
\]
\end{proof}

The main reducibility lemma then shows that types guarantee measured evaluation.

\begin{lemma}
\label{lem:type-eval}
If $\term{G |- M:t}$ and $\subs\in\EVS{G}$ then $\term{\subs M}\in\EVL{t}$.
\end{lemma}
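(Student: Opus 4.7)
The plan is to proceed by induction on the typing derivation for $\term{G |- M:t}$, in direct parallel to Lemma \ref{lem:runnable}, replacing the unmeasured evaluation relation $(\evalarrow)$ by measured evaluation $(\evalarrow^n)$ throughout. The reducibility sets $\EVL{-}$ have the same shape as $\RUN{-}$, so each case dispatches on the same typing rule as before; the only new ingredient is that wherever a subterm is mentioned, the induction must supply a numerical measure for some evaluation of it.

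The cases for variable, pop, choice, expansion, and inclusion carry over verbatim from Lemma \ref{lem:runnable}: the measured evaluation rules for these constructs impose no new obligations beyond what the corresponding rule for $(\evalarrow)$ already demanded, so the same argument threads a measure through without issue.

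The interesting cases are push $\term{[N]a.M}$ and case $\term{M;i->N}$, where measured evaluation requires a measure for subterms that ordinary evaluation might never reach. For push, the measured rule adds the measure of an independent evaluation of $\term N$ to that of $\term M$. I would obtain it by applying the IH to $\term N$, giving $\term{\subs N}\in\EVL{r}$ where $\type r$ is the type of $\term N$, and then feeding $\term{\subs N}$ a zero-term input memory $0_{\type{\src(r)}}$ built pointwise from zero-terms of the appropriate types; Lemma \ref{lem:zero-eval} guarantees each component lies in $\EVL{\src(r)}$. Summing the resulting measure with that of $\term{\subs M}$ run on $S_A\msep a(\term{\subs N})$ gives the required measured evaluation. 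For case $\term{M;i->N}$, the branch in which $\term{\subs M}$ exits with some $j\neq i$ never reaches $\term{\subs N}$, yet the measured rule still demands a measure for it; I would produce one by applying the IH to $\term N$ on a zero-term input memory of its source type, and adding that measure to the one coming from $\term{\subs M}$.

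The main obstacle is administrative rather than conceptual: threading the zero-term memories correctly through each subcase and verifying that each one really lies in the appropriate $\EVL{-}$. The loop-freeness and non-emptiness restrictions stated at the start of the section ensure that Lemma \ref{lem:zero-eval} always applies, so no genuine logical difficulty arises; as anticipated in the preamble to the section, type inhabitation is precisely what allows the combinatorial measure to be extracted by reducibility alone, without the more intricate domain-theoretic arguments used in previous iterations of the FMC.
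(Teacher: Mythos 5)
Your proposal is correct and follows exactly the paper's intended route: induction on the typing derivation mirroring Lemma~\ref{lem:runnable}, with the only new work in the \emph{push} and \emph{case} cases, where the measured rules force an evaluation of a subterm that ordinary evaluation may discard, and where you rightly supply the needed input memory by zero-terms via Lemma~\ref{lem:zero-eval} (i.e.\ type inhabitation), precisely as the section's preamble anticipates. No gaps.
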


\begin{proof}
By induction on the typing derivation for $\term M$. The proof is similar to that of Lemma~\ref{lem:runnable} that typed terms are runnable.
\end{proof}

Reduction $(\rw)$ reduces the measure given by measured evaluation, by the following lemma.

\begin{lemma}
\label{lem:eval-reduce}
If $\evl[n]{S_A}{M}{T_A}i$ and $S_A\rws S'_A$ and $\term M\rws\term{M'}$, then $\evl[n']{S'_A}{M'}{T'_A}i$ where $T_A\rws T'_A$ and $n\geq n'$. If moreover $\term M\rws\term{M'}$ contains beta-steps, then $n>n'$.
\end{lemma}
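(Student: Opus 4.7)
The plan is to argue by induction on the combined length of the reductions $S_A \rws S_A'$ and $\term M \rws \term{M'}$, reducing to a single reduction step, and then by induction on the derivation of the measured evaluation $\evl[n]{S_A}M{T_A}i$. The outer induction transmits strictness cleanly: each single step either strictly decreases $n$ (when it is beta) or weakly decreases it (when it is affine), so a chain containing any term-level beta yields an overall strict decrease.

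For a single step occurring inside a proper subterm of $\term M$, the inductive hypothesis applied to the relevant sub-evaluation yields a reduced sub-evaluation at equal or strictly smaller measure, which can be reassembled into an evaluation of $\term{M'}$ at the corresponding measure. Every syntactic subterm is evaluated at least once by the measured rules---including the argument of a push and the mismatched branch of a case, thanks to the discard rules---so the hypothesis covers every position where a reduction might occur. For a reduction at the root of $\term M$, I case-split on the rule. The \emph{beta} rule $\term{[N]a.a<x>.M''} \rw \term{\{N/x\}M''}$ gives a strict decrease: the original derivation has measure $p + m' + 1$, where $p$ is the cost of an evaluation of $\term N$, the $+1$ is the pop-transition in $\term{a<x>.M''}$, and $m'$ is the cost of evaluating $\term{\{N/x\}M''}$ on $S_A$, while the reduct reuses precisely this last sub-evaluation at measure $m' < p + m' + 1$. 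Each affine rule is verified by a local reorganisation of sub-derivations: \emph{passage}, both \emph{prefix} rules, \emph{associate}, and \emph{select} preserve the measure exactly, while \emph{reject} weakly decreases it by dropping a branch's evaluation cost. Memory reduction $S_A \rws S_A'$ is handled analogously: when a reduced memory term is popped, it appears under a substitution in the continuation, and the inductive hypothesis on the ensuing sub-evaluation supplies the reduced witness at equal or smaller measure; the output $T_A$ reduces along with any internal reductions carried through the evaluation, yielding $T_A \rws T_A'$.

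The principal technical difficulty is the case analysis for affine rules whose side conditions on variable freshness must be used to commute substitution with rewriting. In \emph{passage} $\term{[N]b.a<x>.M''} \rw \term{a<x>.[N]b.M''}$, for instance, the original evaluation pops some $\term P$ from the $a$-stack of $S_A \msep b(\term N)$ and continues with $\term{\{P/x\}M''}$, whereas the reduct pops the same $\term P$ from $S_A$ and continues with $\term{[N]b.\{P/x\}M''}$; the two measures agree precisely because $x \notin \fv(\term N)$ lets substitution distribute correctly past the push. An analogous argument covers \emph{prefix (pop)}. The remaining delicate point is to confirm that only \emph{beta} removes a pop-transition from the evaluation, so that strictness in the conclusion follows from any term-level beta step in the reduction chain; memory reduction does not contribute such a step, matching the asymmetry in the lemma statement.
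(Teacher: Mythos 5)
Your proposal matches the paper's proof in all of its substantive content: the case split on the rules of measured evaluation crossed with the possible reduction steps, the observation that \emph{beta} strictly drops the measure by discarding the pop-transition and the argument's own evaluation (from $p+m'+1$ down to $m'$), the treatment of each affine rule as a measure-preserving reorganisation of sub-derivations with \emph{reject} only weakly decreasing, and the crucial role of the discard premises of measured evaluation in making every subterm position reachable by the induction. All of this agrees with the paper.

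The one genuine problem is the induction order. You take the combined reduction length as the outer induction, reduce to a single step, and only then induct on the evaluation derivation. This breaks down in precisely the substitution cases you gesture at: a single reduction step inside a term $\term N$ residing in the memory (or in the argument of a push, which is also placed on the stack) becomes, once the pop rule substitutes $\term N$ for $x$, a reduction $\term{\{N/x\}M}\rws\term{\{N'/x\}M}$ whose length is the number of occurrences of $x$ --- possibly zero, possibly many. A single-step lemma proved by induction on the derivation therefore cannot invoke itself on that sub-evaluation; it needs the full multi-step statement there. The paper resolves this by making the \emph{size of the evaluation derivation} the outer (primary) induction and the reduction length the inner one, so that the multi-step hypothesis is available for the strictly smaller sub-derivation of $\term{\{N/x\}M}$ (and likewise after a root \emph{beta} step, where the residual reduction chain is arbitrary but the derivation has shrunk). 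With that reordering of the induction measure, your argument goes through essentially as written.
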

%

\begin{proof}
By an outer induction on the size of the derivation for $\evalarrow_n$, and an inner induction on the length of the reduction $\term{M}\rws\term{M'}$.

\begin{itemize}
	\item\emph{Choice:}
\[
\begin{array}{@{}c@{}}\\\hline\eval[0]{S_A} i {S_A} i\end{array}
\]
For $S_A\rws S'_A$ it is immediate that $\eval[0]{S'_A} i {S'_A} i$.
	
	\item\emph{Pop:}
\[
	\bigstep
	  { \eval[n]   {S_A}               {\{N/x\}M}  {T_A} i }
	  { \eval[n+1] {S_A\msep a(\term N)} {a<x>.M}  {T_A} i }
\]
Let $S_A\rws S'A$ and $\term N\rws\term{N'}$, and note that the reduction on $\term{a<x>.M}$ must be of the form $\term{a<x>.M}\rws\term{a<x>.M'}$. Since $\term{\{N/x\}M}\rws\term{\{N'/x\}M'}$ the outer inductive hypothesis gives the required $T'_A$ and $n'$.

	\item\emph{Push:}
\[
	\bigstep  
	  { \eval[n]   {R_A}                     N  {U_A} k \quad
	    \eval[m]   {S_A\msep a(\term N)}     M  {T_A} i }
	  { \eval[n+m] {S_A}               {[N]a.M} {T_A} i }
\]
There are three cases, depending on the first reduction step on $\term{[N]a.M}$: 1) reduction in either subterm $\term M$ or $\term N$, 2) a top level beta-step, or 3) a top level passage step. For case 1), given $S_A\rws S'_A$ and $\term{[N]a.M}\rw\term{[N']a.M'}$ (where $\term{N}=\term{N'}$ or $\term{M}=\term{M'}$), the outer inductive hypothesis for $\term M\rw\term{M'}$ gives the required $T'_A$ and $m'$, or that for $\term N\rw\term{N'}$ gives $n'$. For the remaining reduction from $\term{[N']a.M'}$ the statement follows by the inner inductive hypothesis.

In case 2) $\term M=\term{a<x>.P}$ with a beta-step $\term{[N]a.a<x>.P}\rw\term{\{N/x\}P}$. Evaluation is as follows.
\[
	\bigstep  
	  { \eval[n] {R_A} N  {U_A} k \quad
	    \bigstep{ \eval[m]   {S_A}                {\{N/x\}P} {T_A} i }
	            { \eval[m+1] {S_A\msep a(\term N)}  {a<x>.P} {T_A} i } }
	  { \eval[n+m+1] {S_A} {[N]a.a<x>.P} {T_A} i }
\]
Given $S_A\rws S'_A$ and the remaining reduction $\term{\{N/x\}P}\rws\term{M'}$, the (outer) inductive hypothesis gives the the evaluation $\evl[m']{S'_A}{M'}{T'_A}i$ where $T_A\rws T'_A$ and $m\geq m'$, as required.

In case 3) $\term M=\term{b<x>.P}$ with a passage step $\term{[N]a.b<x>.P}\rw\term{b<x>.[N]a.P}$ where $a\neq b$ and $x\notin\fv(\term N)$. Evaluation for the redex is below left, and for the reduct below right, with a derivation of the same size. The inner inductive hypothesis then gives the required $T'_A$, $m'$, and $n'$.
\[
	\bigstep  
	  { \eval[n] {R_A} N  {U_A} k \quad
	    \bigstep{ \eval[m]   {S_A\msep a(\term N)}                {\{Q/x\}P} {T_A} i }
	            { \eval[m+1] {S_A\msep b(\term Q)\msep a(\term N)}  {b<x>.P} {T_A} i } }
	  { \eval[n+m+1] {S_A\msep b(\term Q)} {[N]a.b<x>.P} {T_A} i }
\qquad	  
	  \bigstep  
	  { \eval[n] {R_A} N  {U_A} k \quad
	    \eval[m]       {S_A\msep a(\term N)}    {\{Q/x\}P} {T_A} i }
	  { \bigstep
	    { \eval[n+m]   {S_A}               {[N]a.\{Q/x\}P} {T_A} i }
	    { \eval[n+m+1] {S_A\msep b(\term Q)} {b<x>.[N]a.P} {T_A} i }
	  }
\]

	\item \emph{Case:}
\[
a)\quad
	\vc{
	\bigstep
	  { \eval[m]   {R_A}  M       {S_A} i \quad
	    \eval[n]   {S_A}  N       {T_A} k }
	  { \eval[m+n] {R_A} {M;i->N} {T_A} k }
	}
\qquad\qquad
b)\quad
	\vc{
	\bigstep
	  { \eval[m]   {R_A}  M       {S_A} j \quad
	    \eval[n]   {T_A}  N       {U_A} k }
	  { \eval[m+n] {R_A} {M;i->N} {S_A} j }
	}
\]
For each of the derivations $a)$ and $b)$ above there are five sub-cases, depending on the first reduction step: one for reduction inside $\term M$ or $\term N$, and four for a top-level reduction step, as follows.
\[
\begin{array}{l@{\qquad}r@{}l@{\qquad}l}
   1) & \term{M;i->N} &~\rw~\term{M';i->N'} & \text{where}~\term{M'}=\term{M}~\text{or}~\term{N'}=\term{N}
\\ 2) & \term{i;i->N} &~\rw~\term{N} & \text{i.e.}~\term{M}=\term{i},~\text{or}
\\    & \term{j;i->N} &~\rw~\term{j} & \text{i.e.}~\term{M}=\term{j}~\text{where}~i\neq j
\\ 3) & \term{(Q;i->P);i->N} &~\rw~\term{Q;i->(P;i->N)} & \text{i.e.}~\term{M}=\term{Q;i->P}
\\ 4) & \term{(a<x>.P);i->N} &~\rw~\term{a<x>.(P;i->N)} & \text{i.e.}~\term{M}=\term{a<x>.P}~\text{where}~x\notin\fv(\term N)
\\ 5) & \term{([Q]a.P);i->N} &~\rw~\term{[Q]a.(P;i->N)} & \text{i.e.}~\term{M}=\term{[Q]a.P}
\end{array}
\]

Case $1)$ follows by the outer inductive hypothesis on $\term{M'}$ or $\term{N'}$ and inner inductive hypothesis on the remaining reduction, similar to the first case for \emph{Push}. For case $2)$ evaluation is of the following forms respectively; $2a)$ follows by the outer inductive hypothesis on $\term{N}$, and $2b)$ is immediate from the evaluation $\evl[0]{R_A}j{R_A}j$.
\[
2a)\quad
	\vc{
	\bigstep
	  { \eval[0] {S_A}  i       {S_A} i \quad
	    \eval[n] {S_A}  N       {T_A} k }
	  { \eval[n] {R_A} {i;i->N} {T_A} k }
	}
\qquad\qquad
2b)\quad
	\vc{
	\bigstep
	  { \eval[0] {R_A}  j       {R_A} j \quad
	    \eval[n] {T_A}  N       {U_A} k }
	  { \eval[n] {R_A} {j;i->N} {R_A} j }
	}
\]
Cases $3)$ through $5)$ follow similarly to the third \emph{Push} case above, for passage reduction, by reconfiguring the evaluation derivation and observing that its size is preserved.
\end{itemize}
\end{proof}

The previous two lemmata then give typed strong normalization: the first shows that for a typed term a measured evaluation exists; the second that this measure bounds the number of beta-steps in the reduction of the term.

\begin{theorem}[Typed strong normalization]
If $\term{G |- M:t}$ then $\term M$ is strongly normalizing.
\end{theorem}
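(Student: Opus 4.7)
The plan is to combine Lemmas~\ref{lem:type-eval} and~\ref{lem:eval-reduce} with termination of affine reduction (Lemma~\ref{lem:aff}) through a straightforward counting argument, thereby separating the ``combinatorial'' content (beta-steps strictly shorten a measure) from the ``logical'' content (typed terms admit a measured evaluation at all).

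First I would close the term. For $\term{G |- M:t}$, instantiate each $\term{x_k:t_k}$ in $\Gamma$ by a zero-term $\term{0_{\typ{t_k}}}$; Lemma~\ref{lem:zero-eval} guarantees that the resulting substitution map $\subs$ lies in $\EVS{G}$, so by Lemma~\ref{lem:type-eval} we have $\term{\subs M}\in\EVL{t}$. Picking any input memory $S_A\in\EVL{\src(t)}$ (inhabited, component-wise, by zero-memories) yields a measured evaluation $\evl[n]{S_A}{\subs M}{T_A}{i}$ for some $n\in\mathbb{N}$, some choice $i$, and some output $T_A$.

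Next I would argue by contradiction. Suppose $\term M$ admits an infinite reduction $\term M=\term{M_0}\rw\term{M_1}\rw\cdots$. Applying $\subs$ pointwise lifts this to an infinite reduction $\term{\subs M_0}\rw\term{\subs M_1}\rw\cdots$ on closed terms; since $\subs$ only substitutes closed zero-terms for free variables, each step preserves the rule applied, in particular its classification as beta or affine. Iterating Lemma~\ref{lem:eval-reduce} along this chain, keeping the input memory fixed at each step (permitted since the lemma allows the reflexive reduction $S_A\rws S_A$), yields a sequence of measured evaluations $\evl[n_k]{S_A}{\subs M_k}{T^k_A}{i}$ with $n_0=n$, satisfying $n_k\geq n_{k+1}$ throughout and $n_k>n_{k+1}$ whenever the $k$-th step is a beta-step. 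Since $(n_k)_{k\in\mathbb{N}}$ is a sequence of non-negative integers, only finitely many beta-steps can occur; some suffix of the reduction therefore consists entirely of affine steps, contradicting termination of affine reduction (Lemma~\ref{lem:aff}).

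The main obstacle will be the small bookkeeping around lifting the original reduction through $\subs$ and iterating Lemma~\ref{lem:eval-reduce} cleanly. Both points are routine: substitution by closed zero-terms neither creates nor destroys redexes (so each step lifts one-for-one with the same classification), and Lemma~\ref{lem:eval-reduce} is already formulated to permit iteration with the input fixed. No deeper difficulty arises, because the heavy lifting, existence of an evaluation for typed terms and the combinatorial claim that beta-steps shorten evaluation, is already isolated in the preceding two lemmas.
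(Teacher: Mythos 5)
Your proposal is correct and follows essentially the same route as the paper: close the term with zero-term substitutions (Lemmas~\ref{lem:zero-eval} and~\ref{lem:type-eval}), obtain a measured evaluation, and use Lemma~\ref{lem:eval-reduce} plus termination of affine reduction (Lemma~\ref{lem:aff}) to bound the number of beta-steps. The only difference is that you spell out the lifting of the reduction sequence through the substitution, a step the paper leaves implicit.
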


\begin{proof}
Let $\subs$ be a substitution map taking every $\term{x:s}$ in $\term{G}$ to a zero-term $\term{0_{\type s}}$. Then $\subs\in\EVS{G}$ by Lemma~\ref{lem:zero-eval} and $\term{\subs M}\in\EVL{t}$ by Lemma~\ref{lem:type-eval}. Let $\type{t}=\type{!!r=>!!sI}$. Lemma~\ref{lem:zero-eval} gives $0_{\type{!!r}}\in\EVL{!!r}$, and by the definition of $\EVL-$ there is an evaluation $\evl[m]{0_\type{!!r}}{\subs M}{S_A}i$. By Lemma~\ref{lem:eval-reduce} reduction on $\term M$ preserves the measure $m$ and reduces it in case of beta-steps. Since by Lemma~\ref{lem:aff} affine reduction is strongly normalizing, an infinite reduction from $\term M$ would have infinitely many beta-steps, a contradiction.
\end{proof}




\end{document}